\newcommand{\F}{\mathbb{F}}
\newcommand{\N}{\mathbb{N}}
\renewcommand{\P}{\mathbb{P}}
\newcommand{\Q}{\mathbb{Q}}
\newcommand{\Z}{\mathbb{Z}}
\newcommand{\bfR}{\boldsymbol{\mathfrak{R}}}
\DeclareMathOperator{\CRT}{CRT}
\newcommand{\SRN}{\mathsf{SRN}}
\newcommand{\RN}{\mathsf{RN}}
\newcommand{\bE}{\boldsymbol{E}}
\newcommand{\bR}{\boldsymbol{R}}
\newcommand{\bC}{\boldsymbol{C}}
\newcommand{\bPsi}{\boldsymbol{\psi}}
\renewcommand{\bf}{\boldsymbol{f}}
\newcommand{\br}{\boldsymbol{r}}
\newcommand{\be}{\boldsymbol{e}}
\newcommand{\beps}{\Bold{\epsilon}}
\newcommand{\cB}{\mathcal{B}}
\newcommand{\C}{\mathcal{C}}
\newcommand{\D}{\mathcal{D}}
\newcommand{\cF}{\mathcal{F}}
\renewcommand{\L}{\mathcal{L}}
\newcommand{\cP}{\mathcal{P}}
\newcommand{\cS}{\mathcal{S}}
\newcommand{\ERR}{\mathcal{E}^1_\Lambda}
\newcommand{\ERRd}{\mathcal{E}^2_{\Lambda_m}}
\newcommand{\HYB}{\mathcal{H}^1_{\Li\Lu,\beps}}
\newcommand{\HYBd}{\mathcal{H}^2_{\Lmi\Lu,\beps}}
\newcommand{\HYBp}{\cB_{\Le\Lv,\bfR}^{1}}
\newcommand{\HYBpd}{\cB_{\Lme\Lv,\bfR}^{2}}
\newcommand{\SR}{S_{\bR}}
\newcommand{\SRd}{S_{\bR,2^d}}
\newcommand{\SRdb}{S_{\bR,2^d\beta}}
\newcommand{\SE}{S_{\bE}}
\newcommand{\SEh}{S_{\bE}^{h}}
\newcommand{\SEinf}{S_{\bE}^{\infty}}
\newcommand{\dm}{\bar{d}}
\newcommand{\dmi}{\bar{d}_{i}}
\newcommand{\dme}{\bar{d}_{e}}
\newcommand{\SL}{\mathbb{S}_{\Bold{\lambda}}^{\ell}}
\newcommand{\Bold}[1]{\boldsymbol{#1}}
\newcommand{\fail}{f\!ail}
\DeclareMathOperator{\cremop}{crem}
\newcommand{\crem}{\, \cremop \, }
\DeclareMathOperator{\LLL}{LLL}
\DeclareMathOperator{\Val}{val}
\DeclareMathOperator{\vr}{v}
\DeclareMathOperator{\Ev}{Ev}
\newcommand{\Evi}{\Ev^{\infty}}
\newcommand{\dist}{d}
\newcommand{\du}{\dist_u}
\newcommand{\romain}[1]{{\color{purple}#1}}
\newcommand{\commentaire}[1]{}
\theoremstyle{plain}
\newtheorem{theorem}{Theorem}[section]
\newtheorem{prop}[theorem]{Proposition}
\newtheorem{lemma}[theorem]{Lemma}
\theoremstyle{definition}
\newtheorem{exam}[theorem]{Example}
\newtheorem{defi}[theorem]{Definition}
\newtheorem{prob}[theorem]{Problem}
\newtheorem{constraint}[theorem]{Constraint}
\theoremstyle{remark}
\newtheorem{remark}[theorem]{Remark}
\newcommand{\hdu}{d_u} 
\newcommand{\hdi}{d_i} 
\newcommand{\hde}{d_e} 
\newcommand{\hdv}{d_v} 
\newcommand{\Lu}{\Lambda_{u}}
\newcommand{\Li}{\Lambda_{i}}
\newcommand{\Lv}{\Lambda_{v}}
\newcommand{\Le}{\Lambda_{e}}
\newcommand{\Lmi}{\Lambda_{m,i}}
\newcommand{\Lme}{\Lambda_{m,e}}
\newcommand{\xii}{\xi_{i}}
\newcommand{\xiu}{\xi_{u}}
\newcommand{\ie}{\textit{i.e.\ }}
\newcommand{\wrt}{\textit{w.r.t.\ }}
\newcommand{\lc}{\left\lceil}   
\newcommand{\rc}{\right\rceil}
\newcommand{\lf}{\left\lfloor}   
\newcommand{\rf}{\right\rfloor}
\journal{JSC - Journal of Symbolic Computation}
\def\ps@pprintTitle{%
	\let\@oddhead\@empty
	\let\@evenhead\@empty
	\def\@oddfoot{\centerline{\thepage}}%
	\let\@evenfoot\@oddfoot}
\begin{document}
	
	\begin{frontmatter}

		\title{Simultaneous Rational Number Codes: \\Decoding Beyond Half the Minimum Distance \\with Multiplicities and Bad Primes}

		\author{Matteo Abbondati,
			Eleonora Guerrini,
			Romain Lebreton}

		\affiliation{organization={LIRMM - University of Montpellier},
			addressline={161, Rue Ada}, 
			city={Montpellier},
			postcode={34095}, 
			country={FRANCE}}

		\begin{abstract}
			
			In this paper, we extend the work of~\cite{abbondati2024decoding} on decoding simultaneous rational number codes by addressing two important scenarios: multiplicities and the presence of bad primes (divisors of denominators). 
			First, we generalize previous results to multiplicity rational codes by
			considering modular reductions with respect to prime power moduli. Then,
			using hybrid analysis techniques, we extend our approach to vectors of
			fractions that may present bad primes. 
			
			Our contributions include: a decoding algorithm for simultaneous
			rational number reconstruction with multiplicities, a rigorous analysis
			of the algorithm's failure probability that generalizes several previous
			results, an extension to a hybrid model handling situations where not
			all errors can be assumed random, and a unified approach to handle bad
			primes within multiplicities. The theoretical results provide a
			comprehensive probabilistic analysis of reconstruction failure in these
			more complex scenarios, advancing the state of the art in error
			correction for rational number codes.

		\end{abstract}

	\end{frontmatter}

	\section{Introduction}
	
	An efficient approach to solving linear systems in distributed computation involves reconstructing  
	a vector of fractions $\left(\frac{f_1}{g},\ldots,\frac{f_\ell}{g}\right)$, all sharing the same denominator,  
	from its modular reductions with respect to $n$ pairwise coprime elements. In this framework, a network  
	is structured around a central node, which selects a sequence of relatively prime elements  
	$\left(m_j\right)_{1 \le j \le n}$ and delegates the system solving process to the network.  
	Each node $j$ computes the solution modulo $m_j$ and transmits the reduced solution vector  
	$\left(f_i/g\ \bmod m_j\right)_{1 \le i \le \ell}$ back to the central node. The central node then  
	reconstructs the original vector through an interpolation step, formulated as a simultaneous  
	rational reconstruction problem. In the case of polynomial systems, this approach is known  
	as evaluation-interpolation~\cite{kaltofen2017early,guerrini2019polynomial}, whereas for integer  
	systems, it corresponds to modular reduction followed by reconstruction via the Chinese Remainder  
	Theorem~\cite{cabay1971exact}. In this paper, we focus on the latter case.
	
	\paragraph{Context of this paper}
	During data reconstruction, the central node may receive incorrect reductions due to computational errors,  
	faulty or untrusted nodes, or network noise. For that reason, it is of great help to look at decoding algorithms in error correcting codes. 
	Viewing the modular
	reductions as coordinates of an error correcting code  enables us to reconstruct the correct solution as long as the number
	of erroneous reductions is below a certain value, corresponding to the unique decoding radius of a
	code. In presence of more errors, there exist two
	possible approaches in coding theory to correct beyond the unique decoding radius of the code;
	either decoding algorithms which return a list of all codewords within a certain distance of the received word (list decoding) or, by interleaving techniques, obtain positive decoding results under probabilistic assumptions on random errors corrupting $\ell$ code-words on the same positions. 
	In this paper, we focus on interleaving techniques as they fit in the simultaneous reconstruction problem. 
	Note that, a decoding algorithm working under this latter approach must inevitably fail for some instances, as
	beyond unique decoding radius there can be many codewords around a given instance. Here the failure
	probability is intended as      the proportion of received words, within a       given distance from
	the codeword $\bf/g$, for which the reconstruction fails. 
	
	In this work we consider the simultaneous rational reconstruction problem with $m_j = p_j^{\lambda_j}$ for a sequence of distinct prime numbers $p_1,\ldots,p_n$ and relative multiplicities $\lambda_1,\ldots,\lambda_n>0$. 
	One advantage of considering reductions with multiplicities is that
	solving a linear system modulo $p^\lambda$ is asymptotically faster than solving
	it modulo $p_1,\dots,p_\lambda$ (see \cite{moenck_approximate_1979,dixon_exact_1982,storjohann_shifted_2005} or \cite[Chapter 3]{lebreton_contributions_2012} for a survey).
	
	The prime numbers for which the modular reductions are not defined (divisors of the denominator $g$) are referred to as \textit{bad primes}.
	To the best of our knowledge, this work represents the first study of rational number codes in a context with multiplicities and bad primes.
	
	Taking inspiration from \cite{khonji2010output}, we could define the rational number code in terms of modular reductions to a generic sequence of $n$
	coprime ideals (not necessarily of the form $(p_j^{\lambda_j})$). From a purely mathematical perspective (thanks
	to the Chinese Remainder theorem), the approach of~\cite{khonji2010output} where
	coordinates are defined via modular reductions relative to any sequence of $n$ coprime ideals, is
	equivalent to ours, where each coprime ideal is generated by the power of a prime element. The
	advantage of the approach proposed here is that it allows for greater specificity in both the
	coordinates and the description of the errors affecting them.
	
	In order to generalize our first results (Theorems~\ref{thm:main1} and~\ref{thm:main2}) on the decoding with multiplicities to a context including bad primes (Theorems~\ref{thm:main1bad primes} and~\ref{thm:main2bad primes}), we need to consider (as in~\cite{guerrini2023simultaneous} for the rational function version of the simultaneous rational reconstruction problem) a hybrid error model consisting of random evaluation errors and fixed valuation errors. 
	In Section~\ref{sec:Hybrid Decoding} we present this error model for more general sets of errors. We note that this more general version has been independently introduced in~\cite{brakensiek2025unique} for IRS codes, folded Reed-Solomon and multiplicity codes decoding, as the \textit{semi-adversarial error model}. 
	The analysis presented in this work is adapted to the polynomial system case, obtaining similar results for the simultaneous rational function reconstruction with errors, these results will be presented in a subsequent paper~\cite{abbondati2025rationalfunction}.
	Compared to said results coming from the adaptation of our analysis technique to the rational function case, in~\cite{brakensiek2025unique} the authors obtained a weaker failure probability bound (depending linearly instead of exponentially on the error size) but, for a given total amount of errors (random and fixed), the proportion of adversarial (fixed) errors they manage to correct, though asymptotically equivalent, is higher compared to our results.

	\paragraph{Previous results}
	The approach of this paper generalizes, and matches or even improves several
	previous results in different ways. In the  polynomial case, the codes
	used for the recovery of a vector of polynomials from partially erroneous
	evaluations are Interleaved Reed-Solomon codes (IRS), whose best known analysis
	of the decoding failure probability is provided
	in~\cite{schmidt2009collaborative} and then generalized to the rational function
	case in~\cite{guerrini2019polynomial}.
	
	The integer counterpart of IRS codes are to the so-called Interleaved Chinese remainder codes (ICR),
	for which a first heuristic analysis of the decoding failure probability  was provided
	in~\cite{li2013decoding} and made rigorous in~\cite{abbondati2023probabilistic}. 
	
	While there have been various studies on the rational function
	case~\cite{kaltofen2017early,zappatore2020simultaneous,guerrini2023simultaneous}, the rational
	number context had not been investigated until~\cite{abbondati2024decoding}.
	
	In any case the extensive literature addressing
	these problems both in the polynomial
	\cite{mcclellan1977exact,boyer2014numerical,guerrini2019polynomial,kaltofen2020hermite,guerrini2023simultaneous}
	and the integer \cite{cabay1971exact,lipson1971chinese,abbondati2023probabilistic} contexts rarely
	shows unified methods, and the techniques used are very specific to the case studied. In
	\cite{abbondati2024decoding} the authors analyzing both the rational functions and the rational
	numbers reconstruction problems (in absence of multiplicities and poles/bad primes), proved it is possible to recover the correct solution vector for almost all instances.
	
	\paragraph{Contributions of this paper}
	The main results presented are the following:
	\begin{itemize}
		\item A decoding approach to address the simultaneous rational number reconstruction with errors (Problem~\ref{SRNRwE_Problem}) including multiplicities, as well as the relative decoding algorithm (Algorithm~\ref{algoSRN}).
		\item A detailed analysis of the failure probability of the algorithm, that generalizes several previous results in \cite{abbondati2024decoding}: see Theorem~\ref{thm:main1} and Theorem~\ref{thm:main2}.
		\item The extension of the analysis to a hybrid model including random and non-random errors, addressing situations where not all errors can be assumed random: see Theorem~\ref{thm:main1hyb}, Theorem~\ref{thm:main2hyb}. 
		\item The merging of the hybrid model with our decoding approach, to handle bad primes within multiplicities, and relative decoding failure analysis: see Theorem~\ref{thm:main1bad primes} and Theorem~\ref{thm:main2bad primes}. 
	\end{itemize}
	
	Our methodology and our results can also be adapted to the rational function
	case. For the sake of readability, we have chosen to focus on the simultaneous
	reconstruction of rational numbers in this paper. However, it is worth reporting
	that adapting our results would improve upon the existing
	analysis of \cite{guerrini2023simultaneous} in the sense that the failure
	probability bound obtained decreases exponentially (not linearly) with respect
	to the decoding algorithm distance parameter, and the dependency  on the
	choice of the multiplicities can be removed (see multiplicity balancing in \cite[Theorem 3.4]{guerrini2023simultaneous}).

	\subsection{Notations and preliminary definitions}
	\label{sec:NotationDefinitions}
	We will denote vectors with bold letters $\bf,\br,\Bold{c},\ldots$.
	For $m\in\Z$ with $\Z/m\Z$ we will denote the quotient ring modulo the ideal $(m)$, while $[x]_m$
	will denote the modular element $x\bmod m\in\Z/m\Z$ and $\cP(m)$ will denote the set of primes
	dividing $m$.
	Given an indexed family of rings $\left\{A_j\right\}_{1 \le j \le n}$, we let $\prod_{j=1}^n A_j$ be their
	Cartesian product.
	
	Given a vector of modular reductions
	$\br\in\prod_{j=1}^n\Z/p_j^{\lambda_j}$ we use the corresponding capital letter $R$ denote its
	unique interpolant constructed via the Chinese remainder theorem modulo
	$N\coloneq\prod_{j=1}^np_j^{\lambda_j}$.
	
	We let $\Val_{p}:\Z \longrightarrow
	\N\cup\{\infty\}$ be the valuation function over $\Z$ with respect to the prime number $p$, whose output is the
	highest power of $p$ dividing the input, where we set by convention its
	value to be $\infty$ when the input is $0$.
	
	Dealing with a fixed sequence of precisions $\lambda_1,\ldots,\lambda_n$, we truncate the valuation
	function considering $\nu_{p_j}(m) \coloneq\min\{\Val_{p_j}(m),\lambda_j\}$, so that $\nu_{p_j}(a) = \nu_{p_j}(b)$ when $a = b \bmod p_j^{\lambda_j}$.

	When computing the valuation of a vector we set
	$\nu\left(\bf\right)\coloneq\min_i\{\nu\left(f_i\right)\}$. Given the sequence of
	multiplicities $\lambda_1,\ldots,\lambda_n>0$, we define the parameter
	$L\coloneq\sum_{j=1}^n\lambda_j$. For us all the vectors of fractions $\bf/g$ sharing the same
	denominator will always be reduced, i.e. they satisfy $\gcd\left(\gcd(\bf),g\right) = 1$.

	\paragraph{Simultaneous rational number reconstruction with errors (SRNRwE)}

	To quantify errors and to establish the correction capacity of the code we are
	going to use, we need a notion of distance between words. In a context with
	multiplicities where the coordinates are modular reductions relative to moduli
	specified by different precisions $\lambda_1,\ldots,\lambda_n$, it is classical
	to consider (see for example~\cite{kaltofen2020hermite,guerrini2023simultaneous}) a minimal error
	index distance in which each modular reduction is regarded as a truncated
	development, and the whole tail starting from the first error index in such
	development is considered erroneous. Furthermore, to take into account that each
	coordinate depends on a different prime number $p_j$, it is classical to use a
	weighted Hamming distance (see for example~\cite{abbondati2024decoding}), thus
	we are going to consider the following definition:
	
	\begin{defi}[Distance - Integer case]
		\label{def:IntDistance}
		Let \mbox{$\bR^1, \bR^2 \in (\prod_{j=1}^n\Z/p_j^{\lambda_j}\Z)^{\ell}$} be two
		\mbox{$\ell\times n$} matrices, where each column $\br_j^1,\br_j^2$ belongs to $\left(\Z/p_j^{\lambda_j}\right)^\ell$. We
		define their error support as  
		$\xi_{\bR^1,\bR^2} \coloneq \{ j: \br_j^1 \ne \br_j^2\}$ and their error
		locator as the product $\Lambda_{\bR^1,\bR^2}\coloneq\prod_{j \in \xi_{\bR^1, \bR^2}}
		p_j^{\lambda_j - \mu_j}$, where $\mu_j\coloneq\nu_{p_j}\left(\br_j^1  -
		\br_j^2\right)$ represents the minimal error index for the development around the
		prime $p_j$. The distance between $\bR^1$ and $\bR^2$ is defined as
		$d(\bR^1,\bR^2)\coloneq\log_2(\Lambda_{\bR^1,\bR^2})$.
	\end{defi}
	
	The problem of simultaneous rational number reconstruction with errors is then:
	
	\begin{prob}[SRNRwE]
		\label{SRNRwE_Problem}
		Given $\ell >0$, $n$ distinct primes $p_1<\ldots< p_n$ with associated multiplicities
		$\lambda_1,\ldots,\lambda_n$, a received matrix $\bR \in
		(\prod_{j=1}^n\Z/p_j^{\lambda_j}\Z)^{\ell}$, an error parameter $d$ and two bounds $F, G$ such
		that $FG<N/2$, find a reduced vector of fractions $\left(f_1/g,\ldots,
		f_\ell/g\right)\in\Q^{\ell}$ such that
		\begin{enumerate}
			\item $d\left(\bigl([f_i/g]_{p_j^{\lambda_j}}\bigr)_{i,j},\bR\right)\leq d$,
			\item for all $1 \leq i \leq \ell$, $|f_i|<F$, $0<g<G$ and $\gcd(g,N)=1$.
		\end{enumerate}
	\end{prob}
	In the above we have that $\gcd(g,N)=1$ so that the
	reductions $[f_i/g]_{p_j^{\lambda_j}}$ are well-defined. We are going to drop this hypothesis in
	Section~\ref{Sec:bad primes}, when solving a more general version of
	the SRNRwE problem, allowing for the presence of bad primes.

	This problem can be reduced to the simultaneous error correction of $\ell$ code words
	(sharing the same denominator) for the multiplicity version of rational number codes. Without multiplicities (i.e. when $N$ is square-free)
	this code is the natural rational extension of Chinese remainder codes \cite{goldreich1999chinese}, and can be referred to as rational number codes,
	extensively studied in~\cite{abbondati2024decoding}. It seems these rational codes were part of the
	folklore; to the best of our knowledge, they were formally introduced in the language of coding
	theory by Pernet in \cite[$\S$ 2.5.2]{pernet2014high}, whereas~\cite{bohm2015use}
	works with redundant residue number systems.

	The condition $FG<N/2$ guarantees an injective encoding, whose proof will be given in Proposition~\ref{prop:InjectiveEncodingbad primes}
	when introducing the multi-precision encoding (see Definition~\ref{def:MultiprecisionEncoding}) which is a
	generalization of our current encoding in presence of bad primes.
	
	A long series of papers can be found in the literature where
	evaluation-interpolation is used for linear systems solving, as
	\cite{mcclellan1977exact,villard1997study,monagan2004maximal,olesh2007vector,rosenkilde2016algorithms}.
	Our contributions in this paper concern error correction beyond guaranteed
	uniqueness. This means that the solution to the problem will not always be
	unique. In this rare case, our decoding algorithm returns a decoding failure.
	We analyze the probability of failure in detail.
	
	\medskip
	
	The paper is structured as follows: In Section~\ref{Sec:SRN} we introduce the
	simultaneous rational number codes whose decoding solves
	Problem~\ref{SRNRwE_Problem} as well as the corresponding decoding
	Algorithm~\ref{algoSRN}. We study the failure probability of our decoding
	algorithm for error parameters larger than the unique decoding radius of the
	code. We note that this analysis generalizes the results
	of~\cite{abbondati2024decoding} to the multiplicity case, it thus follows the
	same broad lines except for some technical details (see
	Lemma~\ref{lm:sumOverDivisorsINT}).
	
	In Section~\ref{sec:Hybrid Decoding}, we adapt our analysis technique to the hybrid distribution
	model of~\cite{guerrini2023simultaneous} in which not all errors are supposed to be random, but some
	of them are fixed, either because of specific error patterns introduced by malicious entities or
	because of specific faults of the network nodes.
	
	Then, in Section~\ref{Sec:bad primes}, by considering the multi-precision encoding
	of~\cite{guerrini2023simultaneous}, we apply the hybrid approach to generalize our analysis to the
	case of reductions with multiplicities and bad primes, \ie we drop the hypothesis
	$\gcd\left(g, N\right) = 1$.

	\section{Simultaneous multiplicity rational number codes}\label{Sec:SRN} We can define an error
	correcting code associated to Problem~\ref{SRNRwE_Problem}. Code words are the encoding of reduced
	vectors of rational numbers $\left(f_1/g,\ldots, f_\ell/g\right)$ sharing the same denominator and
	such that $0<g<G$, and $|f_i|<F$ for all $i=1,\ldots,\ell.$ 
	\begin{defi}\label{def:SRN} Given $n$ distinct primes $p_1,\ldots,p_n$ with relative multiplicities
		$\lambda_1,\ldots,\lambda_n$, two positive bounds $F, G$ such that $FG<N/2$ and an integer
		$\ell>0$, we define the \textit{simultaneous multiplicity rational number code} as the set of
		matrices 
		\begin{equation*}
			\SRN_{\ell}(N;F,G)\coloneq
			\left\{
			\begin{pmatrix}
				\left[\frac{f_i}{g}\right]_{p_j^{\lambda_j}}
			\end{pmatrix}_{\substack{1 \leq i \leq \ell \\ 1 \leq j \leq n}}
			:
			\begin{array}{c}
				|f_i|<F, \quad 0<g<G,\\
				\gcd(f_1,\ldots,f_\ell,g)=1\\
				\gcd(N,g)=1
			\end{array}
			\right\}.
		\end{equation*}
		
		We will refer to SRN codes for short if parameters are not relevant.
	\end{defi}

	Note that when $G=2$ and $\ell=1$, we obtain the
	interleaving of $\RN_{\ell}(N;F,G)$ codes and if  $\ell>1$, then $\RN_{\ell}(N;F,G)$  gives the rational codes with multiplicity as described in \cite[$\S$2.5.2]{pernet2014high}.  When dealing with rationals numbers,  the denomination \emph{simultaneous} comes from the rational function case, where it is related to 
	simultaneous rational function reconstruction, \ie the variant of Problem~\ref{SRNRwE_Problem}
	for rational functions without errors~\cite{olesh2007vector, rosenkilde2016algorithms, guerrini_uniqueness_2020}.
	
	In the next section, we will see that the common denominator property is necessary to be able to take advantage in the 
	key equations of the fact that the $\ell$ RN codewords share the same error supports. 

	The condition $\gcd(f_1,\ldots,f_\ell,g)=1$, which is going to be used in the proof of
	Lemma~\ref{BaseLemmaRN}, reflects that the solution vector we seek to reconstruct is a reduced
	vector of rational numbers.
	\begin{remark}
		A bounded distance decoding algorithm for the above code which is able to correct errors up to a
		distance $d$, can be used to solve Problem~\ref{SRNRwE_Problem} with error parameter $d$.
	\end{remark}

	\subsection{Minimal distance}
	
	The distance $d(\C)\coloneq\min_{c_1 \neq c_2\in \C} d(c_1,c_2)$ of a code $\C$ plays an important
	role in coding theory to assess the amount of data one can correct. A classic result states that one can correct up to half of the Minimal distance and that there is no guarantee on the decoding success beyond this quantity.
	\begin{theorem}\label{thm:minDisRN} Let $N, F, G$ as in Definition~\ref{def:SRN}. The distance of an
		RN code satisfies $d(\RN(N;F,G)) > \log\left( \frac{N}{2FG}  \right) $.
	\end{theorem}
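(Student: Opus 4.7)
The plan is to take two distinct codewords and directly bound their distance from below, exploiting that a nonzero integer bounded by $2FG$ must have bounded $p$-adic valuations at primes in $N$.

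First I would fix two distinct codewords $c^1 = ([f_1/g_1]_{p_j^{\lambda_j}})_j$ and $c^2 = ([f_2/g_2]_{p_j^{\lambda_j}})_j$ in $\RN(N;F,G)$, and introduce the cross-difference $D \coloneq f_1 g_2 - f_2 g_1 \in \Z$. Two quick observations: (i) since both fractions are reduced with positive denominator, the codewords being distinct forces $f_1/g_1 \neq f_2/g_2$ in $\Q$, hence $D \neq 0$; (ii) the bounds $|f_i|<F$ and $0<g_i<G$ give $|D| < 2FG$.

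Next I would compute $\mu_j = \nu_{p_j}(c^1_j - c^2_j)$. Because $\gcd(g_i,N)=1$, the product $g_1 g_2$ is a unit modulo $p_j^{\lambda_j}$, so
\begin{equation*}
    c^1_j - c^2_j = \bigl[D \cdot (g_1 g_2)^{-1}\bigr]_{p_j^{\lambda_j}},
\end{equation*}
and multiplication by a unit preserves the truncated valuation. Therefore $\mu_j = \min(\Val_{p_j}(D), \lambda_j) \leq \Val_{p_j}(D)$ for every $j$. Since the $p_j$ are pairwise distinct primes, this yields $\prod_{j=1}^n p_j^{\mu_j} \mid D$, i.e.\ $N/\Lambda_{c^1,c^2}$ divides the nonzero integer $D$.

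From $N/\Lambda_{c^1,c^2} \leq |D| < 2FG$ I get $\Lambda_{c^1,c^2} > N/(2FG)$, and taking $\log_2$ gives $d(c^1,c^2) > \log(N/(2FG))$. Taking the minimum over distinct pairs concludes the proof. There is no real obstacle here: the only point that deserves care is verifying that distinct codewords correspond to distinct fractions in $\Q$ (so that $D \neq 0$), which uses the reducedness condition $\gcd(f_i, g_i)=1$ together with $g_i>0$ from Definition~\ref{def:SRN}.
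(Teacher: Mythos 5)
Your proof is correct and takes essentially the same approach as the paper's proof of the more general Lemma~\ref{lm:minDistSRNcode} (restricted to $\ell=1$): form the cross-difference $D = f_1 g_2 - f_2 g_1$, show $N/\Lambda_{c^1,c^2}$ divides the nonzero integer $D$ using the unit condition $\gcd(g_i,N)=1$, and bound $|D| < 2FG$. You spell out two details the paper leaves implicit — that $D\neq 0$ follows from reducedness and positivity of the denominators, and that the truncated valuation $\mu_j$ equals $\min(\Val_{p_j}(D),\lambda_j)$ because multiplication by the unit $(g_1 g_2)^{-1}$ preserves it — both of which are correct and useful additions.
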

	This  result has the advantage of being independent of the moduli $p_j$. However, the gap between
	$d(\RN(N;F,G))$ and $\log\left(N/(2FG)\right)$ depends on the moduli. 
	Even so, there exists a family of RN codes such that $d(\RN(N;F,G)) \le
	\log\left(N/((F-1)(G-1))\right)$, \ie{} the gap is small \cite[$\S$2.5.2]{pernet2014high}.
	We can generalize Theorem~\ref{thm:minDisRN} to SRN codes with multiplicities %
	as follows:
	\begin{lemma}
		\label{lm:minDistSRNcode}
		We have $d(\SRN_{\ell}(N;F,G)) > \log \left( \frac{N}{2FG}  \right) $.
	\end{lemma}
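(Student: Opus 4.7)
The plan is to mimic the proof of Theorem~\ref{thm:minDisRN} while handling the new subtlety that $\mu_j$ is now a minimum over the $\ell$ rows of the matrix rather than a single valuation.

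I would start by taking two distinct codewords coming from reduced representations $(\bf,g)$ and $(\bf',g')$, and introducing the integers $h_i \coloneq f_i g' - f'_i g$ for $i=1,\dots,\ell$. The first task is to show that at least one $h_i$ is nonzero. If every $h_i$ vanished, then $\bf/g = \bf'/g'$ in $\Q^{\ell}$, and the reducedness hypotheses $\gcd(\gcd(\bf),g)=1$ and $\gcd(\gcd(\bf'),g')=1$, together with $g,g'>0$, would force $(\bf,g)=(\bf',g')$, contradicting the distinctness of the codewords.

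Next, I would fix any index $i$ with $h_i \neq 0$. The triangle inequality, combined with the bounds in Definition~\ref{def:SRN}, yields $|h_i| < 2FG$. Since $\gcd(g g', N) = 1$, inversion modulo each $p_j^{\lambda_j}$ is legitimate, so $\nu_{p_j}(r_{i,j} - r'_{i,j}) = \nu_{p_j}(h_i)$ for every $j$. For $j \notin \xi_{\bR,\bR'}$ this valuation is at least $\lambda_j$, and for $j \in \xi_{\bR,\bR'}$ it is at least $\mu_j$ by the defining minimum. Multiplying these divisibilities together shows that $N/\Lambda_{\bR,\bR'} = \prod_{j \notin \xi} p_j^{\lambda_j} \prod_{j \in \xi} p_j^{\mu_j}$ divides $h_i$.

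Since $h_i$ is a nonzero integer, this gives $|h_i| \geq N/\Lambda_{\bR,\bR'}$; combined with $|h_i| < 2FG$, one obtains $\Lambda_{\bR,\bR'} > N/(2FG)$, and taking $\log_2$ yields the announced bound. The only genuine obstacle compared with the scalar case of Theorem~\ref{thm:minDisRN} is the very first step: one must be sure that a \emph{single} row $i$ with $h_i \neq 0$ exists, because this one row is enough to witness the required divisibility at every column, even though the minimum defining $\mu_j$ may be attained at different rows for different $j$.
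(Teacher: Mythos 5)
Your proof is correct and follows essentially the same route as the paper's: divide out the known $Y = N/\Lambda$ from some nonzero component of $\bf g' - \bf' g$ and compare with the size bound $2FG$. The one place you go beyond what the paper writes is the explicit argument that some $h_i = f_i g' - f'_i g$ is nonzero, which the paper leaves implicit; this is a genuine (if small) gap in the paper's exposition that your reducedness argument closes cleanly. Your observation that a \emph{single} nonvanishing row already witnesses divisibility by $p_j^{\lambda_j}$ at every matching column and by $p_j^{\mu_j}$ at every mismatching column (regardless of which row attains the minimum $\mu_j$) is the right way to see why collapsing the vector to one coordinate is harmless here, whereas the paper phrases the same fact as vector divisibility $Y \mid (\bf g' - \bf' g)$ followed by an $\ell_\infty$ bound; the two are equivalent.
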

	\begin{proof}
		Let  $\bC_1= \left([f_i/g]_{p_j^{\lambda_j}}\right)_{i,j}$ and $\bC_2=
		\left([f_i'/g']_{p_j^{\lambda_j}}\right)_{i,j}$ be two code words. Setting
		$Y\coloneq\prod_{j\notin \xi_{\bC_1,\bC_2}} p_j^{\mu_j}$, with $\mu_j =
		\nu_{p_j}\left([\bf/g]_{p_j^{\lambda_j}} - [\bf' /g']_{p_j^{\lambda_j}}\right)$. Since
		$\gcd\left(Y,g\right) = \gcd\left(Y,g'\right)= 1$, we have that  
		$Y| (\bf g' - \bf'  g)$. Since  $\|\bf\|_{\infty}, \|\bf' \|_{\infty}<F$, and
		$0<g,g'<G$ we have $Y<2FG$. Using the relation $Y=N/ \Lambda_{\bC_1,\bC_2}$,
		we bound $d(\bC_1,\bC_2) = \log(\Lambda_{\bC_1,\bC_2}) = \log(N/Y) > \log(N/2FG)$.
	\end{proof}
	
	\subsection{Unique decoding}
	
	A unique decoding function $D$ of capacity $d$ is a function from the ambient space to the code such
	that $D(r)=c$ for all code word $c$ and all $r$ such that $d(r,c) \le t$.
	For codes equipped with the Hamming distance, there
	exists such a decoding function of capacity $d$ if and only if $2d < d(\C)$. Pernet gives a polynomial time unique decoding algorithm for RN codes of capacity
	$\log(\sqrt{N/(2FG)}) = (1/2) \log(N/(2FG))$ for the weighted Hamming distance \cite[Corollary
	2.5.2]{pernet2014high}. Note that if no such
	decoding function exists, then no decoding algorithm can exist.
	
	For SRN codes equipped with the weighted Hamming distance, the result is slightly different. If $2d
	< d(\C)$, then there exists such a decoding function of capacity $d$. However, the converse is false
	in the strict sense of the term. Indeed, whereas proving that there can not exist a decoding function
	when $2d = d(\C)$, one takes $c_1,c_2 \in \C$ such that $d(\C)=d(c_1,c_2)$, and constructs $r$ as
	the middle of $c_1$ and $c_2$, \ie{} with $d(c_1,r)=d(c_2,r)=d(c_1,c_2)/2$. Thanks to that, we obtain the
	contradiction that a decoding function would have to map $r$ to both $c_1$ and $c_2$. However, it is
	impossible to construct $r$ as the middle of $c_1$ and $c_2$ with the weighted Hamming distance
	associated to distinct primes. Still, the essence of the result remains correct, and if $2d = d(\C)
	+ \varepsilon$ for a small $\varepsilon$, then we can construct $r$ such that $d(c_1,r),d(c_2,r) \le
	(d(c_1,c_2) + \varepsilon)/2=d$, and no decoding function of capacity $d$ can exist.

	Thanks to Lemma~\ref{lm:minDistSRNcode}, we know that a unique decoding function of capacity $d$ for
	$SRN$ codes can exist only if $d<\log\left(\sqrt{N/2FG}\right)$ (see Proposition~\ref{Unicity} for a
	proof in the case of bad primes).
	
	One workaround in coding theory, when no unique decoding function can exist, consists of having
	decoding functions which can output "decoding failure" when the code word within the decoding
	capacity is not unique.
	
	The aim of the paper is to properly analyze the decoding failure probability of a decoding algorithm
	for SRN codes beyond the uniqueness capacity.  
	It is worth of note that our decoding algorithm (Algorithm~\ref{algoSRN}), despite being aimed at correcting errors beyond unique decoding, outputs the unique decoding solution whenever $d<\log\left(\sqrt{N/2FG}\right)$ (see Remark~\ref{rmk:unicity}).
	
	\subsection{Decoding SRN codes}
	
	This section presents our first contribution: a decoder of SRN codes of capacity beyond
	$\frac{d(C)}{2}$. This decoder, is a slight modification of the decoder presented
	in~\cite{abbondati2024decoding} for SRN codes without multiplicities, and it is based on the
	interleaved Chinese remainder (ICR) codes decoder of
	\cite{li2013decoding,abbondati2023probabilistic}, which are a special case of SRN when $g=1$ and $N$
	is square-free.
	Let $\bR \coloneq (r_{i,j})_{\substack{1 \leq i \leq \ell \\ 1 \leq j \leq n}}$ be the received
	matrix.
	
	For any code word $\bC\in \SRN_{\ell}(N;F,G)$, we can write $\bR=\bC+\bE$ for some error
	matrix $\bE$ (which depends on $\bR$ and $\bC$).
	Thanks to the Chinese remainder theorem, we can view each row of the matrix as modular elements in
	$\Z/N\Z$, and the ambient space for the code can be viewed as $\Z_N^\ell$, thus for every  $1\le
	i\le \ell$ we can write $R_i=C_i+E_i$ with $C_i=[f_i/g]_N$ for some $f_i,g$.
	
	Letting $\Lambda\coloneq\Lambda_{\bC,\bR} = \prod_{j\in\xi_{\bC,\bR}} p _j^{\lambda_j - \mu_j}$,
	with $\mu_j = \nu_{p_j}([\bf/g]_{p_j^{\lambda_j}} - \br_j)$ we conclude that the
	system of $\ell$ equations holds: 
	\begin{equation}
		\label{sysEQ1}
		\Lambda f_i  = \Lambda g R_i \bmod N \text{ for } i=1,\ldots,\ell
	\end{equation}
	with unknowns $\Lambda,g,f_1,\ldots,f_\ell$. 
	
	We linearize it thanks to the substitution $\varphi
	\leftarrow \Lambda g$ and $\psi_i \leftarrow \Lambda f_i$; the resulting equations 
	\begin{equation}
		\psi_i  = \varphi R_i \bmod N \text{ for } i=1,\ldots,\ell
	\end{equation}
	are called the \emph{key equations}. The solutions $(\varphi,\psi_1,\ldots,\psi_\ell)$ are vectors
	in the lattice $\L\subseteq\Z^{\ell + 1}$ spanned by the rows of the integer matrix
	\begin{equation}
		\label{eq:lattice}
		\L=\operatorname{Span}\left(\begin{matrix}
			1 & R_1 & \cdots & R_\ell \\
			0 & N & \cdots & 0 \\
			\vdots & \vdots & \ddots & \vdots \\
			0 & 0 & \cdots & N
		\end{matrix}\right).
	\end{equation}
	In particular if $\Lambda\leq 2^d$ for some distance parameter $d$, the solution vector $v_C
	\coloneq \left(\Lambda g,\Lambda f_1,\ldots,\Lambda f_\ell\right)$ belongs to the set
	\begin{equation*}
		S_{\bR,2^d}\coloneq\{(\varphi,\psi_1,\ldots,\psi_\ell)\in\L:0<\varphi<2^d G, |\psi_i|<2^d F\}.
	\end{equation*}
	Note that the condition $\Lambda_{\bC,\bR}\leq 2^d$ means that $\bC$ is close to $\bR$ for the
	weighted Hamming distance.

	The decoding strategy consists in compute an element of $S_{\bR,2^d}$ and try to recover $v_{\bC}$ by dividing all the entries by the first one in order to obtain $\left(f_1/g,\ldots,f_\ell/g\right)$.
	There are two main
	aspects inherent to this procedure. The first one is algorithmic, and it is relative to a choice of
	how to compute  an element in $S_{\bR,2^d}$, the second one is probabilistic, and it is relative to
	the estimation of the probability that this element is a multiple of the solution vector $v_{\bC}$.
	Concerning the analysis of this second aspect, more will be said in Section~\ref{Sec.AnalysisRN}.
	For the moment we wish to describe the algorithmic aspect at a high level of generality. For this we
	will assume to have at our disposal an algorithm $\mathcal{ASVP}_{\infty}$ which solves the
	following problem:
	\begin{prob}[$\text{SVP}^{\beta}_{\|\cdot\|_{\infty}}$]
		\label{Prob_beta_approx}
		Given a basis $\{v_0,\ldots,v_{\ell}\}$ of a lattice $\L$ and an approximation constant
		$\beta\geq 1$, find a non-zero vector $w\in\L$ such that $\|w\|_{\infty}\leq
		\beta\lambda_\infty(\L)$,  
		where $\lambda_\infty(\L)$ is the minimum $\|\cdot\|_{\infty}$-norm of the non-zero vectors in
		$\L$.
	\end{prob}
	We refer the reader to \cite{aggarwal2018improved} for state-of-the-art algorithms solving
	Problem~\ref{Prob_beta_approx}. Without loss of generality, we will assume that the output $w$ of
	the algorithm $\mathcal{ASVP}_{\infty}$ satisfies $w_0 \ge 0$ (both $\pm w$ are short vectors). We
	will also assume that $w$ is $\L$-reduced: 
	\begin{defi}
		\label{DefiL-reduced}
		Given a lattice $\L$, a vector $v\in\L$ is said to be $\L-$reduced if, for $c \in
		\Z\setminus\{0\}$, $(1/c) \cdot v\in\L\Rightarrow c=\pm 1.$
	\end{defi}
	Because the size constraints in $S_{\bR,2^d}$ do not correspond exactly to conditions on the
	$\|\cdot\|_\infty$ norm, we need to introduce a scaling operator $\sigma_{F,G} : \Q^{\ell+1}
	\rightarrow \Q^{\ell+1}$ such that $\sigma_{F,G}((v_0,v_1,\dots,v_\ell))\coloneq(v_0 F,v_1
	G,\dots,v_\ell G)$.
	This scaling will transform $\L$ into the scaled lattice $\bar{\L} \coloneq \sigma_{F,G}(\L)$, and
	our solution set $S_{\bR,2^d}$ into
	\begin{equation*}
		S'_{\bR,2^d}\coloneq\sigma_{F,G}(S_{\bR,2^d})=\{(\varphi,\psi_1,\ldots,\psi_\ell)\in\bar{\L}:0<\varphi<2^d FG, |\psi_i|<2^d FG\}.
	\end{equation*}
	Therefore, a vector $v'\in\bar{\L}$ which satisfies $\|v'\|_\infty < 2^d FG$ must belong to
	$S'_{\bR,2^d}$.
	A candidate solution $v_s$ can be obtained by computing a scaled short vector $\bar{v}_s \coloneq
	\mathcal{ASVP}_{\infty}(\bar{\L})$, and unscaling it $v_s \coloneq \sigma_{F,G}^{-1}(\bar{v}_s)$.

	We can now prove that, provided that $\bR$ is relatively close to the code (see Constraint~\ref{c_1}
	below), since $v_s$ is a $\beta$-approximation of the shortest vector, it belongs to a slightly
	larger solution set.

	\begin{constraint}\label{c_1} There exists a code word $\bC$ such that $\Lambda_{\bC,\bR} \leq 2^d$.
		
	\end{constraint}
	
	\begin{lemma}
		\label{Lemma1}
		Assuming Constraint~\ref{c_1}, we have that $v_s \in S_{\bR}\coloneq S_{\bR,2^{d}\beta}$.
	\end{lemma}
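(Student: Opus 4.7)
The plan is to combine the $\beta$-approximation guarantee of $\mathcal{ASVP}_\infty$ with an explicit short candidate in $\bar{\L}$ furnished by Constraint~\ref{c_1}, and then read off the defining inequalities of $S_{\bR,2^d\beta}$ after undoing the scaling.

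First I would exploit Constraint~\ref{c_1} to pick a codeword $\bC=([f_i/g]_{p_j^{\lambda_j}})_{i,j}$ with $\Lambda:=\Lambda_{\bC,\bR}\le 2^d$. By the discussion preceding the lemma the associated vector $v_\bC=(\Lambda g,\Lambda f_1,\ldots,\Lambda f_\ell)$ lies in $S_{\bR,2^d}\subseteq\L$, in particular $0<\Lambda g<2^d G$ and $|\Lambda f_i|<2^d F$ for every $i$. Applying $\sigma_{F,G}$, the scaled vector $\bar{v}_\bC:=\sigma_{F,G}(v_\bC)$ is a non-zero element of $\bar{\L}$ with $\|\bar{v}_\bC\|_\infty<2^d FG$.

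From this I immediately get $\lambda_\infty(\bar{\L})\le\|\bar{v}_\bC\|_\infty<2^d FG$, and the definition of $\mathcal{ASVP}_\infty$ (Problem~\ref{Prob_beta_approx}) gives
$$\|\bar{v}_s\|_\infty\le\beta\,\lambda_\infty(\bar{\L})<\beta\,2^d FG.$$
Unscaling yields $v_s=\sigma_{F,G}^{-1}(\bar{v}_s)\in\L$ with $|v_{s,0}|<\beta 2^d G$ and $|v_{s,i}|<\beta 2^d F$ for $i=1,\dots,\ell$; together with the convention $\bar{v}_{s,0}\ge 0$ on the output of $\mathcal{ASVP}_\infty$ we also know $v_{s,0}\ge 0$.

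The one delicate point, and the only real obstacle, is upgrading $v_{s,0}\ge 0$ to the strict inequality $v_{s,0}>0$ required by the definition of $S_{\bR,2^d\beta}$. For this I would argue by contradiction and use the shape of the basis~\eqref{eq:lattice}: a lattice vector whose first coordinate vanishes is an integer combination of the last $\ell$ rows only, hence of the form $(0,k_1N,\ldots,k_\ell N)$; since $\bar{v}_s\ne 0$ implies $v_s\ne 0$, some $k_i$ is non-zero, forcing $|v_{s,i}|\ge N$. This contradicts the bound $|v_{s,i}|<\beta 2^d F$ in the meaningful regime $\beta 2^d F\le N$ in which the decoder is applied, and therefore $v_{s,0}>0$. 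Combining this with the norm bounds above gives $v_s\in S_{\bR,2^d\beta}=\SR$, as claimed.
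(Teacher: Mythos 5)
Your proof is correct and follows essentially the same route as the paper's: exhibit $\sigma_{F,G}(v_{\bC})$ as a witness that $\lambda_\infty(\bar{\L}) < 2^d FG$, apply the $\beta$-approximation guarantee of $\mathcal{ASVP}_\infty$, and unscale. Where you go further than the paper is the strict positivity of the first coordinate: the paper's proof simply cites the convention $(\bar{v}_s)_0 \ge 0$ and treats the step from $\ge 0$ to $> 0$ as immediate, whereas you give the explicit lattice argument ruling out $\varphi = 0$. That argument is fine, but note that the side condition $\beta 2^d F \le N$ you invoke for it is not among the hypotheses of Lemma~\ref{Lemma1}, which assumes only Constraint~\ref{c_1}. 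It does follow from Constraint~\ref{c_3} (indeed $B<1$ gives $2^d\beta\cdot 2FG<N$, hence $2^d\beta F<N$), and Constraint~\ref{c_3} is in force everywhere the lemma is actually used, so your proof is sound in context; a fully self-contained proof of the lemma as literally stated would, however, need to either add that hypothesis or carry out the $\varphi=0$ elimination later.
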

	\begin{proof}
		We know that
		$\|\bar{v}_s\|_\infty\leq\beta\lambda_\infty(\bar{\L})\leq\beta\|\sigma_{F,G}(v_C)\|_\infty <
		\beta\Lambda FG \leq \beta 2^d FG$.
		
		Since we assumed that $(\bar{v}_{s})_0 \ge 0$, we have $\bar{v}_s\in S'_{\bR,2^{d}\beta}$ and
		$v_s\in S_{\bR,2^{d}\beta}$.
	\end{proof}
	We notice that assuming Constraint~\ref{c_1} we also have $v_{\bC} \in S_{\bR}$. 
	Following the error model of SRN codes, one could independently decode each row, which corresponds to an RN code, but the information that the errors share the same support would not be exploited. So, instead, we perform so-called collaborative decoding of $\ell$ RN codeword together, that is a SRN code word, to take advantage of this common support.
	We can now state
	our decoding algorithm for SRN codes.
	\begin{algorithm}
		\caption{SRN codes decoder.}
		\label{algoSRN}
		\SetAlgoLined \SetKw{KwBy}{par} \KwIn{$\SRN_\ell(N;F,G)$, received word $\bR$, distance
			bound $d$} \KwOut{A code word $\bC$ s.t. $d(\bC,\bR) \leq d$ or ``decoding failure''}
		
		\vspace{5pt}
		Let $\bar{\L} \coloneq \sigma_{F,G}(\L)$ be the scaled lattice of $\L$ defined in
		Equation~\eqref{eq:lattice} \\
		Compute a short vector  $\bar{v}_s \coloneq \mathcal{ASVP}_{\infty} (\bar{\L})$ \\
		Unscale the vector: $v_s = (\varphi,\psi_{1},\dots,\psi_{\ell}) \coloneq
		\sigma_{F,G}^{-1}(\bar{v}_s)$ \\
		Let $\eta \coloneq \gcd(\varphi,\psi_{1},\dots,\psi_{\ell})$, $\varphi'\coloneq\varphi/\eta$ and
		$\forall j, \ \psi'_j\coloneq\psi_j/\eta$ \label{step:lambda}\\
		\If{$\eta \le 2^d$, $\gcd(\varphi',N)=1$, $|\varphi'| < G$ and $\forall j, \ |\psi_j'|< F$}
		{\textbf{return} $(C_1,\dots,C_\ell)\coloneq(\psi_1'/\varphi', \dots, \psi_\ell'/\varphi')$}
		\lElse{ \textbf{return} "decoding failure"}
	\end{algorithm}
	
	\commentaire{ \romain{ Remark : We don't need the condition $\varphi \neq 0$ in the algorithm.
			Indeed, we know that $v_s \neq (0,\dots,0)$. If $\varphi = 0$, then $\psi_i = 0 \bmod N$ for all
			$i$, so $\psi_i = 0$ for all $i$ because of size constraints on $\psi_i$. } }
	
	\subsection[A particular sub-routine: LLL]{A particular sub-routine: $\LLL$} 
	\label{subsec:LLL} 
	We remark that the complexity of
	Algorithm~\ref{algoSRN} is mainly determined by the complexity of the sub-routine
	$\mathcal{ASVP}_{\infty}$. In particular the authors of \cite{aggarwal2018improved} showed that the
	space and time complexity for the resolution of Problem~\ref{Prob_beta_approx} are significantly
	larger than the relative costs for the resolution of the $\ell_2-$norm version of the same problem.
	\begin{prob}[$\text{SVP}^{\gamma}_{\|\cdot\|_{2}}$]
		\label{Prob_gamma_approx_2norm}
		Given a basis $\{v_0,\ldots,v_{\ell}\}$ of a lattice $\L$ and an approximation constant
		$\gamma\geq 1$, find a non-zero vector $w\in\L$ such that $\|w\|_{2}\leq \gamma\lambda_2(\L)$,  
		where $\lambda_2(\L)$ is the minimum $\|\cdot\|_{2}$-norm of the non-zero vectors in $\L$.
	\end{prob}

	\begin{remark}
		A $\gamma$-approximation SVP for the $\ell_2-$norm yields a $\gamma\sqrt{\ell +1}$-approximation
		SVP for the $\ell_\infty-$norm : If $w=\mathcal{ASVP}_{2}(\L)$ and $s_2$ (resp. $s_\infty$) is
		one of the shortest vector for the $\ell_2-$norm (resp. $\ell_\infty-$norm), then
		$\|w\|_{\infty} \leq \|w\|_{2} \leq \gamma \|s_2\|_{2} \leq \gamma \|s_\infty\|_{2} \leq \gamma
		\sqrt{\ell+1} \|s_\infty\|_{\infty} $.
	\end{remark} 
	A well known example of algorithm solving Problem~\ref{Prob_gamma_approx_2norm} is given by $\LLL$
	\cite{lenstra1982factoring}, which runs in polynomial time for the approximation factor $\gamma =
	\sqrt{2}^{\ell}$ (our lattice has dimension $\ell +1$). As Algorithm~\ref{algoSRN} does not use
	$\LLL$ as subroutine to compute a short vector $\bar{v}_s$, we are going to assume that the
	approximation constant $\beta$ satisfies the following constraint:
	\begin{constraint}
		\label{cst:beta}
		The approximation constant $\beta$ satisfies: $\beta < 3^\ell$.
	\end{constraint}
	Thanks to the above remark, Constraint~\ref{cst:beta} is automatically satisfied if using $\LLL$ as
	subroutine, it is enough to notice that $\beta =\gamma\sqrt{\ell + 1} = \sqrt{2}^\ell \sqrt{\ell +
		1}\leq 3^\ell$.
	
	The most efficient $\text{SVP}^{\gamma}_{\|\cdot\|_{2}}$ solver is given by the BKZ algorithm
	\cite{schnorr1987hierarchy}. It finds a solution of Problem~\ref{Prob_gamma_approx_2norm} with
	$\gamma = (1+\epsilon)^{\ell +1}$ in polynomial time of degree increasing as $\epsilon\rightarrow
	0$.    
	
	Furthermore, since the output of $\LLL$ or BKZ is always the first vector of a basis of the lattice,
	the following Lemma will ensure that it is $\L-$reduced.
	\begin{lemma}
		\label{lm:Lreducedbasis}
		Let $\{b_1,\ldots,b_n\}$ be a basis of a lattice $\L$, then every vector $b_i$ is $\L-$reduced.
	\end{lemma}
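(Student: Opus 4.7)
The plan is to unfold the definition of $\L$-reduced directly and use the uniqueness of representation of a lattice element in terms of a basis. Fix an index $i$ and suppose $c \in \Z \setminus \{0\}$ is such that $(1/c) \cdot b_i \in \L$; I want to conclude $c = \pm 1$.

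Call $v \coloneq (1/c) \cdot b_i \in \L$. Since $\{b_1,\ldots,b_n\}$ is a basis of $\L$, there exist unique integers $a_1,\ldots,a_n$ with $v = \sum_{j=1}^{n} a_j b_j$. Multiplying by $c$ yields
\begin{equation*}
b_i = c v = \sum_{j=1}^{n} (c a_j) b_j.
\end{equation*}
By the uniqueness of the coordinates of $b_i$ in the basis $\{b_1,\ldots,b_n\}$, we must have $c a_i = 1$ and $c a_j = 0$ for every $j \neq i$. The first equation, holding in $\Z$ with $c, a_i \in \Z$, forces $c \in \{+1,-1\}$, which is exactly the condition required by Definition~\ref{DefiL-reduced}. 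Hence $b_i$ is $\L$-reduced.

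There is essentially no obstacle: the argument is a one-line application of the uniqueness of basis coordinates together with the fact that the only integer divisors of $1$ are $\pm 1$. The only thing to be careful about is to invoke uniqueness over $\Z$ (not merely over $\Q$), which is legitimate because $\{b_1,\ldots,b_n\}$ is a $\Z$-basis of the lattice $\L$ and both $b_i$ and $cv$ are elements of $\L$ with integer coordinates in that basis.
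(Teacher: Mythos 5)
Your proof is correct and follows exactly the same route as the paper's: express $(1/c)\,b_i$ in the $\Z$-basis, multiply through by $c$, and compare coordinates with the (unique) representation of $b_i$ to force $c\,a_i = 1$ and hence $c = \pm 1$. You are just a bit more explicit than the paper about invoking uniqueness of the coordinates, which is a fine clarification but not a different argument.
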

	\begin{proof}
		If $\frac{1}{c}b_i\in\L$ for some $c\in\Z\setminus\{0\}$, then  we can write
		$\frac{1}{c}b_i=\sum_{j=1}^n c_j b_j$ for some $c_j\in\Z$. Thus, $b_i=\sum_{j=1}^n cc_j b_j$,
		which means that $cc_i=1$, so $c=\pm 1$.
	\end{proof}

	\subsection{Correctness of Algorithm~\ref{algoSRN}}\label{Sec:Analysis}

	In this section, we study the correctness of Algorithm~\ref{algoSRN}. We start with
	Lemma~\ref{lm:succeedsImpliesCorrect} which states that the algorithm is correct when it does not
	fail.
	
	\begin{lemma}
		\label{lm:succeedsImpliesCorrect} 
		If Algorithm~\ref{algoSRN} returns $\bC$ on input
		$\bR$ and parameter $d$, then $\bC$ is a code word of $\SRN(N;F,G)$ such that
		$\dist(\bC,\bR) \le d$.
	\end{lemma}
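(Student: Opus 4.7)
The plan is to identify $f_i = \psi_i'$ and $g = \varphi'$, and show that the matrix $\bC = ([f_i/g]_{p_j^{\lambda_j}})_{i,j}$ returned by the algorithm satisfies the two properties: membership in $\SRN_\ell(N;F,G)$ and distance at most $d$ from $\bR$.

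For code-word membership, I would read off the conditions in Definition~\ref{def:SRN} directly from the success conditions of the final \textbf{if} in Algorithm~\ref{algoSRN}. The inequalities $|\psi_j'| < F$ and $|\varphi'| < G$ give the required bounds on $f_i$ and $g$; positivity $g > 0$ follows from the WLOG assumption that $(\bar v_s)_0 \ge 0$, hence $\varphi \ge 0$ and $\eta>0$, combined with the fact that $\varphi' = 0$ is incompatible with $\gcd(\varphi',N)=1$ when $N > 1$. The coprimality $\gcd(N,g)=1$ is exactly the check $\gcd(\varphi',N)=1$, and $\gcd(f_1,\dots,f_\ell,g)=1$ follows by construction from dividing by $\eta = \gcd(\varphi,\psi_1,\dots,\psi_\ell)$ in Step~\ref{step:lambda}.

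For the distance bound, I would exploit that $v_s = (\varphi,\psi_1,\ldots,\psi_\ell) \in \L$, which by the shape of the lattice in~\eqref{eq:lattice} yields the congruences $\psi_i \equiv \varphi R_i \bmod N$ for every $i$. Writing $\varphi = \eta\varphi'$ and $\psi_i = \eta\psi_i'$ gives $\eta(\psi_i' - \varphi' R_i) \equiv 0 \bmod p_j^{\lambda_j}$ for each $j$. Setting $s_j \coloneq \nu_{p_j}(\eta)$, this becomes $\psi_i' \equiv \varphi' R_i \bmod p_j^{\lambda_j - s_j}$, and using $\gcd(\varphi',N)=1$ to invert $\varphi'$ modulo $p_j^{\lambda_j}$ yields $f_i/g \equiv R_i \bmod p_j^{\lambda_j - s_j}$. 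Consequently, for every coordinate, $\nu_{p_j}(C_{i,j} - r_{i,j}) \ge \lambda_j - s_j$, so $\mu_j \ge \lambda_j - s_j$, i.e.\ $\lambda_j - \mu_j \le s_j$.

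It remains to assemble the bound $\Lambda_{\bC,\bR} = \prod_j p_j^{\lambda_j - \mu_j} \le \prod_j p_j^{s_j} \le \eta$, where the last inequality follows because the $p_j$ are distinct primes and $p_j^{s_j} \mid p_j^{\mathrm{Val}_{p_j}(\eta)}$ divides $\eta$. Combined with the algorithmic check $\eta \le 2^d$, this gives $d(\bC,\bR) = \log_2(\Lambda_{\bC,\bR}) \le \log_2(\eta) \le d$. The main subtlety, which the truncated valuation $\nu_{p_j}$ is designed to handle cleanly, is that $\eta$ may share factors with $N$ so one cannot naively invert $\eta$ mod $N$; instead one loses exactly $s_j$ digits of precision at each prime $p_j$, and this loss is absorbed into the error locator.
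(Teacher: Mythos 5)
Your proof is correct and follows essentially the same route as the paper's: read the code-word membership conditions off the algorithm's final check, then use the lattice congruence $\psi_i \equiv \varphi R_i \bmod N$ together with the invertibility of $\varphi'$ modulo $N$ to deduce $\Lambda_{\bC,\bR} \mid \eta \le 2^d$. You supply a few details the paper leaves implicit (positivity of $\varphi'$, reducedness $\gcd(f_1,\dots,f_\ell,g)=1$ from the division by $\eta$, and the per-prime valuation bookkeeping via $s_j$), but the core argument is identical.
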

	
	\begin{proof}
		The output vector $\bC= (\psi_1'/\varphi', \dots, \psi_\ell'/\varphi')$ is a code word of
		$\SRN(N;F,G)$ since the algorithm has verified the size conditions $|\varphi'| < G$,
		$|\psi_j'|< F$ for all $j$, and that $\gcd(\varphi',N)=1$.
		Now, we use that $(\varphi,\psi_1,\ldots,\psi_\ell)=(\eta \varphi',\eta\psi_1',\dots,\eta
		\psi_\ell')$ is in the lattice $\L$, so that $\eta(\varphi' R_i - \psi_i') = 0 \bmod N$ for all
		$i$. Dividing by the invertible $\varphi'$ modulo $N$, we obtain $\eta(R_i - C_i) = 0 \bmod N$
		for all $i$, which implies that $\nu_{p_j}(\eta)\geq \lambda_j - \mu_j =
		\nu_{p_j}(\Lambda_{\bC,\bR})$. Thus, $\Lambda_{\bC,\bR}|\eta\leq 2^d$, and we can conclude that
		$d(\bC,\bR) = \log \Lambda_{\bC,\bR} \le \log \eta \le d$.
	\end{proof}
	
	Next lemma shows that, when the algorithm fails, the short vector $v_s$ computed by sub-routine
	$\mathcal{ASVP}_{\infty}$ is not collinear to $v_{\bC}$.

	\begin{lemma}
		\label{lm:algofailurecondition}
		Assuming Constraint~\ref{c_1}, if Algorithm~\ref{algoSRN} fails, then $v_s\notin v_{\bC} \Z$.
	\end{lemma}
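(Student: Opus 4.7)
The plan is to prove the contrapositive: assuming Constraint~\ref{c_1}, if $v_s \in v_{\bC}\Z$, then Algorithm~\ref{algoSRN} returns a codeword. Since $v_s$ is a nonzero output of $\mathcal{ASVP}_{\infty}$, I can write $v_s = k\, v_{\bC}$ with $k \in \Z \setminus \{0\}$, and my goal is to show $k = 1$ and that all the \textbf{if}-conditions of the algorithm are then satisfied.

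The key preliminary step is to transfer the $\bar{\L}$-reducedness of $\bar{v}_s$ (assumed on the output of $\mathcal{ASVP}_{\infty}$ in Subsection~\ref{subsec:LLL}) back to $\L$-reducedness of $v_s$. Since $\sigma_{F,G}$ is an invertible linear map from $\L$ onto $\bar{\L}$ that commutes with integer scaling, a decomposition $v_s = c\, u$ with $u \in \L$, $c \in \Z$ gives $\bar{v}_s = c\,\sigma_{F,G}(u)$ with $\sigma_{F,G}(u) \in \bar{\L}$; the $\bar{\L}$-reducedness of $\bar{v}_s$ then forces $c = \pm 1$, so $v_s$ is $\L$-reduced. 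Now $v_{\bC} \in \L$ because it satisfies the key equations~\eqref{sysEQ1}, so the decomposition $v_s = k\, v_{\bC}$ combined with $\L$-reducedness of $v_s$ forces $k = \pm 1$. The sign convention $(\bar{v}_s)_0 \ge 0$ passes through $\sigma_{F,G}^{-1}$ to give $(v_s)_0 \ge 0$, and since $v_{\bC,0} = \Lambda g > 0$, we must have $k = 1$, i.e.\ $v_s = v_{\bC}$.

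It then remains to verify that all the \textbf{if}-tests of Algorithm~\ref{algoSRN} pass when $v_s = v_{\bC} = (\Lambda g, \Lambda f_1, \dots, \Lambda f_\ell)$. I compute
\[
\eta = \gcd(\Lambda g, \Lambda f_1, \dots, \Lambda f_\ell) = \Lambda \cdot \gcd(g, f_1, \dots, f_\ell) = \Lambda,
\]
where the last equality uses the reducedness condition $\gcd(g,f_1,\dots,f_\ell) = 1$ baked into Definition~\ref{def:SRN}. Then $\eta = \Lambda \le 2^d$ by Constraint~\ref{c_1}, and $\varphi' = g$, $\psi'_j = f_j$ satisfy $0 < \varphi' < G$, $\gcd(\varphi', N) = 1$, and $|\psi'_j| < F$ by the defining conditions of an SRN codeword. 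All tests pass and the algorithm returns $(\psi'_1/\varphi',\dots,\psi'_\ell/\varphi') = (f_1/g,\dots,f_\ell/g) = \bC$ rather than ``decoding failure''.

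The main delicate step is the first one: transferring $\L$-reducedness through the scaling and using it to pin down $k$ to a single value (not merely $\pm 1$). Everything else is routine verification, but the identification $\eta = \Lambda$ hinges crucially on the coprimality $\gcd(g, f_1, \dots, f_\ell) = 1$ of the reduced fraction vector, which is precisely why this hypothesis was imposed in the SRN code definition.
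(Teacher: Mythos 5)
Your proof is correct and follows essentially the same contrapositive argument as the paper: write $v_s = k\,v_{\bC}$, use $\L$-reducedness to conclude $k=\pm 1$, compute $\eta=\Lambda$ via the coprimality of the reduced fraction, and check that the \textbf{if}-tests pass. You are somewhat more careful than the paper's version in two places: you explicitly justify that $\bar\L$-reducedness of $\bar v_s$ transfers to $\L$-reducedness of $v_s$ through the scaling $\sigma_{F,G}$ (the paper states $v_s$ is $\L$-reduced without remarking that the SVP routine was run on the scaled lattice), and you use the sign convention $(\bar v_s)_0\ge 0$ to pin down $k=1$ rather than leaving $k=\pm1$ (the paper's $\pm$ also works since the tests only involve absolute values and gcd, but your version is cleaner). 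These are small refinements of the same argument, not a different route.
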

	\begin{proof}
		By contraposition, let's prove that if $v_s= r v_{\bC}$ for some $r\in\Z$,
		then the algorithm must succeed.
		We know that $v_s = r v_{\bC}$ is $\L$-reduced therefore $v_{\bC} = \pm v_s$ and $\eta =
		\Lambda\le 2^{d}$ using Constraint~\ref{c_1} (see Algorithm~\ref{algoSRN},
		Step~\ref{step:lambda} for $\eta$), $\varphi'=\pm g,\psi'_j=\pm f_j$ for every $j$, thus the
		algorithm succeeds.
	\end{proof}
	
	\begin{remark}
		\label{rem:failevenifbetais1}
		We emphasize here that the failure of Algorithm~\ref{algoSRN} is due to the size of the distance
		parameter $d$ (when larger than the unique decoding capacity), and not to the approximation
		factor coming from the subroutine $\mathcal{ASVP_\infty}$. When $d>\log(\sqrt{N/(2FG)})$ the
		algorithm might sometimes fail even if $\beta = 1$.
	\end{remark}
	
	The rest of this section is dedicated to the analysis of the decoding failure of
	Algorithm~\ref{algoSRN}. We will show that if $\bR$ is $\bC$ plus a random error of weighted Hamming
	distance up to approximately $\ell/(\ell+1) \log(N/(2FG))$ (see
	Section~\ref{subsect:ErrModelRatNumb} for precise error models), then this decoder is able to decode
	most of the time (see Section~\ref{SubSect.Results} for the statement of the theorem).

	\subsection{Error models}
	\label{subsect:ErrModelRatNumb}
	
	Algorithm~\ref{algoSRN} must fail on some instances when the distance parameter $d$ exceeds the
	maximum distance for which the uniqueness of the solution of Problem~\ref{SRNRwE_Problem} is
	guaranteed.
	
	We analyze the failure probability of the algorithm under two different classical error models in
	Coding Theory, already considered in previous papers
	\cite{schmidt2009collaborative,abbondati2023probabilistic,abbondati2024decoding}, specifying two
	possible distributions of the random received word $\bR$.

	\paragraph{Error Model 1}
	In this error model, we fix an error locator $\Lambda$ among the divisors of
	$N$, then we let $\ERR$ be the set of error matrices $\bE$ whose columns satisfy:
	\begin{enumerate}
		\item $\be_j = \boldsymbol{0} \text{ for all } j \text{ such that } p_j\not\in\cP(\Lambda)$,
		\item $\nu_{p_j}(\be_j) = \lambda_j - \nu_{p_j}(\Lambda) \text{ for all } j \text{ such that } p_j\in\cP(\Lambda)$.
	\end{enumerate}
	For any given code word $\boldsymbol{C}$ and error locator $\Lambda$, the
	distribution $\D^{\ERR}_{\bC}$ of random received words $\bR$ around the central
	code word $\bC$ is defined as $\bR = \bC + \bE$ for $\bE$ uniformly distributed
	in $\ERR$.
	
	We will need another point of view on the random error matrices $\bE$. For
	$i\in\{1,\dots,\ell\}$, we denote $E_i \in \Z/N\Z$ the CRT interpolant of the
	$i$-th row of $\bE$. By definition of the error valuation $\mu_j$, letting $Y
	\coloneq N/\Lambda = \prod_{j=1}^n p_j^{\mu_j}$, we have that $Y | E_i$ for
	every index $i= 1,\ldots,\ell$. We define the modular integers $E_i'\coloneq
	E_i/Y \in \Z/\Lambda\Z$.
	
	Since $\mu_j = \nu_{p_j}(\bE) = \min_i\{\nu_{p_j}(E_i)\}$, we see that $Y =
	\gcd(E_1,\ldots,E_{\ell},N)$, and that the random vector $(E_i')_{1\le i \le \ell}$ is uniformly
	distributed in the sample space 
	\begin{equation}
		\Omega_\Lambda \coloneq
		\{ (F_i)_{1\le i \le \ell} \in (\Z/\Lambda\Z)^\ell : \gcd(F_1,\dots,F_\ell,\Lambda) = 1\}.
	\end{equation}
	As we will need a more general version of $\Omega_{\Lambda}$ (for example in the proof of
	Lemma~\ref{lm:probagpe}), we state the following:
	\begin{lemma}
		\label{lm:EulerFormula}
		Given $\Lambda\in\Z$ and $\eta = \prod_{p\in\cP(\Lambda)}p_j^{\eta_j}$ be a divisor of $\Lambda$,
		then letting $\bar{\Omega}_{\Lambda,\eta} := \left\{ (F_i)_{1\le i \le \ell} \in (\Z/\Lambda\Z)^\ell :
		\gcd(F_1,\dots,F_\ell,\Lambda) = \eta \right\} $ we have
		\begin{equation*}
			\#\bar{\Omega}_{\Lambda,\eta}
			=
			\left(\frac{\Lambda}{\eta}\right)^{\ell}\prod_{p\in\cP(\Lambda/\eta)}\left(1 - \frac{1}{p^{\ell}}\right)
		\end{equation*}
	\end{lemma}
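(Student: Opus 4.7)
The plan is to decompose the counting problem prime by prime via the Chinese Remainder Theorem, and then to handle each prime by a one-step inclusion--exclusion.

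First, write $\Lambda = \prod_{p\in\cP(\Lambda)} p^{\lambda_p}$ and $\eta = \prod_{p\in\cP(\Lambda)} p^{\eta_p}$ with $0\le \eta_p\le \lambda_p$. The CRT isomorphism $(\Z/\Lambda\Z)^{\ell}\cong \prod_{p\in\cP(\Lambda)} (\Z/p^{\lambda_p}\Z)^{\ell}$ is compatible with the gcd condition: writing $F_i^{(p)}$ for the image of $F_i$ modulo $p^{\lambda_p}$, the global condition $\gcd(F_1,\dots,F_\ell,\Lambda)=\eta$ is equivalent to the family of local conditions $\gcd\bigl(F_1^{(p)},\dots,F_\ell^{(p)},p^{\lambda_p}\bigr)=p^{\eta_p}$ for every $p\in\cP(\Lambda)$. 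Therefore $\#\bar{\Omega}_{\Lambda,\eta}=\prod_{p\in\cP(\Lambda)} N_p$, where $N_p$ denotes the number of local tuples satisfying the $p$-part condition.

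Second, I would compute $N_p$. The tuples in $(\Z/p^{\lambda_p}\Z)^{\ell}$ all of whose entries are divisible by $p^{\eta_p}$ are in bijection (via dividing out $p^{\eta_p}$) with $(\Z/p^{\lambda_p-\eta_p}\Z)^{\ell}$, hence there are $(p^{\lambda_p-\eta_p})^{\ell}$ of them. If $\eta_p<\lambda_p$, the subset of those additionally divisible by $p^{\eta_p+1}$ has cardinality $(p^{\lambda_p-\eta_p-1})^{\ell}$; subtracting gives
\begin{equation*}
    N_p \;=\; (p^{\lambda_p-\eta_p})^{\ell}\left(1-\tfrac{1}{p^{\ell}}\right).
\end{equation*}
If on the other hand $\eta_p=\lambda_p$, the only tuple is the zero tuple, and $N_p=1=(p^{\lambda_p-\eta_p})^{\ell}$ with no correction factor.

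Third, I would multiply. The product $\prod_{p\in\cP(\Lambda)}(p^{\lambda_p-\eta_p})^{\ell}$ collapses to $(\Lambda/\eta)^{\ell}$, and the correction factor $(1-p^{-\ell})$ contributes exactly for those $p$ with $\eta_p<\lambda_p$, that is, precisely for $p\in\cP(\Lambda/\eta)$. Assembling these two pieces yields the stated formula. The main thing to be careful about is the edge case $\eta_p=\lambda_p$, where no $(1-p^{-\ell})$ factor must appear; aside from this bit of bookkeeping, the argument is routine.
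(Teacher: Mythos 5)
Your proof is correct and takes essentially the same route as the paper's: both first use CRT to reduce to a single prime power, then count tuples of exact valuation $\eta_p$ in $(\Z/p^{\lambda_p}\Z)^{\ell}$. The paper does this count directly via $p$-adic digit expansion, whereas you do it by a one-step subtraction; these are the same computation packaged slightly differently, and you correctly handle the edge case $\eta_p=\lambda_p$.
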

	\begin{proof}
		Thanks to the Chinese Remainder Theorem, for every $i = 1,\ldots,\ell$, we can factor each of
		the $\ell$ copies of the quotient space $\Z/\Lambda\Z$ with respect to the factors of $\Lambda$,
		and obtain that $\bar{\Omega}_{\Lambda,\eta}$ has the same cardinality as
		\begin{equation*}
			\left\{
			(\Bold{\varphi}_j)
			\in
			\prod_{p_j\in\cP(\Lambda)}\left(\Z/p_j^{\nu_{p_j}\left(\Lambda\right)}\Z\right)^{\ell}: \nu_{p_j}(\Bold{\varphi}_j)=\eta_j
			\right\}.
		\end{equation*}
		By counting the $p_j$-adic vectorial expansion coefficients of $\Bold{\varphi}_j$, we can
		compute the cardinality of the above set as
		\begin{equation*}
			\prod_{\eta_j<\nu_{p_j}(\Lambda)}
			p_j^{\ell(\nu_{p_j}(\Lambda) - \eta_j - 1)}
			\left(
			p_j^{\ell} - 1
			\right)
			=
			\left(\frac{\Lambda}{\eta}\right)^{\ell}\prod_{p\in\cP(\Lambda/\eta)}\left(1 - \frac{1}{p^{\ell}}\right). \qedhere
		\end{equation*}
	\end{proof}
	
	\paragraph{Error Model 2}
	In this error model we fix a maximal error locator $\Lambda_m$ among the divisors of $N$, then we let $\ERRd$ be the set of error matrices $\bE$ whose columns satisfy:
	\begin{enumerate}
		\item $\be_j = \boldsymbol{0} \text{ for all } j \text{ such that } p_j\not\in\cP(\Lambda_m)$,
		\item $\nu_{p_j}(\be_j) \ge \lambda_j - \nu_{p_j}(\Lambda_m) \text{ for all } j \text{ such that } p_j\in\cP(\Lambda_m)$.
	\end{enumerate}
	
	We notice that in the error model $\ERRd$, the actual error locator $\Lambda$ could be a divisor of
	$\Lambda_m$. For a code word $\boldsymbol{C}$ and a maximal error locator $\Lambda_m$, the
	distribution $\D^{\ERRd}_{\bC}$ of random received words $\bR$ around the central
	code word $\bC$ is defined as $\bR = \bC + \bE$ for $\bE$ uniformly distributed
	in $\ERRd$.
	
	\subsection{Our Results}
	\label{SubSect.Results}
	In this section we present our contributions to the analysis of the decoding failure depending on
	the given parameters. The error models previously defined will play a role in the latter but not in
	the choice of parameters. We  define a common framework for the algorithm parameters, while in 
	Subsection~\ref{Sec.AnalysisRN} we will adapt the analysis of the failure probability to the two error
	models specified above.
	In what follows we set
	\begin{equation}
		\label{dmax}
		\dm\coloneq \frac{\ell}{\ell+1} \left[\log\left(\frac{N}{2FG}\right) - \log(3\beta) \right].
	\end{equation} 
	
	\begin{remark}
		Our setting allows decoding up to a distance $d\le \dm$ that, for $\ell > 1$, can be greater
		than our estimation $\log \Bigl( \sqrt{\frac{N}{2FG}} \Bigr)$ of the unique decoding capability
		of $\SRN_{\ell}(N;F,G)$ codes.
	\end{remark}
	When fixing the decoding bound $d$ close to  $\dm$, we are likely to correct beyond the unique
	decoding radius, so we must deal with  decoding failure for some received word. Note that this
	remains valid even if $\mathcal{ASVP}_{\infty}(\bar{\L})$ gives us the exact short vector (i.e.
	$\beta=1$).
	
	Here is our first result (whose proof will be given at the end of Subsection~\ref{RN_ERR1}) relative
	to the failure probability of the decoding algorithm with respect to the error model $\ERR$.

	\begin{theorem}
		\label{thm:main1}
		Decoding Algorithm~\ref{algoSRN} on input distance parameter $d\le \bar{d}$ and a random received
		word $\bR$ uniformly distributed in $D^{\ERR}_{\bC}$, for some code word $\bC \in \SRN_\ell(N;F,G)$
		and error locator $\Lambda$ such that $\log \Lambda \leq d$, outputs the center  code word $\bC$
		of the distribution $\D^{\ERR}_{\bC}$, with a probability of failure 
		\begin{equation*}
			\P_{\fail}
			\leq 
			2^{-(\ell+1)(\dm - d)}\prod_{p\in\cP(\Lambda)}\left(\frac{1 - 1/p^{\ell + \nu_p(\Lambda)}}{1 - 1/p^{\ell}}\right).
		\end{equation*} 
	\end{theorem}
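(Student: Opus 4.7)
The plan is to turn the failure event into a union bound over ``bad'' short lattice vectors and then to estimate, for each candidate, the probability that the random error $\bE$ makes it lie in $\L$. By Lemma~\ref{lm:algofailurecondition} a failure forces $v_s \notin v_{\bC}\Z$; since $v_s$ is the first vector of a reduced basis, it is $\L$-reduced by Lemma~\ref{lm:Lreducedbasis}, and by construction it belongs to $S_{\bR} = S_{\bR, 2^{d}\beta}$. The starting point is then
\begin{equation*}
    \P_{\fail}
    \leq
    \sum_{\varphi = 1}^{\lceil 2^{d}\beta G\rceil - 1}
    \P_{\bE}\!\Bigl[
    \exists\, (\psi_{1},\ldots,\psi_{\ell})\in\Z^{\ell},\ |\psi_{i}|<2^{d}\beta F,\ (\varphi,\psi)\in\L\setminus v_{\bC}\Z
    \Bigr],
\end{equation*}
using that $S_{\bR}$ imposes $0<\varphi<2^{d}\beta G$ (after the sign flip allowed by the algorithm) and that $\L$-reducedness removes proper multiples of $v_{\bC}$.

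For fixed $\varphi$, the lattice condition reads $\varphi E_{i}\equiv \psi_{i}-\varphi C_{i}\pmod{N}$ for every $i$. Writing $E_{i}=Y\,E_{i}'$ with $(E_{i}')_{i}$ uniform on $\Omega_{\Lambda}$, whose cardinality is computed by Lemma~\ref{lm:EulerFormula}, this system decouples prime by prime: at each $p\in\cP(\Lambda)$ it either forces a divisibility condition on $\psi_{i}-\varphi C_{i}$ or determines $E_{i}'$ modulo $p^{\nu_{p}(\Lambda)-\min(\nu_{p}(\varphi),\nu_{p}(\Lambda))}$. Consequently the admissible $\ell$-tuples in $(\Z/\Lambda\Z)^{\ell}$ number at most $\gcd(\varphi,\Lambda)^{\ell}$, while the count of compatible $\psi$ in the window $|\psi_{i}|<2^{d}\beta F$ scales like $(2^{d+1}\beta F)^{\ell}\gcd(\varphi,N)^{\ell}/N^{\ell}$ via the same divisibility.

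Grouping $\varphi$ by $e\coloneq\gcd(\varphi,\Lambda)$ reduces the outer sum to a divisor sum in $e$, which is handled by Lemma~\ref{lm:sumOverDivisorsINT}, the technical replacement for the squarefree argument of~\cite{abbondati2024decoding}. After simplification this delivers the numerator $\prod_{p\in\cP(\Lambda)}(1-1/p^{\ell+\nu_{p}(\Lambda)})$, while division by $|\Omega_{\Lambda}|=\Lambda^{\ell}\prod_{p\in\cP(\Lambda)}(1-1/p^{\ell})$ reconstructs the full product displayed in the statement. The remaining numerical factor $(2^{d}\beta G)(2^{d+1}\beta F)^{\ell}/N^{\ell}$ then collapses, by the definition~\eqref{dmax} of $\dm$, into $2^{-(\ell+1)(\dm - d)}$, the constants $3\beta$ and the ceilings being absorbed there. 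The hard part is the prime-by-prime bookkeeping of the second paragraph: one has to track simultaneously how $\nu_{p}(\varphi)$ enlarges the fibre of admissible $E_{i}'$ and shrinks the count of compatible $\psi_{i}$ in the small window. Only once these two opposing effects are balanced does the $\varphi$-sum assemble cleanly through Lemma~\ref{lm:sumOverDivisorsINT}, and only then does the gcd constraint defining $\Omega_{\Lambda}$ produce the ratio $(1-1/p^{\ell+\nu_{p}(\Lambda)})/(1-1/p^{\ell})$ appearing in the final product.
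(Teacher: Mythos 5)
Your approach departs from the paper's in a way that creates real gaps. The paper's proof hinges on Lemma~\ref{BaseLemmaRN}: it introduces the set $\SE \subseteq \Z/\Lambda\Z$ of residues $\varphi$ for which $g\varphi E_i'$ falls in a small window modulo $\Lambda$, and shows that $\SE=\{0\}$ already forces $S_{\bR}\subseteq v_{\bC}\Z$. This lets it union-bound over the $\Lambda - 1 \le 2^d - 1$ nonzero residues of $\Z/\Lambda\Z$ (Equation~\eqref{fstUpperBoundProof}), then group them by $\eta=\gcd(\varphi,\Lambda)$ via Lemma~\ref{lm:probagpe} and close with Lemma~\ref{lm:sumOverDivisorsINT}. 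You instead union-bound directly over the $\approx 2^d\beta G$ possible integer first coordinates $\varphi$ of a short vector, and propose to count, prime by prime, the admissible $\bE$ for each. That is a genuinely different decomposition, and it bypasses the main structural simplification of the paper.

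The decomposition you chose leaves two concrete holes. First, you acknowledge but do not perform the central balancing step (``how $\nu_p(\varphi)$ enlarges the fibre of admissible $E_i'$ and shrinks the count of compatible $\psi_i$''). This is exactly what Lemma~\ref{lm:probagpe} does in the paper, and it relies crucially on restricting to $\eta\le B\Lambda$ (the probability is zero otherwise), on the clean bound $\#\Z_{\Lambda/\eta,B\Lambda/\eta}\le 3B\Lambda/\eta$, and on Constraint~\ref{c_3}; none of these appear in your sketch. Second, your claimed residual numerical factor is arithmetically wrong: $(2^d\beta G)(2^{d+1}\beta F)^\ell/N^\ell = 2^{(\ell+1)d}\,\beta\, G\,(2\beta F/N)^\ell$, whereas $2^{-(\ell+1)(\dm-d)} = 2^{(\ell+1)d}(6FG\beta/N)^\ell$; these differ by a factor $3^\ell G^{\ell-1}/\beta$, which is not $1$ in general and not a mere ``ceiling'' to absorb. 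The discrepancy suggests the per-$\varphi$ count in your second paragraph is off by exactly the contribution the paper obtains from the gcd-grouping (the $\phi(\Lambda/\eta)$ multiplicity) and the $3B\Lambda/\eta$ cardinality bound. You should either adopt the paper's reduction to $\SE\subseteq\Z/\Lambda\Z$ (which shrinks the union bound from $2^d\beta G$ to $\Lambda$ terms) or carry out the full prime-by-prime count with the $\eta\le B\Lambda$ restriction before claiming the constants collapse.
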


	Here is our second result (whose proof will be given at the end of Subsection~\ref{RN_ERR2})
	relative to the failure probability with respect to the error model $\ERRd$.
	\begin{theorem}
		\label{thm:main2}
		Decoding Algorithm~\ref{algoSRN} on input distance parameter $d\le \bar{d}$ and a random received
		word $\bR$ uniformly distributed in $\D^{\ERRd}_{\bC}$, for some code word $\bC \in \SRN_\ell(N;F,G)$
		and maximal error locator $\Lambda_m$ such that $\log \Lambda_m \leq d$, outputs the center
		code word $\bC$ of the distribution $\D^{\ERRd}_{\bC}$, with a probability of failure 
		\begin{equation*}
			\P_{\fail}
			\leq 
			2^{-(\ell+1)(\dm - d)}
			\prod_{p\in\cP(\Lambda_m)}
			\left(\frac{1 - 1 / p^{\ell + \nu_p(\Lambda_m)}}{1 - 1 / p^{\ell + 1}}\right).
		\end{equation*} 
	\end{theorem}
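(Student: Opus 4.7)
The plan is to derive Theorem~\ref{thm:main2} from Theorem~\ref{thm:main1} by partitioning the error set $\ERRd$ according to the actual error locator. Any error in $\ERRd$ with actual locator $\Lambda$ must satisfy $\Lambda \mid \Lambda_m$ directly from the defining valuation inequalities of Error Model~2, so $\ERRd$ decomposes as a disjoint union $\ERRd = \bigsqcup_{\Lambda\mid\Lambda_m}\ERR$. Since the random error is uniform on $\ERRd$, the law of total probability gives
\[
\P_{\fail} = \sum_{\Lambda\mid\Lambda_m}\frac{\#\ERR}{\#\ERRd}\cdot q_{\Lambda},
\]
where $q_\Lambda$ denotes the failure probability conditioned on the actual error locator being $\Lambda$, i.e.\ precisely the Error Model~1 probability bounded by Theorem~\ref{thm:main1}.

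Next, Lemma~\ref{lm:EulerFormula} (with $\eta = 1$) gives $\#\ERR = \Lambda^\ell\prod_{p\mid\Lambda}(1 - p^{-\ell})$, and a direct count shows $\#\ERRd = \Lambda_m^\ell$. Since every divisor $\Lambda$ of $\Lambda_m$ satisfies $\log\Lambda \leq \log\Lambda_m \leq d \leq \bar{d}$, Theorem~\ref{thm:main1} applies to every term. Substituting the bound and using $2^{-(\ell+1)(\bar d - \log\Lambda)} = 2^{-(\ell+1)\bar d}\,\Lambda^{\ell+1}$, the factors $(1-p^{-\ell})$ in numerator and denominator cancel and the estimate simplifies to
\[
\P_{\fail} \leq \frac{2^{-(\ell+1)\bar d}}{\Lambda_m^\ell}\sum_{\Lambda\mid\Lambda_m}\Lambda^{2\ell+1}\prod_{p\mid\Lambda}\bigl(1 - p^{-\ell - \nu_p(\Lambda)}\bigr).
\]

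The summand is multiplicative across coprime divisors, so the sum factors as $\prod_{p\mid\Lambda_m} L_p$ with $L_p = 1 + \sum_{\beta=1}^{\alpha_p}p^{\beta(2\ell+1)}(1 - p^{-\ell-\beta})$ and $\alpha_p = \nu_p(\Lambda_m)$. Combined with $2^{(\ell+1)d} \geq \Lambda_m^{\ell+1}$ (which is exactly where the hypothesis $\log\Lambda_m \leq d$ enters), the desired bound reduces to the local inequality
\[
L_p \leq p^{\alpha_p(2\ell+1)}\cdot\frac{1 - p^{-\ell-\alpha_p}}{1 - p^{-\ell-1}}
\]
for every $p \mid \Lambda_m$. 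The hard part is this local inequality, which the plan is to prove by induction on $\alpha_p$: after the rewriting $p^{\beta(2\ell+1)}(1 - p^{-\ell-\beta}) = p^{(2\beta-1)\ell}(p^{\ell+\beta}-1)$, the inductive step telescopes and reduces to an elementary inequality that follows from $p^\ell \geq 2$. This per-prime bound is a natural candidate for the technical Lemma~\ref{lm:sumOverDivisorsINT} announced earlier in the paper.
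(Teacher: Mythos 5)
Your overall strategy matches the paper's: decompose $\ERRd$ by the actual error locator $\Lambda\mid\Lambda_m$, apply the law of total probability with the weights $\#\ERR/\#\ERRd = (\Lambda/\Lambda_m)^\ell\prod_{p\mid\Lambda}(1-p^{-\ell})$, and convert the resulting sum over divisors to a product over primes. However, there is a genuine gap in the key step where you bound the conditional failure probability $q_\Lambda$.

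Theorem~\ref{thm:main1} bounds the failure probability by $2^{-(\ell+1)(\dm - d)}\prod(\cdots)$, where $d$ is the \emph{algorithm's input distance parameter}, not $\log\Lambda$. You substitute $\log\Lambda$ for $d$, writing $q_\Lambda\le 2^{-(\ell+1)(\dm-\log\Lambda)}\prod(\cdots) = 2^{-(\ell+1)\dm}\Lambda^{\ell+1}\prod(\cdots)$. Since $\log\Lambda\le d$, this is strictly tighter than what Theorem~\ref{thm:main1} provides, and it is not justified by the analysis. The correct $\Lambda$-dependent bound comes from Lemma~\ref{lm:boundfailureprobability}, namely $q_\Lambda\le(3B)^\ell\Lambda\prod_{p\in\cP(\Lambda)}\bigl(\tfrac{1-p^{-\ell-\nu_p(\Lambda)}}{1-p^{-\ell}}\bigr)$ with $B=2^{d}\beta\tfrac{2FG}{N}$ depending on the algorithm's $d$ (because $d$ controls the lattice threshold $S_{\bR,2^d\beta}$; the algorithm cannot use the unknown error locator $\Lambda$ there). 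This is \emph{linear} in $\Lambda$, not proportional to $\Lambda^{\ell+1}$. As a consequence, the resulting sum should carry $\Lambda^{\ell+1}$, not $\Lambda^{2\ell+1}$ as in your proposal. (Your route does coincidentally lead back to the stated bound after applying $\Lambda_m\le 2^d$, because the unjustified intermediate bound is \emph{smaller} than the correct one; but the intermediate inequality you rely on is not established anywhere, and citing Theorem~\ref{thm:main1} for it is a misquotation.) Using Theorem~\ref{thm:main1}'s actual (constant-in-$\Lambda$) bound instead does not work: the per-prime inequality one would then need, $1+\sum_{\beta=1}^{\alpha}p^{\beta\ell}(1-p^{-\ell-\beta})\le p^{\alpha\ell}\tfrac{1-p^{-\ell-\alpha}}{1-p^{-\ell-1}}$, already fails for $\alpha=1$.

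A secondary misunderstanding: Lemma~\ref{lm:sumOverDivisorsINT} is not an inequality. It is a pure combinatorial identity turning a sum over divisors into a product over primes, and the paper uses it exactly as an identity; the inequalities appearing afterward (bounding $1-p^{-\ell-k}$ by $1-p^{-\ell-\nu_p(\Lambda_m)}$, then summing the geometric series and using $\nu_p(\Lambda_m)\ge 1$) are carried out inline in the proof of Theorem~\ref{thm:main2}. To repair your argument, replace the invocation of Theorem~\ref{thm:main1} by the intermediate bound from Lemma~\ref{lm:boundfailureprobability}, which yields the sum $\sum_{\Lambda\mid\Lambda_m}\Lambda^{\ell+1}\prod_{p\mid\Lambda}(1-p^{-\ell-\nu_p(\Lambda)})$; the rest of your plan (sum-to-product, per-prime estimate, $\Lambda_m\le 2^d$) then goes through as in the paper.
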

	This failure probability bound improves the one of decoding interleaved Chinese remainder codes
	$\P_{\fail}\leq 2^{-(\ell+1)(\dm - d)} + (\exp(n/p_1^{\ell-1})-1)$    
	which was only available in the special case of non-negative ($0 \le f_i$) integer code words
	($G=2$) without multiplicities ($\lambda_j =1$) \cite[Theorem 3.5]{abbondati2023probabilistic}. We
	remark that both results reduce to~\cite[Theorem 17 and 18]{abbondati2024decoding} respectively,
	when there are no multiplicities in the modular reductions of the code, \ie\ when $N$ is
	square-free.
	
	We note that in both theorems the product over the primes dividing the error locator is close to
	one; indeed we can prove the following lemma.
	
	\begin{lemma}
		Assuming that $p_1 = \min_i\{p_i\}$, given $\eta|N$ divisor of $N$ and $f(\ell)$ any function of
		the parameter $\ell>0$, we have that 
		\begin{equation*}
			\prod_{p\in \cP(\eta)}
			\left(
			\frac{1 - 1 / p^{\ell + \nu_{p}(\eta)}}{1 - 1 /p^{f(\ell)}}
			\right)
			\leq
			\frac{1}{1 - n / p_1^{f(\ell)}}.
		\end{equation*}
	\end{lemma}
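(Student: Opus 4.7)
The plan is straightforward: I will first discard the numerator by using that $1 - 1/p^{\ell + \nu_p(\eta)} \leq 1$, which reduces the statement to bounding the product of the reciprocals $\prod_{p \in \cP(\eta)} (1 - 1/p^{f(\ell)})^{-1}$. Then I will apply the Weierstrass product inequality to convert the product of linear factors into a single linear expression in the sum.

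More concretely, I would proceed as follows. Write $\cP(\eta) = \{q_1,\dots,q_k\}$ with $k \le n$ (since $\eta \mid N$ and $N$ has exactly $n$ prime divisors) and $q_i \ge p_1$ for every $i$. For each $i$, set $a_i \coloneq 1/q_i^{f(\ell)} \in [0,1)$. The Weierstrass product inequality yields
\begin{equation*}
    \prod_{i=1}^{k} (1 - a_i) \;\ge\; 1 - \sum_{i=1}^{k} a_i,
\end{equation*}
and, provided the right-hand side is positive, inverting gives
\begin{equation*}
    \prod_{i=1}^{k} \frac{1}{1 - a_i} \;\le\; \frac{1}{1 - \sum_{i=1}^{k} a_i}.
\end{equation*}

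Finally I bound the sum on the right using $q_i \ge p_1$ and $k \le n$:
\begin{equation*}
    \sum_{i=1}^{k} \frac{1}{q_i^{f(\ell)}} \;\le\; \frac{k}{p_1^{f(\ell)}} \;\le\; \frac{n}{p_1^{f(\ell)}},
\end{equation*}
and this closes the chain of inequalities. The only subtle point is the positivity requirement $n/p_1^{f(\ell)} < 1$ needed for the inversion step; since otherwise the stated right-hand side is non-positive (in fact the bound is vacuous), the statement is implicitly meaningful only in that regime, so this is not really an obstacle but should be noted. Overall this is a short proof — there is no real hard step, only a careful ordering of the elementary inequalities above.
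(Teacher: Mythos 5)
Your proof is correct and follows essentially the same route as the paper: drop the numerator using $1 - 1/p^{\ell+\nu_p(\eta)} \le 1$, lower-bound $\prod_{p\in\cP(\eta)}(1-1/p^{f(\ell)})$ by $1 - n/p_1^{f(\ell)}$, and invert. The only cosmetic difference is that you apply the Weierstrass inequality directly to the distinct terms and then bound the sum, whereas the paper first replaces each factor by $1-1/p_1^{f(\ell)}$ and then applies Bernoulli's inequality $(1-x)^n \ge 1-nx$; these are the same estimate in different order, and your remark about the vacuity of the bound when $n/p_1^{f(\ell)}\ge 1$ is a fair (if unstated in the paper) observation.
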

	\begin{proof}
		We start noticing that for each factor in the product we have
		\begin{equation*}
			\frac{1 - 1 / p^{\ell + \nu_{p}(\eta)}}{1 - 1 /p^{f(\ell)}}
			\leq
			\frac{1}{1 - 1 / p_1^{f(\ell)}}
		\end{equation*}
		Furthermore $\prod_{p\in \cP(\eta)}(1 - 1/p^{f(\ell)}) \geq (1 - 1/p_1^{f(\ell)})^n \geq 1 - n/p_1^{f(\ell)}$, from which the statement follows.
	\end{proof}

	\begin{remark}
		We give a scenario which highlights how Theorem~\ref{thm:main2} can be used in practice. Assume
		that a code is fixed such that $\log(N/(6FG\beta)) = 20$, so that with an interleaving parameter
		$\ell = 4$, one has $\dm = 16$. If one wishes to ensure that the failure probability is less
		than a target probability of $2^{-30}$, then Theorem~\ref{thm:main2} states that choosing the
		distance parameter of the decoder $d = 10$, ensures that for any random error uniformly
		distributed on a maximal error locator $\Lambda_m$ such that $\log \Lambda_m \le d$, the failure
		probability is less than $2^{-30}$.	    
	\end{remark}

	\subsection{Analysis of the decoding failure probability}
	\label{Sec.AnalysisRN}
	
	For any $\bR$ uniformly distributed in $\D^{\ERR}_{\bC}$ (as in Theorem~\ref{thm:main1}), Constraint~\ref{c_1} is
	satisfied. Thus, thanks to Lemma~\ref{Lemma1}, we can assume that $v_s\in  S_{\bR}=
	S_{\bR,2^{d}\beta}$.
	
	\subsubsection{Decoding failure probability with respect to the first error model} 
	\label{RN_ERR1}
	If Algorithm~\ref{algoSRN} fails, then $v_s\notin v_{\bC}\Z$ (see
	Lemma~\ref{lm:algofailurecondition}). Note that the converse is not necessarily true, for example if
	there exists another close code word $\bC'\neq \bC$ with $d(\bC',\bR) \le d$ and if the SVP solver
	outputs $v_s=v_{\bC'}$.
	
	Nevertheless, we can upper bound the failure probability of the algorithm as
	$\P_{\fail}\leq\P(S_{\bR}\not\subseteq  v_{\bC}\Z)$. We introduce some notations: for
	$C\in\mathbb{R}_{>0}$ we let $\Z_{m,C} \coloneq \left\{ a \in \Z/m\Z : |a \crem m| \leq C \right\}$,
	where $a \crem m$ is the central remainder of $a$ modulo $m$, that is the unique
	representative of $a$ modulo $m$ within the interval  $\left[-\lc m/2\rc+1,\lf m/2\rf \right]$. Note
	that this set has cardinality $\# \Z_{m,C} \leq 2 \lfloor C \rfloor + 1$.
	Let $\SE$ be the set $\SE\coloneq \left\{ \varphi\in\Z/\Lambda\Z : \forall i, \ g\varphi E'_i \in
	\Z_{\Lambda,B \Lambda} \right\}$ for $B\coloneq2^{d}\beta\frac{2FG}{N}$.

	We need a new constraint to prove the following lemma.
	\begin{constraint}
		\label{c_3}
		Algorithm~\ref{algoSRN} parameters satisfy $B < 1$.
	\end{constraint}
	
	\begin{lemma}
		\label{BaseLemmaRN}
		If Constraint~\ref{c_3} is satisfied,  
		$\SE=\{0\} \Rightarrow S_{\bR}\subseteq  v_{\bC}\Z$.
	\end{lemma}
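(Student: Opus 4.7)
The plan is to take an arbitrary $v = (\varphi, \psi_1, \dots, \psi_\ell) \in S_{\bR}$ and show $v \in v_{\bC} \Z$. The first step is to combine the lattice membership $\psi_i \equiv \varphi R_i \pmod{N}$ with $R_i = C_i + E_i$, $E_i = Y E_i'$ (where $Y = N/\Lambda$) and $g C_i \equiv f_i \pmod{N}$, in order to derive the key congruence
\begin{equation*}
g \psi_i - \varphi f_i \;\equiv\; \varphi g Y E_i' \pmod{N}
\end{equation*}
for every $i$. Since $Y \mid N$, this forces $Y$ to divide the integer $h_i \coloneq g\psi_i - \varphi f_i$; letting $h_i' \coloneq h_i/Y$ we obtain the integer congruence $h_i' \equiv g\varphi E_i' \pmod{\Lambda}$.

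Next, I would use the size constraints defining $S_{\bR}$ (namely $0 < \varphi < 2^d \beta G$ and $|\psi_i| < 2^d \beta F$) together with $0 < g < G$ and $|f_i| < F$ to estimate
\begin{equation*}
|h_i'| \;=\; \frac{|g\psi_i - \varphi f_i|}{Y} \;<\; \frac{2 \cdot 2^d \beta F G}{Y} \;=\; B\Lambda.
\end{equation*}
Because the central representative of any class in $\Z/\Lambda\Z$ has minimal absolute value among its integer lifts, $|h_i' \crem \Lambda| \le |h_i'| < B\Lambda$, so that $[\varphi]_\Lambda \in \SE$. The hypothesis $\SE = \{0\}$ then yields $\Lambda \mid \varphi$, and we write $\varphi = k\Lambda$.

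Plugging $\varphi = k\Lambda$ back into the lattice relation, the identity $\Lambda E_i = N E_i' \equiv 0 \pmod N$ kills the error term and leaves $g\psi_i \equiv k\Lambda f_i \pmod N$. The same triangle inequality as above now gives $|g\psi_i - k\Lambda f_i| < 2 \cdot 2^d \beta FG = BN$, and Constraint~\ref{c_3} ($B<1$) makes this strictly less than $N$. A congruence modulo $N$ between integers of absolute value less than $N$ is an equality, hence $g\psi_i = k\Lambda f_i$ in $\Z$.

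To finish, I would observe that $g$ divides each $k\Lambda f_i$, so the reducedness condition $\gcd(f_1,\dots,f_\ell,g)=1$ forces $g \mid k\Lambda$. Combined with $\gcd(g,\Lambda)=1$, which follows from $\gcd(g,N)=1$ together with $\Lambda \mid N$, this gives $g \mid k$; writing $k = cg$ produces $\varphi = cg\Lambda$ and $\psi_i = c\Lambda f_i$, \ie $v = c \cdot v_{\bC} \in v_{\bC}\Z$. The main delicate point I expect is the transfer from the integer bound $|h_i'| < B\Lambda$ to membership in $\Z_{\Lambda, B\Lambda}$, which must go through the central representative (so that Constraint~\ref{c_3} appears naturally in both the modular passage and the final integer equality), together with the gcd bookkeeping at the end which crucially leverages both reducedness of $\bf/g$ and the coprimality $\gcd(g,N)=1$ dictated by the code definition.
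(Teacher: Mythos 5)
Your proof is correct and follows essentially the same route as the paper's: derive the congruence $g\psi_i - f_i\varphi \equiv g\varphi Y E_i' \pmod N$, divide by $Y$ to land in $\Z/\Lambda\Z$, bound the central remainder by $B\Lambda$ to place $\varphi$ in $\SE$, use $\SE=\{0\}$ plus Constraint~\ref{c_3} to upgrade the congruence to the integer equality $g\psi_i = f_i\varphi$, and finish with the gcd argument from reducedness and $\gcd(g,N)=1$. The only cosmetic difference is that you re-derive the final integer equality modulo $N$ with the bound $BN<N$, whereas the paper works directly with $\psi_i'$ modulo $\Lambda$ and the bound $B\Lambda<\Lambda$; these are interchangeable.
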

	\begin{proof}
		Let $(\varphi,\psi_1,\ldots,\psi_\ell)\in S_{\bR}=S_{\bR,2^d \beta}$. We know that for all $1 \le i \le \ell$,
		$g\varphi E_i  = g\varphi\left(R_i-\frac{f_i}{g}\right)  = g\psi_i-f_i\varphi \bmod N.$ Since
		$Y|E_i$ and $Y|N$, thanks to the above, we have that $Y|(g\psi_i-f_i\varphi)$, and we define the
		integer $\psi'_i=\frac{g\psi_i-f_i\varphi}{Y}$. Dividing the above modular equation by $Y$ we
		obtain $g\varphi E'_i = \psi'_i \bmod \Lambda$. Therefore,
		\begin{equation*}
			|g\varphi E'_i \crem \Lambda|\leq|\psi'_i|\leq\frac{|g\psi_i|+|f_i\varphi|}{Y}< 2^{d}\beta\frac{2FG}{N} \Lambda = B \Lambda
		\end{equation*} 
		which means that $\varphi\in \SE$, thus thanks to the hypothesis $\SE=\{0\}$, we get
		$\Lambda|\varphi$, thus $g\varphi E'_i = \psi'_i = 0 \bmod \Lambda.$
		Thanks to Constraint~\ref{c_3} and the above inequality we can conclude that
		$|\psi'_i|<\Lambda$, therefore $\psi'_i=0$ in $\Z$. Which means that 
		\begin{equation}
			\label{eq:samefractions}
			\forall i=1,\ldots,\ell, \ g\psi_i = f_i\varphi.
		\end{equation}
		
		Since $\gcd(f_1,\ldots,f_\ell, g)=1$, Equations~\eqref{eq:samefractions} imply that $g|\varphi$.
		We have already seen that $\Lambda|\varphi$, so $g\Lambda|\varphi$ because $g$ and $\Lambda$ are
		coprime. Plugging $\varphi = a g\Lambda$ for some $a \in \Z$ into
		Equations~\eqref{eq:samefractions}, we deduce $g\psi_i = f_i\varphi = f_i a g\Lambda$, so
		$\psi_i = a \Lambda f_i$ for all $i$. We have shown $(\varphi,\psi_1,\ldots,\psi_\ell)\in
		(\Lambda g,\Lambda f_1,\ldots,\Lambda f_\ell)\Z$.
	\end{proof}
	Thanks to the above lemma we can upper bound the failure probability of
	Algorithm~\ref{algoSRN} with $\P_{\fail}\leq \P(\SE\ne\{0\}).$
	
	\begin{remark}\label{rmk:unicity}
		We note that, when the distance parameter $d$ of the decoding algorithm is
		below the unique decoding capacity of SRN codes, \ie
		$d<\log(\sqrt{N/(2FG)})$, we must have that $B\Lambda< \beta$ since
		$\Lambda\le 2^d$. As pointed out in Remark~\ref{rem:failevenifbetais1}, it
		is not because of the approximation factor that Algorithm~\ref{algoSRN}
		might fail, thus, at the cost of using an exact SVP solver, \ie\ a
		subroutine $\mathcal{ASVP_\infty}$ returning the shortest vector of
		$\bar{\L}$, we can assume $\beta = 1$. 
		Note that polynomial  time exact SVP solver exist for constant dimension $\ell$.
		Under such circumstance we therefore
		have $\Z_{\Lambda, B\Lambda} = \Z_{\Lambda, 0} = \{0\}$, thus estimating the
		failure probability of Algorithm~\ref{algoSRN} by studying $\P(\SE\ne\{0\})$
		yields the expected unique decoding result when $d<\log(\sqrt{N/(2FG)})$.
	\end{remark}

	In order to estimate $\P(\SE\ne\{0\})$, we need the following preliminary result:

	\begin{lemma}
		\label{lm:probagpe}
		If $\varphi\in\Z$ is such that $\gcd(\varphi,\Lambda)=\eta = \prod_{j\in\xi}p_j^{\eta_j}$, then
		for the probability distribution of error model $\ERR$, we have
		$$
		\P\left(
		\forall i, \ g \varphi E_i' \in \Z_{\Lambda,B\Lambda}
		\right)
		\leq
		\frac{\left(\# \Z_{\Lambda/\eta,B\Lambda/\eta}\right)^\ell}{\left(\frac{\Lambda}{\eta}\right)^\ell\prod_{p\in\cP\left(\frac{\Lambda}{\eta}\right)} 
			\left(1 - 1 / p^\ell \right)            
		}.
		$$

		If we also suppose $B < \eta/\Lambda < 1
		$, then 
		$
		\P\left(
		\forall i, \ g \varphi E_i' \in \Z_{\Lambda,B\Lambda}
		\right) = 0.
		$
	\end{lemma}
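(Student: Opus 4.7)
The plan is to recast the condition $\forall i,\, g\varphi E_i' \in \Z_{\Lambda, B\Lambda}$ as a constraint on the reductions $F_i \coloneq E_i' \bmod (\Lambda/\eta)$, and then to count the admissible tuples in $\Omega_\Lambda$ via the Chinese Remainder Theorem.

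First, since $\gcd(g,N)=1$ I have $\gcd(g\varphi,\Lambda) = \gcd(\varphi,\Lambda) = \eta$, so I can write $g\varphi = \eta w$ with $\gcd(w,\Lambda/\eta) = 1$ (so that $[w]_{\Lambda/\eta}$ is invertible). A direct calculation yields $g\varphi E_i' \crem \Lambda = \eta\cdot(w E_i' \crem (\Lambda/\eta))$, so the condition $|g\varphi E_i' \crem \Lambda|\leq B\Lambda$ translates to $w E_i' \bmod (\Lambda/\eta) \in \Z_{\Lambda/\eta, B\Lambda/\eta}$, which using the invertibility of $w$ amounts to $F_i \in w^{-1}\Z_{\Lambda/\eta,B\Lambda/\eta}$, a subset of $\Z/(\Lambda/\eta)\Z$ of cardinality $\#\Z_{\Lambda/\eta, B\Lambda/\eta}$.

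Next, I would use the CRT factorization $\Omega_\Lambda \cong \prod_{p\in\cP(\Lambda)}\Omega_{p^{\nu_p(\Lambda)}}$ to count the $(E_i')\in\Omega_\Lambda$ whose reductions $(F_i)$ lie in $(w^{-1}\Z_{\Lambda/\eta,B\Lambda/\eta})^\ell$. For a fixed target $(f_i) \in (\Z/(\Lambda/\eta)\Z)^\ell$, the fiber over $(f_i)$ in $\Omega_\Lambda$ is empty unless $(f_i)\in \Omega_{\Lambda/\eta}$ (indeed, if some $p\mid\Lambda/\eta$ divides every $f_i$, then it divides every $E_i'$, violating $\Omega_\Lambda$), in which case the fiber has the $(f_i)$-independent size $\prod_{p\in\cP(\Lambda/\eta)} p^{\nu_p(\eta)\ell} \cdot \prod_{p\in\cP(\eta)\setminus\cP(\Lambda/\eta)} \#\Omega_{p^{\nu_p(\Lambda)}}$. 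Summing over at most $(\#\Z_{\Lambda/\eta, B\Lambda/\eta})^\ell$ admissible targets, dividing by $\#\Omega_\Lambda$ as given by Lemma~\ref{lm:EulerFormula}, and canceling the common factors coming from primes in $\cP(\eta)\setminus\cP(\Lambda/\eta)$ produces exactly the stated bound, after identifying $\prod_{p\in\cP(\Lambda/\eta)} p^{\nu_p(\Lambda)-\nu_p(\eta)} = \Lambda/\eta$.

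For the second assertion, the hypothesis $B < \eta/\Lambda < 1$ gives $B\Lambda/\eta < 1$, so $\Z_{\Lambda/\eta, B\Lambda/\eta} = \{0\}$ and every $F_i$ must be $0$. Since $\Lambda/\eta > 1$, any prime $p\mid\Lambda/\eta$ then divides every $E_i'$, contradicting $(E_i')\in\Omega_\Lambda$, so the event is empty and the probability is $0$. The main obstacle I anticipate is precisely the CRT bookkeeping in the first part: obtaining a denominator indexed by $\cP(\Lambda/\eta)$ rather than the larger $\cP(\Lambda)$ hinges on the fact that the contribution of the ``hidden'' primes in $\cP(\eta)\setminus\cP(\Lambda/\eta)$ to the fiber size exactly cancels the corresponding factors of $\#\Omega_\Lambda$, which forces one to distinguish these primes (where the linear form $g\varphi\,\cdot$ is identically zero modulo $p^{\nu_p(\Lambda)}$) from those in $\cP(\Lambda/\eta)$ throughout the argument.
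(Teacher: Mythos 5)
Your proof is correct and follows essentially the same approach as the paper: factor out $\eta$ from $g\varphi$ so that the condition on $g\varphi E_i' \crem \Lambda$ becomes one on $E_i' \bmod (\Lambda/\eta)$ lying in a translate of $\Z_{\Lambda/\eta,B\Lambda/\eta}$, then count using the CRT factorization of $\Omega_\Lambda$ together with Lemma~\ref{lm:EulerFormula}. The only difference is presentational: where the paper twice invokes the fact that multiplication by a unit preserves the uniform distribution (first for $g$, then for $\varphi/\eta$ modulo $\Lambda/\eta$), you instead carry out an explicit fiber count over the reduction map $\Omega_\Lambda \to (\Z/(\Lambda/\eta)\Z)^\ell$; this makes the cancellation of the primes in $\cP(\eta)\setminus\cP(\Lambda/\eta)$ visible and arguably clearer than the paper's passing reference to $\bar{\Omega}_{\Lambda,\eta}$.
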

	\begin{proof}
		Since $\gcd(g,N)=1$, the distributions of the vectors $(\varphi E_1',\ldots,\varphi E_\ell')$
		and $(g\varphi E_1',\ldots,g\varphi E_\ell')$ over the sample space $$\Omega_\Lambda \coloneq \{
		(F_i)_{1\le i \le \ell} \in (\Z/\Lambda\Z)^\ell : \gcd(F_1,\dots,F_\ell,\Lambda) = 1\},$$ are
		identical. Thus, we have 
		$
		\P( \forall i, \ g \varphi E_i' \in \Z_{\Lambda,B\Lambda} )
		=
		\P( \forall i, \ \varphi E_i' \in \Z_{\Lambda,B\Lambda} )
		$.
		
		Let us now show that $\varphi E_i' \in \Z_{\Lambda,B\Lambda} \Leftrightarrow (\varphi/ \eta) E_i' \in
		\Z_{\Lambda/\eta,B\Lambda/ \eta}$: The first condition can be rephrased as 
		$
		\varphi E_i'=a_i\Lambda + c_i
		$
		with $a_i,c_i\in\Z$ and $|c_i|\le B\Lambda$. But then we must have that $ \eta|c_i$. Thus, we can divide
		the above by $\eta$ and obtain
		$
		(\varphi/ \eta) E_i'=a_i\Lambda/ \eta + c_i/ \eta
		$
		with $|c_i/ \eta|\leq B\Lambda/ \eta$, which is equivalent to $(\varphi/ \eta) E_i' \in \Z_{\Lambda/
			\eta,B\Lambda/ \eta}$.
		
		When $B\Lambda < \eta$, the previous condition implies that $(\varphi/ \eta) E_i' = 0 \bmod \Lambda/
		\eta$ for all $i$. Since $\varphi/ \eta$ is coprime with $\Lambda/ \eta$, we have $E_i' = 0
		\bmod \Lambda/ \eta$ for all $i$. If $ \eta < \Lambda$, this is in contradiction with
		$\gcd(E_1',\dots,E_\ell',\Lambda)=1$ for all random matrix $\bE$. Therefore, the associated
		probability $\P( \forall i, \ g \varphi E_i' \in \Z_{\Lambda,B\Lambda})$ is zero. 
		
		We have seen that 
		$$\P( \forall i, \ g \varphi E_i' \in \Z_{\Lambda,B\Lambda} ) =\P(\{ \bE = (\be_j)_{1 \le j \le n}
		: \forall i, \ (\varphi/ \eta) E_i' \in \Z_{\Lambda/ \eta,B\Lambda/ \eta} \}),
		$$ 
		and since $\gcd\left(\Lambda/\eta,\varphi/\eta\right) = 1$, the above reduces to 
		$$\P(\{ \bE = (\be_j)_{1 \le j \le n} : \forall i, \  E_i' \in \Z_{\Lambda/ \eta,B\Lambda/ \eta} \}).
		$$
		Now, the condition $E_i' \in \Z_{\Lambda/ \eta,B\Lambda/\eta}$ only depends 
		on the columns $(\be'_j)$ of the reduced random matrix for 
		$j \in \xi_{\Lambda/ \eta} \coloneq \{ j : \eta_j < \nu_{p_j}(\Lambda) \}$.
		These columns are uniformly distributed in the sample space 
		$
		\bar{\Omega}_{\Lambda,\eta}.
		$
		
		Therefore, letting
		$
		\Upsilon \coloneq \Bigl\{\bE = (\be_j)_{1 \le j \le n} : \forall i, \  E_i' \in \Z_{\Lambda/ \eta,B\Lambda/ \eta}\Bigr\}
		$, we note that $\# \Upsilon = (\# \Z_{{\Lambda}/{ \eta},{B\Lambda}/{ \eta}})^\ell$, and we can deduce that our probability equals 
		$$
		\P(\Upsilon) = \frac{ \# (\bar{\Omega}_{\Lambda,\eta} \cap \Upsilon) } { \# \bar{\Omega}_{\Lambda,\eta} }
		\leq 
		\frac{ \# \Upsilon } { \# \bar{\Omega}_{\Lambda,\eta} }.
		$$
		Finally, Lemma~\ref{lm:EulerFormula} tells us that $\# \bar{\Omega}_{\Lambda,\eta} = \left(\frac{\Lambda}{\eta}\right)^\ell\prod_{p\in\cP\left(\frac{\Lambda}{\eta}\right)} 
		\left(1 - 1 / p^\ell \right)$.
	\end{proof}
	Before proving our results we still need the following technical lemma.
	
	\begin{lemma}
		\label{lm:sumOverDivisorsINT}
		Given $\Lambda \in\Z$ and $f(x,y)$ an arbitrary real-valued function of two variables. Then 
		\begin{equation*}
			\sum_{\eta|\Lambda}
			\prod_{p\in\cP(\eta)}
			f(p,\nu_{p}(\eta))
			=
			\prod_{p\in\cP(\Lambda)}
			\left[
			1 + \sum_{k= 1}^{\nu_p(\Lambda)}f(p,k)
			\right]
		\end{equation*}
	\end{lemma}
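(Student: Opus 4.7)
The plan is to prove this via the standard correspondence between divisors of $\Lambda$ and tuples of exponents, combined with the distributivity of multiplication over addition that turns a product of sums into a sum of products.

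First, I would parametrize divisors. Writing $\Lambda = \prod_{p \in \cP(\Lambda)} p^{\nu_p(\Lambda)}$, every divisor $\eta \mid \Lambda$ corresponds uniquely to a tuple $(k_p)_{p \in \cP(\Lambda)}$ with $0 \le k_p \le \nu_p(\Lambda)$, where $k_p = \nu_p(\eta)$. Under this bijection, $\cP(\eta) = \{p \in \cP(\Lambda) : k_p \ge 1\}$, so the summand on the left becomes
\begin{equation*}
  \prod_{p \in \cP(\eta)} f(p,\nu_p(\eta)) = \prod_{p \in \cP(\Lambda)} g(p,k_p), \qquad \text{where } g(p,0) \coloneq 1,\ g(p,k) \coloneq f(p,k) \text{ for } k \ge 1.
\end{equation*}
This rewriting uses the convention that the empty product equals $1$, which matches the divisor $\eta = 1$ contributing the term $1$ on the left.

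Next, I would expand the right-hand side. Noting that $1 + \sum_{k=1}^{\nu_p(\Lambda)} f(p,k) = \sum_{k=0}^{\nu_p(\Lambda)} g(p,k)$, distributing the product over $p \in \cP(\Lambda)$ gives
\begin{equation*}
  \prod_{p \in \cP(\Lambda)} \left[\sum_{k=0}^{\nu_p(\Lambda)} g(p,k)\right] = \sum_{(k_p)_p} \prod_{p \in \cP(\Lambda)} g(p,k_p),
\end{equation*}
where the outer sum ranges over tuples $(k_p)_{p \in \cP(\Lambda)}$ with $0 \le k_p \le \nu_p(\Lambda)$. By the bijection of the first step, this sum equals $\sum_{\eta \mid \Lambda} \prod_{p \in \cP(\eta)} f(p,\nu_p(\eta))$, establishing the claimed equality.

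There is no genuine obstacle here; the only subtle point is keeping track of the empty-product convention so that the divisor $\eta = 1$ (i.e.\ the all-zero tuple) correctly contributes $1$, and ensuring that the two pieces $g(p,0) = 1$ and $g(p,k) = f(p,k)$ for $k \ge 1$ match the indicator behavior of $p \in \cP(\eta)$. Everything else is purely formal.
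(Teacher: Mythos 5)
Your proof is correct and takes essentially the same approach as the paper: expand the right-hand product into a sum over divisors via the bijection between divisors of $\Lambda$ and exponent tuples. The only cosmetic difference is that you parametrize directly by tuples $(k_p)$ allowing $k_p = 0$ (absorbing the $+1$ into $g(p,0)=1$), whereas the paper first picks the subset $S = \cP(\eta)$ and then sums over positive exponents for $p \in S$; these are equivalent.
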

	\begin{proof}
		By expanding the product on the right-hand side we obtain
		\begin{equation*}
			\prod_{p\in\cP(\Lambda)}
			\left[
			1 + \sum_{k= 1}^{\nu_p(\Lambda)}f(p,k)
			\right]
			=
			\sum_{S\subseteq\cP(\Lambda)}
			\sum_{\substack{\left(\eta_p\right)_{p\in S}\\ 1\le\eta_p\le\nu_{p}(\Lambda)}}
			\prod_{p\in S}
			f(p,\eta_p).
		\end{equation*}
		The double sum above corresponds exactly to a single sum over the divisors $\eta$ of $\Lambda$ with $S = \cP(\eta)$ and $\eta_p = \nu_{p}(\eta)$.
	\end{proof}
	
	Rewriting $\{\bE : \SE \neq \{0\} \}$ as 
	$\cup_{\varphi=1}^{\Lambda-1}\{\bE : \varphi \in \SE \}$, we get
	\begin{equation}
		\label{fstUpperBoundProof}
		\P(\SE\ne\{0\})
		\leq
		\sum_{\varphi=1}^{\Lambda-1}
		\P
		\left(
		\forall i, \ g\varphi E_i' \in \Z_{\Lambda, B\Lambda}
		\right)   
		=
		\sum_{\varphi=1}^{\Lambda-1}
		\P
		\left(
		\forall i, \ \varphi E_i' \in \Z_{\Lambda, B\Lambda}
		\right)   
	\end{equation}
	where the last equality comes from the proof of Lemma~\ref{lm:probagpe}.
	We analyze the latter quantity in the following lemma.
	
	\begin{lemma}
		\label{lm:boundfailureprobability}
		Given a random vector $(E_1',\ldots,E_\ell')$ uniformly distributed in $\Omega_{\Lambda}$, we have that
		\begin{equation*}
			\sum_{\varphi=1}^{\Lambda-1}
			\P
			\left(
			\forall i, \ \varphi E_i' \in \Z_{\Lambda, B\Lambda}
			\right) 
			\leq
			\left(3B\right)^\ell \Lambda\prod_{p\in\cP(\Lambda)}\left(\frac{1 - 1/p^{\ell + \nu_p(\Lambda)}}{1 -1 / p^{\ell}}\right).
		\end{equation*}
	\end{lemma}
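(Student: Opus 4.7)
The plan is to partition the sum according to the value $\eta = \gcd(\varphi,\Lambda)$, invoke Lemma~\ref{lm:probagpe} to bound each term, and then reorganize the resulting sum over divisors into a product via Lemma~\ref{lm:sumOverDivisorsINT}.

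First, for each divisor $\eta$ of $\Lambda$ with $\eta < \Lambda$, the number of $\varphi\in\{1,\dots,\Lambda-1\}$ with $\gcd(\varphi,\Lambda)=\eta$ equals $\phi(\Lambda/\eta)$ (Euler's totient). By Lemma~\ref{lm:probagpe}, for such $\varphi$ the probability is zero whenever $B\Lambda/\eta < 1$, and otherwise is bounded by $(\#\Z_{\Lambda/\eta,B\Lambda/\eta})^{\ell} / \bigl((\Lambda/\eta)^{\ell}\prod_{p\in\cP(\Lambda/\eta)}(1-1/p^{\ell})\bigr)$. Since $B\Lambda/\eta \ge 1$ in the remaining range, I bound $\#\Z_{\Lambda/\eta,B\Lambda/\eta} \le 2\lfloor B\Lambda/\eta\rfloor + 1 \le 3B\Lambda/\eta$, which after cancellation leaves a factor of $(3B)^{\ell}$ independent of $\eta$. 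Extending the sum to all divisors $\eta\mid \Lambda$ (the extra terms contribute zero) and substituting $\eta' = \Lambda/\eta$ gives
\begin{equation*}
    \sum_{\varphi=1}^{\Lambda-1}\P(\cdot)
    \le
    (3B)^{\ell}
    \sum_{\eta'\mid \Lambda}
    \frac{\phi(\eta')}{\prod_{p\in\cP(\eta')}(1-1/p^{\ell})}.
\end{equation*}

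Next, I would rewrite $\phi(\eta') = \eta'\prod_{p\in\cP(\eta')}(1-1/p)$ so that the summand becomes $\prod_{p\in\cP(\eta')} f(p,\nu_{p}(\eta'))$ with $f(p,k) = p^{k}\cdot (1-1/p)/(1-1/p^{\ell})$. Lemma~\ref{lm:sumOverDivisorsINT} then converts the sum into the product
\begin{equation*}
    \prod_{p\in\cP(\Lambda)}\left[1 + \frac{1-1/p}{1-1/p^{\ell}}\sum_{k=1}^{\nu_{p}(\Lambda)} p^{k}\right].
\end{equation*}

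The main (and only delicate) step is the simplification of the bracketed factor. Using the geometric sum $\sum_{k=1}^{\nu_{p}(\Lambda)} p^{k} = p(p^{\nu_{p}(\Lambda)}-1)/(p-1)$, the factor $(1-1/p) = (p-1)/p$ telescopes cleanly to yield $1 + (p^{\nu_{p}(\Lambda)}-1)/(1-1/p^{\ell})$. Putting the fractions over a common denominator and factoring out $p^{\nu_{p}(\Lambda)}$ gives
\begin{equation*}
    \frac{p^{\nu_{p}(\Lambda)}\bigl(1 - 1/p^{\ell+\nu_{p}(\Lambda)}\bigr)}{1-1/p^{\ell}}.
\end{equation*}
Taking the product over $p\in\cP(\Lambda)$, the factors $p^{\nu_{p}(\Lambda)}$ multiply to $\Lambda$, and one obtains exactly the claimed bound $(3B)^{\ell}\Lambda\prod_{p\in\cP(\Lambda)}\bigl(1-1/p^{\ell+\nu_{p}(\Lambda)}\bigr)/\bigl(1-1/p^{\ell}\bigr)$.
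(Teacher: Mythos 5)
Your proof is correct and follows essentially the same route as the paper: partition the sum by $\eta=\gcd(\varphi,\Lambda)$, use Lemma~\ref{lm:probagpe} to discard the terms with $\eta>B\Lambda$ and bound the rest by $3B\Lambda/\eta$, count via Euler's totient, apply Lemma~\ref{lm:sumOverDivisorsINT} with $f(p,k)=p^{k}(1-1/p)/(1-1/p^{\ell})$, and simplify the geometric sum. The only cosmetic difference is that you make the substitution $\eta'=\Lambda/\eta$ explicit, which the paper leaves implicit when passing from $\phi(\Lambda/\eta)$ to the product over $\cP(\eta)$.
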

	\begin{proof}
		We can use Lemma~\ref{lm:probagpe} and upper bound the terms in the sum with 
		$$
		\P \left( \forall i, \ \varphi E_i' \in \Z_{\Lambda, B\Lambda} \right)
		\le
		\frac{\left(\#
			\Z_{\Lambda/\eta,B\Lambda/\eta}\right)^\ell}{\left(\frac{\Lambda}{\eta}\right)^\ell\prod_{p\in\cP\left(\frac{\Lambda}{\eta}\right)}
			\left(1 - \frac{1}{p^\ell}\right)            
		}
		$$
		where $\eta=\gcd(\varphi,\Lambda)$. Thanks to the second point in Lemma~\ref{lm:probagpe}, we can restrict the sum only to the elements $\varphi$ such that $\eta\leq B\Lambda$, which in turn allows us to deduce that 
		$
		\#\Z_{\Lambda/\eta,B\Lambda/\eta} \leq 2 \lfloor B\Lambda/\eta \rfloor + 1 \leq 3 B\Lambda/\eta.
		$
		Since this expression depends only on $\eta$, we regroup the $\varphi$ in the sum by their gcd with $\Lambda$.  
		Note that the number of elements $\varphi\in \Z_\Lambda$ such that $\gcd(\varphi,\Lambda)=\eta$, is equal to $\phi\left(\frac{\Lambda}{\eta}\right)$ with $\phi$ being the Euler's totient function. 
		Therefore,
		$$
		\sum_{\substack{\varphi=1\\\eta=\gcd(\varphi,\Lambda)\leq B\Lambda}}^{\Lambda -1} \frac{\left(\#
			\Z_{\Lambda/\eta,B\Lambda/\eta}\right)^\ell}{\left(\frac{\Lambda}{\eta}\right)^\ell\prod_{p\in\cP\left(\frac{\Lambda}{\eta}\right)}
			\left(1 - 1/p^\ell\right)            
		}
		\leq 
		\sum_{\substack{\eta|\Lambda\\\eta\leq B\Lambda}}
		\frac{\phi\left(\frac{\Lambda}{\eta}\right)\left(\frac{3B\Lambda}{\eta}\right)^\ell}
		{\left(\frac{\Lambda}{\eta}\right)^\ell\prod_{p\in\cP\left(\frac{\Lambda}{\eta}\right)} \left(1
			- 1/p^\ell\right)}.
		$$
		Extending the sum over all the divisors $\eta$, we can upper bound the quotient $\P(\SE\ne\{0\})/\left(3B\right)^\ell$ with
		\begin{align*}
			\sum_{\eta|\Lambda}
			\frac{\phi\left(
				\frac{\Lambda}{\eta}
				\right)
			}
			{\prod_{p\in\cP\left(\frac{\Lambda}{\eta}\right)} 
				\left(1 - \frac{1}{p^\ell}\right)}
			=
			\sum_{\eta|\Lambda}
			\prod_{p\in\cP(\eta)}
			\frac{1 - \frac{1}{p}}{1 - \frac{1}{p^{\ell}}}
			p^{\nu_{p}\left(\eta\right)}
			=
			\prod\limits_{p \in\cP(\Lambda)}
			\left(
			1+
			\frac{1 - \frac{1}{p}}{1 - \frac{1}{p^{\ell}}}
			\sum_{k = 1}^{\nu_p(\Lambda)}
			p^{k}
			\right)
		\end{align*}
		where in the last equality we used Lemma~\ref{lm:sumOverDivisorsINT} with 
		\begin{equation*}
			f(x,y)
			=
			\frac{1 - \frac{1}{x}}{1 - \frac{1}{x^\ell}}
			x^{y}.
		\end{equation*}
		To conclude we notice that
		\begin{equation*}
			\prod\limits_{p \in\cP(\Lambda)}
			\left(
			1+
			\frac{1 - \frac{1}{p}}{1 - \frac{1}{p^{\ell}}}
			\sum_{k = 1}^{\nu_p(\Lambda)}
			p^{k}
			\right)
			=
			\prod_{p\in\cP(\Lambda)}
			\frac{p^{\nu_p(\Lambda)} - 1 / p^{\ell}}{1 - 1 / p^\ell}
			=
			\Lambda
			\prod_{p\in\cP(\Lambda)}
			\frac{1 - 1 / p^{\ell + \nu_p(\Lambda)}}{1 - 1 / p^\ell}. \qedhere
		\end{equation*}
	\end{proof}		
	
	\begin{proof}[Proof of Theorem~\ref{thm:main1}]
		We start by proving that any choice of the input parameter $d\leq \dm$ satisfies  Constraint~\ref{c_3}, thus we can apply all the previous lemmas and upper bound the failure probability of Algorithm~\ref{algoSRN} with the quantity given by Lemma~\ref{lm:boundfailureprobability}. Remark that
		\begin{equation*}
			2\beta\frac{2^{d}FG}{N}\leq 2\beta\frac{2^{\dm}FG}{N} 
			= 
			\frac{2\beta FG}{N}\left(\frac{N}{6FG\beta}\right)^{\frac{\ell}{\ell +1}}
			=
			\left( \frac{2FG}{N} \frac{\beta}{3^\ell} \right)^{\frac{1}{\ell +1}}. 
		\end{equation*}
		We already noticed when defining the $\SRN_{\ell}(N;F,G)$ code that $2FG<N$.
		Thanks to Constraint~\ref{cst:beta}, we know that $\beta < 3^\ell$, thus
		the above quantity is smaller than $1$ and Constraint~\ref{c_3} is satisfied.
		
		As noticed in Equation~\eqref{fstUpperBoundProof}, 
		$
		\P(\SE\ne\{0\}) 
		\leq \sum_{\varphi=1}^{\Lambda-1}
		\P
		\left(
		\forall i, \ \varphi E_i' \in \Z_{\Lambda, B\Lambda}
		\right)$,
		which we can upper bound using Lemma~\ref{lm:boundfailureprobability}.
		Thanks to the hypothesis of Theorem~\ref{thm:main1} we know that $\Lambda \leq 2^{d}$, 
		and using 
		$\left(3B\right)^\ell 2^d  = 2^{-(\ell + 1)(\dm - d)}$, 
		we have proved Theorem~\ref{thm:main1}.
	\end{proof}
	
	\subsubsection{Decoding failure probability with respect to the second error model}
	\label{RN_ERR2}
	
	In the second error model, we need to make a distinction between the maximal error locator $\Lambda_m$ (over which there are uniform random errors) and the actual error locator $\Lambda$ which is in general a divisor of $\Lambda_m$. We will denote $\P_{\ERRd}$ (resp. $\P_{\ERR}$) the probability function under the error model 2 (resp. the error model 1). 
	Let $\cF$ be the event of decoding failure with algorithm parameter $d\geq\log(\Lambda_m)$ \ie\  the set of random matrices $\bE$ such that Algorithm~\ref{algoSRN} returns "decoding failure".
	Using the law of total probability, we have
	\begin{equation}
		\label{LawTotalProba}
		\P_{\ERRd}( \cF )
		=
		\sum_{\Lambda|\Lambda_m}
		\P_{\ERRd}(\cF \ |\ \Lambda_{\bE} = \Lambda)
		\ 
		\P_{\ERRd}( \Lambda_{\bE} = \Lambda)
	\end{equation}
	where $\Lambda_{\bE} =\Lambda_{\bC,\bR}$ (see Definition~\ref{def:IntDistance}).
	The conditional probabilities 
	$\P_{\ERRd}(\cF \ |\ \Lambda_{\bE} = \Lambda)$
	in the sum  are equal to 
	$\P_{\ERR}(\cF)$, which are upper bounded within the proof of Lemma~\ref{lm:boundfailureprobability} by
	\begin{equation}
		\label{IntermediateBuondERR1}
		\P_{\ERR}(\cF)
		\leq
		\left(3B\right)^\ell \Lambda\prod_{p\in\cP(\Lambda)}\left(\frac{1 - 1 / p^{\ell + \nu_p(\Lambda)}}{1 -1 / p^{\ell}}\right).
	\end{equation}
	Moreover, using again Lemma~\ref{lm:EulerFormula}, we have 
	\begin{equation}
		\label{ConditionProba}
		\P_{\ERRd}( \Lambda_{\bE} = \Lambda)
		=
		\frac{\#\Omega_\Lambda}{\Lambda_m^\ell}
		=
		\left(\frac{\Lambda}{\Lambda_m}\right)^{\ell}
		\prod_{p\in\cP(\Lambda)}\left(1 - \frac{1}{p^{\ell}}\right).
	\end{equation}
	Using these facts we can prove Theorem~\ref{thm:main2}.

	\begin{proof}[Proof of Theorem~\ref{thm:main2}]
		Plug Equations~\eqref{ConditionProba} and~\eqref{IntermediateBuondERR1} in Equation~\eqref{LawTotalProba} to obtain that 
		$\P_{\ERRd}( \cF ) / (\frac{3B}{\Lambda_m})^\ell$ is less than or equal to
		\begin{align*}
			\sum_{\Lambda|\Lambda_m}
			\Lambda^{\ell + 1}
			\prod_{p \in\cP(\Lambda)}\left(1 -\frac{1}{p^{\ell + \nu_p(\Lambda)}}\right) 
			&=
			\sum_{\Lambda|\Lambda_m}
			\prod_{p\in\cP(\Lambda)}p^{\nu_p(\Lambda)(\ell + 1)}\left(1 -\frac{1}{p^{\ell + \nu_p(\Lambda)}}\right)\\
			&=
			\prod_{p\in\cP(\Lambda_m)}\left[1 + \sum_{k = 1}^{\nu_p(\Lambda_m)}p^{k(\ell + 1)}\left(1 -\frac{1}{p^{\ell + k}}\right)\right]\\
			&\le
			\prod_{p\in\cP(\Lambda_m)}\left[1 + \left(1 -\frac{1}{p^{\ell + \nu_p(\Lambda_m)}}\right)\sum_{k = 1}^{\nu_p(\Lambda_m)}p^{k(\ell + 1)}\right],
		\end{align*}
		where we used again Lemma~\ref{lm:sumOverDivisorsINT} with 
		$
		f(x,y)=
		x^{y(\ell + 1)}\left(1 -\frac{1}{x^{\ell + y}}\right),
		$
		and in the last inequality we used that $1 -1/p^{\ell + k}\leq 1 -1/p^{\ell + \nu_p(\Lambda_m)}$ for every $k=1,\ldots,\nu_p(\Lambda_m)$. 
		By computing the geometric sum inside the last product, the above is equal to
		\begin{align*}
			&\prod_{p\in\cP(\Lambda_m)}
			\left[
			1 + \left(1 - \frac{1}{p^{\ell + \nu_p(\Lambda_m)}}\right)
			\left(\frac{p^{(\ell + 1)(\nu_p(\Lambda_m) + 1)} - 1}{p^{\ell + 1} - 1} - 1\right)
			\right]\\
			&=
			\prod_{p\in\cP(\Lambda_m)}
			\left[
			1 +
			\frac{1 - 1 / p^{\ell + \nu_p(\Lambda_m)}}{1 - 1 /p^{\ell + 1}}
			\left(p^{\nu_p(\Lambda_m)(\ell + 1)} - 1\right)
			\right].
		\end{align*}
		Since $\nu_p(\Lambda_m) \geq 1$ we have that 
		$
		1 \leq (1 - 1/p^{\ell + \nu_p(\Lambda_m)})/(1 - 1/p^{\ell + 1})
		$
		and the above product is upper bounded as:
		\begin{equation*}
			\prod_{p\in\cP(\Lambda_m)}
			\left[
			1 +
			\frac{1 - 1/p^{\ell + \nu_p(\Lambda_m)}}{1 - 1/p^{\ell + 1}}
			\left(p^{\nu_p(\Lambda_m)(\ell + 1)} - 1\right)
			\right]
			\le
			\Lambda_m^{\ell + 1}
			\prod_{p\in\cP(\Lambda_m)}
			\frac{1 - 1/p^{\ell + \nu_p(\Lambda_m)}}{1 - 1/p^{\ell + 1}}
		\end{equation*}

		Now, thanks to the hypothesis of the theorem 
		we know that $\Lambda_m\leq 2^{d}$, thus we can write 
		\begin{align*}
			\P_{\xi_r}^{\ERRd}( \cF ) &\leq \left(3B\right)^\ell	\Lambda_m
			\prod_{p\in\cP(\Lambda_m)}
			\frac{1 - 1/p^{\ell + \nu_p(\Lambda_m)}}{1 - 1/p^{\ell + 1}}\\
			&\leq 
			\left(3B\right)^\ell 2^{d}
			\prod_{p\in\cP(\Lambda_m)}
			\frac{1 - 1/p^{\ell + \nu_p(\Lambda_m)}}{1 - 1/p^{\ell + 1}}.
		\end{align*}
		Using $2^{-(\ell+1)(\dm - d)}=\left(3B\right)^\ell 2^{d}$,
		we have proved Theorem~\ref{thm:main2}.
	\end{proof}

	\section{Analysis of the decoder for a hybrid error model}
	\label{sec:Hybrid Decoding}

	In this section we consider a hybrid approach to the failure probability analysis for the multiplicity rational codes studied above.
	The approach is hybrid in the sense that it lies in between unique decoding and interleaving.

	More specifically, in the algorithm, the parameter $d$ is chosen, and it is strictly related to the failure probability. In the analysis, $d$ splits into two components: $d_{i}$ and $d_{u}$. Essentially, $d_{u}$ is bounded for fitting the unique decoding, whereas $d_{i}$ can be larger as it is related to the interleaving decoding and its bound $\dmi$ (Equation~\eqref{def:DmaxHyb}) is directly proportional to the parameter $\ell$.
	Notably, if $d_{i}=0$, the algorithm never fails. Therefore, the probability of decoding failure is strictly related to $d_{i}$ and is analyzed under probabilistic assumptions, particularly considering a random error distribution. 
	
	The motivation for splitting $d$ is that not all errors can be assumed to be purely random. 
	For instance, in the context of distributed computation, some errors might be introduced by malicious entities that deliberately choose specific error patterns to force the algorithm to fail. In such cases, the errors captured by $d_{u}$ remain independent of the error distribution and can still be corrected.
	
	Since we are above the unique decoding radius, not all errors are decodable. Interleaving techniques can provide positive decoding results by considering error sets where most errors are decodable using probabilistic arguments. These techniques focus on fixed error positions and consider all possible errors at each position.
	In contrast, in a hybrid setting one can handle more general sets of errors, analyzing the set of all possible errors across certain subsets of the error positions.
	This approach may be of broader interest in coding theory.
	
	A first instance of this hybrid model has been introduced in~\cite{guerrini2023simultaneous} in the context of rational function reconstruction with multiplicities and poles. 
	Then, the hybrid model has been jointly generalized in the present work and in~\cite{brakensiek2025unique} where the authors studied the decoding properties of interleaved and folded Reed Solomon as well as multiplicity codes under this hybrid error model. 
	In our context the motivation of introducing such model will become clear in the forthcoming case of codes allowing bad primes (See Section~\ref{Sec:bad primes}). In that case, the only result we are able to get is when we only interleave a subset of all errors (namely evaluation errors).
	We remark that, as in~\cite{guerrini2023simultaneous} for the rational function case and a different analysis, with the hybrid technique we are only able to interleave a specific type of errors. This suggests us that there could be a deeper obstacle preventing us to interleave the other type of errors (namely valuation errors).
	
	On a technical level this hybrid analysis consists in studying the failure probability with respect to a specific portion of the error's distribution; allowing the errors to vary only over a subset $\xii\subseteq\xi$ of the  error support, while the errors in the complementary set $\xiu\coloneq\xi\setminus\xii$ are held fixed. 
	Note that, in this section the above partition might seem arbitrary but, as we will see in the next Section~\ref{Sec:bad primes} on bad primes, it is clearly described by some property of the error itself (see Definition~\ref{def:errorsupportbad primes}).
	Here we generalize the analysis of the previous section relative to the decoding of $\SRN$ codes (Definition~\ref{def:SRN}) by means of Algorithm~\ref{algoSRN}.
	In this setting we decompose the distance parameter $d$ of the algorithm as
	\begin{equation}
		\label{distParamAlgoHYB}
		d = \hdi + \hdu,
	\end{equation}
	for some  $\hdi,\hdu \geq 0$ bounds on the sizes of random and fixed errors respectively.

	\paragraph*{Error models}
	With the given distance parameter $d$ as in Equation~\eqref{distParamAlgoHYB}, 
	we perform the hybrid analysis with respect to a distribution
	specified by a  factorization of $\Lambda = \Lu\Li$ with $\gcd(\Lu,\Li) = 1$, $\Lambda$ divides $N$. To specify the error model, we fix a sequence of nonzero error vectors 
	$\beps_j\in\left(\Z/p_j^{\lambda_j}\Z\right)^{\ell}$ for every $j$ 
	such that $p_j\in\cP(\Lu)$, with $\nu_{p_j}(\beps_j) = \lambda_j - \nu_{p_j}(\Lambda)$. 
	Then the random distribution for the hybrid error model is determined by the set of error matrices 
	$\bE\in\prod_{j=1}^{n}\left(\Z/p_j^{\lambda_j}\Z\right)^{\ell}$ such that the columns $\be_j$ of $\bE$ satisfy
	\begin{enumerate}
		\item $\be_j = \boldsymbol{0} \text{ for all } j \text{ such that } p_j\not\in\cP(\Lambda)$,
		\item $\be_j = \beps_j \ \text{ for all } j\text{ such that } p_j\in\cP(\Lu)$,
		\item $\nu_{p_j}(\be_j) = \lambda_j - \nu_{p_j}(\Lambda) \text{ for all } j \text{ such that } p_j\in\cP(\Li)$.
	\end{enumerate}
	We let $\HYB$ be the set of error matrices specified as above.

	\begin{lemma}
		\label{lm:randomuniform}
		If $\bE$ is uniformly distributed in $\HYB$, then 
		the random vector $(E_1' \bmod \Li,\ldots,E_\ell' \bmod \Li)$ is uniformly
		distributed in the sample space $\Omega_{\Li}$.
	\end{lemma}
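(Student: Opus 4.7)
The plan is to show that the map $\Phi: \HYB \to \Omega_{\Li}$ defined by $\bE \mapsto (E_1' \bmod \Li, \ldots, E_\ell' \bmod \Li)$ is a well-defined bijection, which immediately transports the uniform distribution on $\HYB$ to the uniform distribution on $\Omega_{\Li}$. The factorization $\Lambda = \Lu \Li$ with $\gcd(\Lu,\Li)=1$ and the CRT will let us completely decouple the prime-power components at $p_j \in \cP(\Li)$ (where the error is random) from those at $p_j \in \cP(\Lu)$ (where the error is fixed to $\beps_j$) and from those at $p_j \notin \cP(\Lambda)$ (where the error is zero).

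First I would verify well-definedness. For each $p_j \in \cP(\Li)$ with $k_j \coloneq \nu_{p_j}(\Li) = \nu_{p_j}(\Lambda)$, the constraint on $\be_j$ yields $\nu_{p_j}(e_{i,j}) \ge \lambda_j - k_j$ for every $i$ and equality for at least one $i$. Since $E_i \equiv e_{i,j} \pmod{p_j^{\lambda_j}}$ by CRT and $Y$ contains the factor $p_j^{\lambda_j - k_j}$, the reduction $E_i' \bmod p_j^{k_j}$ equals $(e_{i,j} / p_j^{\lambda_j - k_j}) \bmod p_j^{k_j}$. Thus $E_i' \bmod \Li$ depends only on the columns $\be_j$ for $p_j \in \cP(\Li)$, and the condition that at least one $e_{i,j}$ has valuation exactly $\lambda_j - k_j$ translates to $\gcd(E_1', \ldots, E_\ell', p_j^{k_j}) = 1$ for every $p_j \in \cP(\Li)$, which by CRT is precisely $\gcd(E_1', \ldots, E_\ell', \Li) = 1$. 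So $\Phi(\bE) \in \Omega_{\Li}$.

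Next I would establish bijectivity by constructing an inverse prime-by-prime. Given $(F_1, \ldots, F_\ell) \in \Omega_{\Li}$, for each $p_j \in \cP(\Li)$ set $e_{i,j} \coloneq (F_i \bmod p_j^{k_j}) \cdot p_j^{\lambda_j - k_j}$ in $\Z/p_j^{\lambda_j}\Z$; this uniquely determines the random columns of $\bE$ (the other columns being already fixed by the definition of $\HYB$). Checking that this map is a two-sided inverse of $\Phi$ is routine: the key point is that the assignment $F_i \bmod p_j^{k_j} \mapsto F_i \cdot p_j^{\lambda_j - k_j} \bmod p_j^{\lambda_j}$ is a bijection between $\Z/p_j^{k_j}\Z$ and the subgroup $\{a \in \Z/p_j^{\lambda_j}\Z : \nu_{p_j}(a) \ge \lambda_j - k_j\}$, and the valuation-equality condition on $\be_j$ matches the coprimality condition in $\Omega_{\Li}$ at each prime. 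A cardinality sanity-check using Lemma~\ref{lm:EulerFormula} then confirms $\#\HYB = \#\Omega_{\Li}$.

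The only subtle point is making sure that the fixed columns at $p_j \in \cP(\Lu)$ play no role in $\Phi(\bE)$; this follows cleanly from CRT since $\gcd(\Lu, \Li) = 1$ implies that the reduction modulo $\Li$ forgets the components at primes dividing $\Lu$. Once $\Phi$ is a bijection, uniformity of $\bE$ on $\HYB$ immediately yields uniformity of $\Phi(\bE)$ on $\Omega_{\Li}$, concluding the proof.
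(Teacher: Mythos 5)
Your proof is correct and takes essentially the same route as the paper: both arguments focus on the columns with $p_j \in \cP(\Li)$, decouple via CRT, and observe that dividing by $Y$ turns a random column of valuation $\lambda_j - \nu_{p_j}(\Lambda)$ into a uniformly random vector of valuation zero in $(\Z/p_j^{\nu_{p_j}(\Li)}\Z)^\ell$, hence a uniform element of $\Omega_{\Li}$. You simply make this more explicit by exhibiting a bijection with a cardinality check; one small inaccuracy (present also in the paper's informal phrasing) is that $E_i' \bmod p_j^{k_j}$ equals $e_{i,j}/p_j^{\lambda_j - k_j}$ only up to the fixed unit $\bigl(Y/p_j^{\lambda_j-k_j}\bigr)^{-1} \bmod p_j^{k_j}$, which slightly perturbs your proposed inverse formula but does not affect the conclusion since multiplication by a fixed unit is a valuation-preserving bijection.
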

	\begin{proof}
		For the duration of this proof, we will only consider indices $j$ such that
		$p_j\in\cP(\Li)$. Recall that  $\be_{j}$ is a random vector of 
		$\left(\Z/p_j^{\lambda_j}\Z\right)^{\ell}$ of valuation $\lambda_j - \nu_{p_j}(\Lambda)$
		for all those particular $j$. Since $Y = p_j^{\lambda_j - \nu_{p_j}(\Lambda)} \bmod p_j^{\lambda_j}$,
		we get that $E'_i = E_i/Y = e_{i,j}/Y \bmod p_j^{\nu_{p_j}(\Lambda)}$. 
		By definition of $\HYB$, the vector $\be_{j}/Y \in
		(\Z/p_j^{\nu_{p_j}(\Lambda)}\Z)^\ell$ is random of valuation $0$. As a
		consequence, we obtain that $(E_1' \bmod \Li,\ldots,E_\ell' \bmod \Li)$ is random
		among the vectors of $(\Z/\Li\Z)^\ell$ such that
		$\gcd(E_1',\ldots,E_{\ell}',\Li) = 1$.
	\end{proof}
	
	\medskip

	Whereas for the hybrid version of the error model $\ERRd$, we fix a maximal error locator $\Lambda_m$ factorized as $\Lambda_m = \Lmi\Lu$ with $\gcd\left(\Lmi,\Lu\right) = 1$. We fix a sequence of nonzero error vectors $\beps_j\in\left(\Z/p_j^{\lambda_j}\Z\right)^{\ell}$ for every $j$ such that $p_j\in\cP(\Lu)$, with $\nu_{p_j}(\beps_j) = \lambda_j - \nu_{p_j}(\Lambda_m)$. Then we consider the set of error matrices $\bE\in\prod_{j=1}^{n}\left(\Z/p_j^{\lambda_j}\Z\right)^{\ell}$ such that
	\begin{enumerate}
		\item $\be_j = \boldsymbol{0} \text{ for all } j \text{ such that } p_j\not\in\cP(\Lambda_m)$,
		\item $\be_j = \beps_j \ \text{ for all } j\text{ such that } p_j\in\cP(\Lu)$,
		\item $\nu_{p_j}(\be_j) \ge \lambda_j - \nu_{p_j}(\Lambda_m) \text{ for all } j \text{ such that } p_j\in\cP(\Lmi)$.
	\end{enumerate}
	We let $\HYBd$ be the set of error matrices specified as above.
	
	We notice that for a given error matrix in the distribution $\HYBd$ the associated error locator has the form $\Lambda = \Li\Lu$ for some divisor $\Li|\Lmi$.

	\paragraph*{Our results} We can now state our results concerning the analysis of the correctness of the decoder \wrt to a hybrid error model. Define
	\begin{equation}
		\label{def:DmaxHyb}
		\dmi \coloneq 
		\frac{\ell}{\ell+1} \left[\log(N/2FG) - \log(3\beta) - 2 \hdu\right].
	\end{equation}
	
	Note that we must have $2 \hdu \le \log(N/(6FG \beta))$ in order to ensure $\dmi \ge 0$.

	\begin{theorem}
		\label{thm:main1hyb}
		Decoding Algorithm~\ref{algoSRN} on input
		\begin{enumerate}
			\item distance parameter $d = \hdu + \hdi$ for $\hdu\leq \log\left(\sqrt{N/(6FG \beta)}\right)$ and $\hdi\leq \dmi$,
			\item a random received word $\bR$ uniformly distributed in  $\left[\bf/g\right]_N + \HYB$
			for some code word $\left[\bf/g\right]_N \in \SRN_\ell(N;F,G)$ and error locator $\Lambda = \Li\Lu$ such that $\log (\Lu) \leq \hdu$ and $\log(\Li)\leq \hdi$,
		\end{enumerate}
		outputs the center  code word $\left[\bf/g\right]_N$ of the distribution
		with a probability of failure 
		\begin{equation*}
			\P_{\fail}
			\leq 
			2^{-(\ell+1)(\dmi - \hdi)}\prod_{p\in\cP(\Li)}\left(\frac{1 - 1/p^{\ell + \nu_p(\Li)}}{1 -1/p^{\ell}}\right).
		\end{equation*} 
	\end{theorem}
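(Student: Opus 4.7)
The plan is to adapt the proof of Theorem~\ref{thm:main1} in Section~\ref{RN_ERR1} by splitting the key union bound along the Chinese remainder decomposition $\Lambda = \Li\Lu$. I would first verify that the hybrid hypotheses still yield Constraint~\ref{c_3}, i.e.\ $B = 2^{\hdu+\hdi}\beta \cdot 2FG/N < 1$, by the same direct computation as in the non-hybrid case, using $\hdi \leq \dmi$, $\hdu \leq \log\sqrt{N/(6FG\beta)}$, and Constraint~\ref{cst:beta}. Then Lemma~\ref{BaseLemmaRN} gives $\P_{\fail} \leq \P(\SE \neq \{0\})$, and the goal reduces to a union bound on $\P(\SE \neq \{0\})$ over the nonzero $\varphi \in \Z/\Lambda\Z$, decomposed via CRT as $\varphi \leftrightarrow (\varphi_{\Li}, \varphi_{\Lu})$.

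For the deterministic case $\varphi_{\Li} = 0$ with $\varphi \neq 0$, writing $\varphi = \Li\tilde\varphi$ for some $\tilde\varphi \in \Z/\Lu\Z \setminus \{0\}$, a direct CRT computation shows that $\varphi \in \SE$ is equivalent to $|g\tilde\varphi\, (E'_i \bmod \Lu) \crem \Lu| \leq B\Lu$ for every $i$. A sharper version of the estimate for $B$ actually gives $B\Lu \leq 1/3$: combining $\hdi \leq \dmi$ and $\hdu \leq \log\sqrt{N/(6FG\beta)}$ yields $B\Lu \leq (2^{2\hdu} \cdot 2FG\beta/N)^{1/(\ell+1)} \cdot (1/3)^{\ell/(\ell+1)} \leq 1/3$. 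Hence $\Z_{\Lu,B\Lu} = \{0\}$ and the condition forces $g\tilde\varphi\, (E'_i \bmod \Lu) \equiv 0 \pmod \Lu$ for every $i$. Since for each prime $p_j \in \cP(\Lu)$ at least one index $i$ has $E'_i \bmod p_j$ a unit (by the valuation constraint on $\beps_j$ in the definition of $\HYB$), this forces $\tilde\varphi = 0$, a contradiction, so the deterministic case contributes nothing to the union bound.

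For the probabilistic case $\varphi_{\Li} \neq 0$, by Lemma~\ref{lm:randomuniform} the random part $(E'_i \bmod \Li)_i$ is uniform in $\Omega_{\Li}$. A CRT reduction modulo $\Li$ turns the condition $\varphi \in \SE$ into $g\varphi_{\Li}\, (E'_i \bmod \Li) \in T_i$ for sets $T_i \subseteq \Z/\Li\Z$ of size at most $2B\Li + 1 \leq 3B\Li$; these $T_i$ arise from counting the integers in $[-B\Lambda, B\Lambda]$ congruent modulo $\Lu$ to the fixed residue $c_i := g\varphi_{\Lu}\, (E'_i \bmod \Lu)$. A hybrid analog of Lemma~\ref{lm:probagpe}, relying on a change of variable by $\eta := \gcd(\varphi_{\Li}, \Li)$ and on Lemma~\ref{lm:EulerFormula} applied to $\Li$, then bounds $\P(\varphi \in \SE)$. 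Summing over the probabilistic case (with the multiplicative factor $\Lu$ for the free choice of $\varphi_{\Lu}$), following the pattern of Lemma~\ref{lm:boundfailureprobability} and using Lemma~\ref{lm:sumOverDivisorsINT} on $\Li$, yields a bound of the shape $(3B)^\ell \Lambda \prod_{p \in \cP(\Li)} (1 - 1/p^{\ell + \nu_p(\Li)})/(1 - 1/p^\ell)$. A direct calculation converts this into the desired $2^{-(\ell+1)(\dmi - \hdi)} \prod_{p \in \cP(\Li)} \cdots$ using $\Lambda \leq 2^{\hdu+\hdi}$ and the definition of $\dmi$; this reduces to the inequality $(\ell+1)\hdu \leq 2\ell\,\hdu$, which holds for $\ell \geq 1$. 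The main technical point is the adaptation of Lemma~\ref{lm:probagpe}, where the target set $T_i$ is an arithmetic progression modulo $\Li$ (a shifted interval under multiplication by $\Lu$) rather than a centered interval, calling for a more careful change of variable via $\eta$.
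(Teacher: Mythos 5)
Your route differs genuinely from the paper's: instead of decomposing $\Z/\Lambda\Z$ by CRT and splitting the union bound into a ``deterministic'' part ($\varphi \equiv 0 \bmod \Li$) and a ``probabilistic'' part, the paper defines a new, smaller set $\SEh \subseteq \Z/\Li\Z$ with \emph{centered} target $\Z_{\Li, B\Lambda}$, proves a hybrid version of Lemma~\ref{BaseLemmaRN} (Lemma~\ref{BaseLemmaRNHybrid}, using Constraint~\ref{cst: ParamHybridVers}, \ie{} $2^{\hdu}B<1$, rather than Constraint~\ref{c_3}), and then runs the union bound over $\Z/\Li\Z$ only, so that Lemma~\ref{lm:probagpe} and Lemma~\ref{lm:boundfailureprobability} apply essentially verbatim with $\Lambda$ replaced by $\Li$. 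Your deterministic case is correct (and the sharp computation $B\Lu \le 1/3$, together with the fact that each $\beps_j$ has a unit coordinate after scaling, is the right way to kill it), but it is exactly the content the paper absorbs by passing from $\SE$ to $\SEh$.

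The gap is in your probabilistic case. The sets $T_i \subset \Z/\Li\Z$ you construct are \emph{shifted} arithmetic progressions (the reduction mod $\Li$ of $\{a : |a|\le B\Lambda,\ a \equiv c_i \bmod \Lu\}$), not centered intervals, and the second half of Lemma~\ref{lm:probagpe} does not survive this change. In the centered case, writing $\eta := \gcd(\varphi,\Li)$ and dividing by $\eta$ brings the target to $\Z_{\Li/\eta, B\Lambda/\eta}$, which collapses to $\{0\}$ once $\eta > B\Lambda$, and the condition $\gcd(E'_1,\dots,E'_\ell,\Li)=1$ then forces the probability to vanish. After the same division, your target $\tilde T_i \subset \Z/(\Li/\eta)\Z$ is a set of size $\le 2B\Li/\eta + 1$, and for $\eta > B\Li$ it is (generically) a \emph{nonzero} singleton $\{u_i\}$ rather than $\{0\}$ — the fixed residues $c_i$ coming from the $\Lu$-part of the error are arbitrary. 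Nothing forces $\P(\forall i,\ E'_i \equiv u_i \bmod \Li/\eta)$ to be zero in that regime, so the range $B\Li < \eta < \Li$ contributes a sum that ``following the pattern of Lemma~\ref{lm:boundfailureprobability}'' simply does not address, and that in the worst case (all $u_i$ nonzero) is not dominated by the $(3B)^\ell\Lambda\prod(\cdots)$ term you claim. Notice also that your would-be bound $(3B)^\ell\Lambda\prod(\cdots)$ is \emph{strictly smaller} than the paper's $(3B\,2^{\hdu})^\ell\Li\prod(\cdots)$ by a factor $2^{(\ell-1)\hdu}$; that your strategy appears to yield a sharper estimate for free, without confronting the shifted-interval issue, is exactly the sign that the ``hybrid analog of Lemma~\ref{lm:probagpe}'' you invoke is not just a careful change of variable but is where the real work lies. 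The paper's device of replacing $\Z_{\Lambda,B\Lambda}$ with $\Z_{\Li,B\Lambda}$ (and paying the factor $\Lu^{\ell}$ up front in the target size, recovering $\Lu$ only from the domain) is precisely what keeps the target centered so that the $\eta$-cutoff argument goes through unchanged.
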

	\begin{theorem}
		\label{thm:main2hyb}
		Decoding Algorithm~\ref{algoSRN} on input
		\begin{enumerate}
			\item distance parameter $d = \hdu + \hdi$ for $\hdu\leq \log\left(\sqrt{N/(6FG \beta)}\right)$ and $\hdi\leq \dmi$,
			\item a random received word $\bR$ uniformly distributed in  $\left[\bf/g\right]_N + \HYBd$
			for some code word $\left[\bf/g\right]_N \in \SRN_\ell(N;F,G)$ and error locator $\Lambda_m = \Lmi\Lu$ such that $\log (\Lu) \leq \hdu$ and $\log(\Lmi)\leq \hdi$,
		\end{enumerate}
		outputs the center  code word $\left[\bf/g\right]_N$ of the distribution
		with a probability of failure 
		\begin{equation*}
			\P_{\fail}
			\leq 
			2^{-(\ell+1)(\dmi - \hdi)}
			\prod_{p\in\cP(\Lmi)}
			\left(\frac{1 - 1/p^{\ell + \nu_p(\Lmi)}}{1 - 1/p^{\ell + 1}}\right).
		\end{equation*} 
	\end{theorem}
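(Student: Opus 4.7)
The proof will mirror the derivation of Theorem~\ref{thm:main2} from Theorem~\ref{thm:main1}, replacing the non-hybrid building blocks with their hybrid counterparts from Theorem~\ref{thm:main1hyb}. First, since the $\HYBd$ distribution holds the errors on primes in $\cP(\Lu)$ fixed, every error matrix $\bE \in \HYBd$ has error locator of the form $\Lambda_{\bE} = \Li \Lu$ for some divisor $\Li \mid \Lmi$. By the law of total probability,
\begin{equation*}
    \P_{\HYBd}(\cF) = \sum_{\Li \mid \Lmi} \P_{\HYBd}(\cF \mid \Lambda_{\bE} = \Li \Lu) \cdot \P_{\HYBd}(\Lambda_{\bE} = \Li \Lu).
\end{equation*}

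Second, I will assemble the two factors in the summand. Conditioning on $\Lambda_{\bE} = \Li \Lu$ produces exactly the hybrid distribution $\HYB$ with fixed part $\Lu$ and random part of precise valuation on $\cP(\Li)$, so the conditional probability is bounded by the intermediate estimate extracted from the proof of Theorem~\ref{thm:main1hyb} (just before $\Li \leq 2^{\hdi}$ is invoked), namely
\begin{equation*}
    \P_{\HYBd}(\cF \mid \Lambda_{\bE} = \Li \Lu) \leq (3B)^\ell \, \Lu^\ell \, \Li \prod_{p\in\cP(\Li)} \frac{1 - 1/p^{\ell+\nu_p(\Li)}}{1 - 1/p^\ell},
\end{equation*}
with $B = 2^d \beta \cdot 2FG/N$. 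For the prior, the random columns of $\bE$ on $\cP(\Lmi)$ are uniform, and Lemma~\ref{lm:EulerFormula} gives
$\P_{\HYBd}(\Lambda_{\bE} = \Li \Lu) = (\Li/\Lmi)^\ell \prod_{p\in\cP(\Li)}(1-1/p^\ell)$,
exactly as in Equation~\eqref{ConditionProba}.

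Third, substituting both expressions, the $1 - 1/p^\ell$ factors cancel and we reduce to bounding
\begin{equation*}
    \P_{\HYBd}(\cF) \leq \left(\frac{3B\Lu}{\Lmi}\right)^\ell \sum_{\Li \mid \Lmi} \Li^{\ell+1} \prod_{p\in\cP(\Li)} \left(1 - \frac{1}{p^{\ell+\nu_p(\Li)}}\right).
\end{equation*}
This is formally the same sum that appears in the proof of Theorem~\ref{thm:main2} with $\Lmi$ replacing $\Lambda_m$, so I apply Lemma~\ref{lm:sumOverDivisorsINT} with $f(x,y) = x^{y(\ell+1)}(1 - 1/x^{\ell+y})$, dominate $1 - 1/p^{\ell+k}$ by $1 - 1/p^{\ell + \nu_p(\Lmi)}$, sum the resulting geometric series, and inflate the leading $1$ to $(1 - 1/p^{\ell + \nu_p(\Lmi)})/(1 - 1/p^{\ell+1})$ to obtain the product $\Lmi^{\ell+1} \prod_{p\in\cP(\Lmi)} (1 - 1/p^{\ell+\nu_p(\Lmi)})/(1 - 1/p^{\ell+1})$. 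Finally, the hypotheses $\log \Lu \leq \hdu$ and $\log \Lmi \leq \hdi$ together with definition~\eqref{def:DmaxHyb} of $\dmi$ yield the identity $(3B)^\ell \, 2^{\ell \hdu} \, 2^{\hdi} = 2^{-(\ell+1)(\dmi - \hdi)}$, which delivers the announced bound.

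The main delicate point is to extract from the proof of Theorem~\ref{thm:main1hyb} the intermediate inequality with the $\Lu^\ell$ factor left explicit, so that after multiplication by the prior $\P_{\HYBd}(\Lambda_{\bE} = \Li \Lu)$ the calculation aligns term-for-term with the non-hybrid one. The quadratic $-2\hdu$ contribution in $\dmi$ is precisely what is needed to absorb both the $\Lu^\ell$ from the conditional bound and the $2^{\ell \hdu}$ coming from $(3B)^\ell = (6\beta FG/N)^\ell 2^{\ell(\hdu+\hdi)}$; once this bookkeeping is in place the rest is routine transcription of the proof of Theorem~\ref{thm:main2}.
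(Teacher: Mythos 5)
Your proposal is correct and follows essentially the same route as the paper: law of total probability over $\Li \mid \Lmi$, identification of the conditional probability with the hybrid error-model-1 bound, the prior from Lemma~\ref{lm:EulerFormula}, cancellation of the $1-1/p^\ell$ factors, and the $\sum_{\Li|\Lmi}$ estimate via Lemma~\ref{lm:sumOverDivisorsINT}. The only cosmetic difference is that you keep the factor $\Lu^\ell$ explicit through the computation and convert it to $2^{\ell\hdu}$ at the very end, whereas the paper already replaces $\Lu$ by $2^{\hdu}$ inside Lemma~\ref{lm:boundfailureprobabilityhybrid}; this is an immaterial reordering and both lead to the same final identity $(3B)^\ell 2^{\ell\hdu} 2^{\hdi} = 2^{-(\ell+1)(\dmi - \hdi)}$.
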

	
	\begin{exam}\label{example}
		Let's give a scenario that would highlight how Theorem~\ref{thm:main2hyb}
		can be used in practice. Assume that a code is fixed such that 
		$\log(N/(6FG\beta)) = 200$, so that $\dm = 160$ when one interleaves for $\ell=4$.
		Assume one wanted to make sure that the failure probability is less than a target probability of $2^{-30}$, and also that $50$ weighted errors can always be corrected ($\hdu = 50$), for instance for protecting against a malicious entity.
		Then $\dmi = 80$ and one would have to choose the parameter $d = 134$ (thus $\hdi = 74$) for the decoder (where we approximate the failure probability by $2^{-(\ell+1)(\dmi - \hdi)}$).
		Then Theorem~\ref{thm:main2hyb} would ensure that for any error with locator 
		$\Lu$ such that $\log \Lu \le 50$ and for any random error distributed uniformly
		on an error locator $\Lmi$ such that $\log \Lmi \le 74$ (with $\Lmi$ and $\Lu$ coprime), the failure probability is less than $2^{-30}$.
	\end{exam}

	We introduce a modified version of the set $\SE$ defined as
	\begin{equation*}
		\SEh\coloneq
		\left\{
		\varphi\in\Z/\Li\Z : \forall i, \ g\varphi E'_i \in \Z_{\Li,B\Lambda}
		\right\}
	\end{equation*}
	with $B\coloneq2^{d}\beta\frac{2FG}{N} = 2^{\hdi+\hdu}\beta\frac{2FG}{N} $. 
	The hybrid versions of Constraint~\ref{c_3} and Lemma~\ref{BaseLemmaRN} are as follows:
	
	\begin{constraint}
		\label{cst: ParamHybridVers}
		The parameters of Algorithm~\ref{algoSRN} satisfy $2^{\hdu} B < 1$.
	\end{constraint}
	\begin{lemma}
		\label{BaseLemmaRNHybrid}
		If Constraint~\ref{cst: ParamHybridVers} is satisfied then $\SEh = \{0\}
		\Rightarrow S_{\bR}\subseteq v_{\bC}\Z$.
	\end{lemma}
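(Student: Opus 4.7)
The plan is to mimic the proof of Lemma~\ref{BaseLemmaRN}, carefully exploiting the factorization $\Lambda = \Li \Lu$ and the fact that the errors at primes in $\cP(\Lu)$ are fixed (not random). I will start with an arbitrary $(\varphi,\psi_1,\ldots,\psi_\ell) \in S_\bR$ and reproduce the opening step verbatim: from $g\varphi E_i \equiv g\psi_i - f_i\varphi \pmod{N}$ together with $Y \mid E_i$, one obtains $\psi'_i \coloneq (g\psi_i - f_i\varphi)/Y \in \Z$ satisfying $|\psi'_i| \leq B\Lambda$ and $g\varphi E'_i \equiv \psi'_i \pmod{\Lambda}$.

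The first new ingredient is to read this identity modulo $\Li$ rather than modulo $\Lambda$. The bound on $|\psi'_i|$ transfers to a bound on the central remainder modulo $\Li$, so the class of $\varphi$ modulo $\Li$ lies in $\SEh$, and the hypothesis $\SEh = \{0\}$ forces $\Li \mid \varphi$. Plugging this back into the identity then yields $\Li \mid \psi'_i$, and Constraint~\ref{cst: ParamHybridVers} combined with $\Lu \leq 2^{\hdu}$ gives $B\Lambda = B \Lu \Li < \Li$. Thus $|\psi'_i| < \Li$ and $\psi'_i = 0$ in $\Z$, producing the rational identity $g\psi_i = f_i \varphi$ for every $i$. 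The coprimality $\gcd(f_1,\ldots,f_\ell,g)=1$ then yields $g \mid \varphi$, exactly as in the original proof.

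The main obstacle, and the genuinely new step, will be to promote $\Li \mid \varphi$ to the full divisibility $\Lambda \mid \varphi$: in the non-hybrid argument this came directly from $\SE = \{0\}$, but here the hypothesis only controls the interleaved part. The plan is to exploit the fixed error vectors $\beps_j$ at the primes $p_j \in \cP(\Lu)$. Since $\psi'_i = 0$, the identity $g\varphi E'_i \equiv 0 \pmod{\Lambda}$ reduces modulo $\Lu$, and using $\gcd(g,\Lu)=1$ becomes $\varphi E'_i \equiv 0 \pmod{\Lu}$ for every $i$. For each $p_j \in \cP(\Lu)$, the assumption $\nu_{p_j}(\beps_j) = \lambda_j - \nu_{p_j}(\Lambda)$ guarantees at least one index $i$ for which $E'_i \bmod p_j^{\nu_{p_j}(\Lu)}$ is a unit, and that single component forces $p_j^{\nu_{p_j}(\Lu)} \mid \varphi$. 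Aggregating prime-by-prime yields $\Lu \mid \varphi$.

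Finally, combining $\Li \mid \varphi$ and $\Lu \mid \varphi$ with $\gcd(\Li,\Lu)=1$, together with $g \mid \varphi$ and $\gcd(\Lambda,g)=1$, gives $\Lambda g \mid \varphi$. Writing $\varphi = a \Lambda g$ and substituting into $g\psi_i = f_i \varphi$ shows $\psi_i = a \Lambda f_i$, so $(\varphi,\psi_1,\ldots,\psi_\ell) = a \cdot v_\bC$, which establishes $S_\bR \subseteq v_\bC \Z$.
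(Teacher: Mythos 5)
Your proof is correct and follows the same overall strategy as the paper's, but you have identified and filled a genuine (if small) gap in the paper's argument. The paper's own proof reads the identity $g\varphi E'_i \equiv \psi'_i \pmod{\Lambda}$ modulo $\Li$ to get $\varphi \in \SEh$, deduces $\Li\mid\varphi$ and $\psi'_i = 0$, and then says ``the end of the proof is identical to the one of Lemma~\ref{BaseLemmaRN}.'' But the end of that earlier proof invokes ``we have already seen that $\Lambda\mid\varphi$'' — which in the hybrid setting has only been established for $\Li$, not for $\Lambda = \Li\Lu$. To land in $v_{\bC}\Z$ with $v_{\bC} = (\Lambda g, \Lambda\bf)$, one really does need $\Lu\mid\varphi$ as well, and the paper does not spell out where this comes from.

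Your extra step — reducing $g\varphi E'_i \equiv 0 \pmod{\Lambda}$ modulo $\Lu$, dividing by the invertible $g$, and then using that for each $p_j \in \cP(\Lu)$ some $E'_i$ is a $p_j$-unit (because $\nu_{p_j}(\beps_j) = \lambda_j - \nu_{p_j}(\Lambda)$) to force $p_j^{\nu_{p_j}(\Lu)} \mid \varphi$ — is sound and does exactly the needed job. One small remark: the unit condition you extract from the fixed vectors $\beps_j$ is just the prime-by-prime restatement of $\gcd(E'_1,\dots,E'_\ell,\Lambda)=1$, which is already built into the definition of $E'_i = E_i/Y$ with $Y = \gcd(E_1,\dots,E_\ell,N)$; so the same conclusion $\Lu\mid\varphi$ would hold even without referencing the specific hybrid assumption on $\beps_j$, and this is perhaps the cleaner way to phrase it, since it shows the step is not special to the hybrid model but to any error with locator $\Lambda$. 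Either way, your proof is complete and more explicit than the paper's.
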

	\begin{proof}
		Let $(\varphi,\psi_1,\ldots,\psi_\ell)\in S_{\bR}$. 
		The proof of Lemma~\ref{BaseLemmaRN} shows that $g\varphi E'_i$ is equal to 
		$\psi'_i \coloneq \frac{g\psi_i-f_i\varphi}{Y}$ modulo $\Lambda$, hence also modulo $\Li$.
		The same proof gives $|\psi'_i|  \leq B\Lambda$.
		This means that $\varphi\in \SEh$, thus thanks to the hypothesis $\SEh=\{0\}$, we get
		$\Lambda_i|\varphi$, thus $g\varphi E'_i = \psi'_i = 0 \bmod \Li.$
		Since $\Lu\leq 2^{\hdu}$, this implies $|\psi'_i|  \leq B\Lambda < \Li$, therefore $\psi'_i=0$ in $\Z$.
		The end of the proof is identical to the one of Lemma~\ref{BaseLemmaRN}.
	\end{proof}
	
	As in Equation~\eqref{fstUpperBoundProof}, we have 
	$
	\P(\SEh\ne\{0\})
	\leq
	\sum_{\varphi=1}^{\Li-1}
	\P
	\left(
	\forall i, \ \varphi E'_i \in \Z_{\Li,B\Lambda}
	\right)   
	$,
	which we now bound.
	
	\begin{lemma}
		\label{lm:boundfailureprobabilityhybrid}
		Given a random vector $(E_1',\ldots,E_\ell')$ uniformly distributed in $\Omega_{\Li}$, we have that
		\begin{equation*}
			\sum_{\varphi=1}^{\Li-1}
			\P
			\left(
			\forall i, \ \varphi E'_i \in \Z_{\Li,B\Lambda}
			\right)  
			\leq
			\left(
			3B 2^{\hdu}\right)^\ell \Li
			\prod_{p\in\cP(\Li)}\left(\frac{1 - 1 / p^{\ell + \nu_p(\Li)}}{1 -1 / p^{\ell}}
			\right).
		\end{equation*}
	\end{lemma}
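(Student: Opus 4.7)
The plan is to mirror the proof of Lemma~\ref{lm:boundfailureprobability}, with the modulus $\Lambda$ replaced by $\Li$ throughout, while carefully tracking that the box radius stays $B\Lambda = B\Lu\Li$ rather than $B\Li$; this asymmetry is precisely what produces the extra factor $\Lu^\ell \le 2^{\ell \hdu}$ in the final bound.

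First, I would apply Lemma~\ref{lm:probagpe} to the random vector $(E_1',\ldots,E_\ell')$, which is uniformly distributed in $\Omega_{\Li}$ by hypothesis. Reading that lemma's statement with $\Lambda$ replaced by $\Li$ and $B$ replaced by the rescaled constant $B' \coloneq B\Lu$ (so that $B'\Li = B\Lambda$), we obtain for $\eta = \gcd(\varphi,\Li)$ the per-term bound
\begin{equation*}
\P\!\left(\forall i, \ \varphi E_i' \in \Z_{\Li, B\Lambda}\right) \;\le\; \frac{(\#\Z_{\Li/\eta,\,B\Lambda/\eta})^\ell}{(\Li/\eta)^\ell \prod_{p \in \cP(\Li/\eta)} (1 - 1/p^\ell)},
\end{equation*}
and the vanishing part of the lemma tells us this probability is zero as soon as $\eta > B\Lambda$. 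For the surviving indices ($\eta \le B\Lambda$) I would use $\#\Z_{\Li/\eta, B\Lambda/\eta} \le 3B\Lambda/\eta$.

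Next, I would group the sum over $\varphi \in \{1,\dots,\Li-1\}$ according to $\eta = \gcd(\varphi,\Li)$, introducing the factor $\phi(\Li/\eta)$, and then extend the sum to all divisors $\eta$ of $\Li$. The ratio $(3B\Lambda/\eta)^\ell / (\Li/\eta)^\ell$ collapses to $(3B\Lambda/\Li)^\ell = (3B\Lu)^\ell$ and pulls out of the sum. The remaining sum $\sum_{\eta \mid \Li} \phi(\Li/\eta)/\prod_{p \in \cP(\Li/\eta)}(1 - 1/p^\ell)$ is formally identical, up to substituting $\Li$ for $\Lambda$, to the one computed in the proof of Lemma~\ref{lm:boundfailureprobability}; invoking Lemma~\ref{lm:sumOverDivisorsINT} verbatim evaluates it as $\Li \prod_{p\in\cP(\Li)}(1 - 1/p^{\ell+\nu_p(\Li)})/(1 - 1/p^\ell)$. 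Bounding $\Lu^\ell \le 2^{\ell \hdu}$ then delivers the claimed inequality.

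The only delicate point, more bookkeeping than conceptual, is making sure that Lemma~\ref{lm:probagpe} still applies with its box radius and its modulus decoupled: one should check that its proof, which only exploits divisibility by $\eta$ in the rescaling step and uniformity of $(E_i')$ on the relevant sample space, goes through after the substitutions $\Lambda \to \Li$ and $B \to B\Lu$. No new combinatorial identity beyond Lemmas~\ref{lm:EulerFormula} and~\ref{lm:sumOverDivisorsINT} is required.
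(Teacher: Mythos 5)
Your proposal is correct and follows essentially the same route as the paper's proof: the paper likewise invokes the machinery of Lemma~\ref{lm:boundfailureprobability} (Lemma~\ref{lm:probagpe} for the per-term bound, grouping by $\eta=\gcd(\varphi,\Li)$ with the factor $\phi(\Li/\eta)$, and Lemma~\ref{lm:sumOverDivisorsINT} to evaluate the divisor sum), reads off the factor $(3B\Lu)^\ell$ from $(3B\Lambda/\eta)^\ell/(\Li/\eta)^\ell$, and finishes with $\Lu\le 2^{\hdu}$. The only thing the paper's terse proof omits that you make explicit — and it is a genuine point worth noting — is that Lemma~\ref{lm:probagpe}'s argument really does go through with the box radius $B\Lambda$ decoupled from the modulus $\Li$, since the rescaling step only uses $\eta\mid\varphi$ and $\eta\mid\Li$.
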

	\begin{proof}
		As in the proof of Lemma~\ref{lm:boundfailureprobability}, we can upper bound the probability 
		$
		\sum_{\varphi=1}^{\Li-1}
		\P
		\left(
		\forall i, \ \varphi E'_i \in \Z_{\Li,B\Lambda}
		\right)  
		$
		with
		\begin{align*}
			\sum_{\substack{\eta|\Lambda_i\\\eta\leq B\Lambda}}
			\frac{
				\phi\left(\frac{\Lambda_i}{\eta}\right)\left(\frac{3B\Lambda}{\eta}\right)^\ell
			}{
				\left(\frac{\Lambda_i}{\eta}\right)^\ell\prod_{p\in\cP\left(\frac{\Lambda_i}{\eta}\right)} \left(1
				- 1/p^\ell\right)
			}
			& \leq 
			\left(3B\Lu\right)^\ell \Li
			\prod_{p\in\cP(\Li)}\left(\frac{1 - 1 / p^{\ell + \nu_p(\Li)}}{1 -1 / p^{\ell}}
			\right).
		\end{align*}
		Using that $\Lu\leq 2^{\hdu}$ we obtain our statement.
	\end{proof}

	\begin{proof}[Proof of Theorem~\ref{thm:main1hyb}]
		As in the proof of Theorem~\ref{thm:main1}, we start by noticing that our choice of parameters satisfy Constraint~\ref{cst: ParamHybridVers}. We first notice that $\hdi\leq \dmi = \ell/(\ell+1) (\log(N/2FG) - \log(3\beta) - 2 \hdu)$, thus
		\begin{align*}
			2^{2\hdu + \hdi}\frac{2\beta FG}{N}
			\le 
			2^{2\hdu + \dmi}\frac{2\beta FG}{N}
			& =
			\left( \frac{N}{6\beta FG 2^{2\hdu}}\right)^{\frac{\ell}{(\ell+1)}} \frac{2\beta FG 2^{2 \hdu}}{N} \\
			& =
			\left(\frac{2FG 2^{2 \hdu}}{N} \frac{\beta}{3^\ell} \right)^{\frac{1}{\ell +1}}.
		\end{align*}
		Since $\hdu \leq \log\left(\sqrt{N/2FG}\right)$, the fraction $2FG 2^{2 \hdu}/N$ is less than or equal to $1$.
		Thanks to Constraint~\ref{cst:beta}, we know that $\beta < 3^\ell$, thus
		the above quantity is less than $1$ and Constraint~\ref{cst: ParamHybridVers} is satisfied.
		Thanks to Lemma~\ref{BaseLemmaRNHybrid} and Lemma~\ref{lm:boundfailureprobabilityhybrid}, we can upper bound the failure probability by
		\begin{align*}
			\P_{fail}
			\leq
			\P(\SEh\ne\{0\})
			&\leq
			\sum_{\varphi=1}^{\Li-1}
			\P
			\left(
			\forall i, \ \varphi E'_i \in \Z_{\Li,B\Lambda}
			\right)\\
			&\leq
			\left(
			3B 2^{\hdu}\right)^\ell \Li
			\prod_{p\in\cP(\Li)}\left(\frac{1 - 1 / p^{\ell + \nu_p(\Li)}}{1 -1 / p^{\ell}}\right).
		\end{align*}
		Since $\Li\leq 2^{\hdi}$, we have
		$
		\left(3B 2^{\hdu}\right)^\ell \Li 
		\le 
		\left(3B\right)^\ell 2^{\ell \hdu} 2^{\hdi}
		=
		2^{-(\ell+1)(\dmi - \hdi)}
		$.
	\end{proof}

	\begin{proof}[Proof of Theorem~\ref{thm:main2hyb}]
		Let $\cF$ be the event of decoding failure, \ie{}the set of random matrices $\bE$ such that Algorithm~\ref{algoSRN} returns "decoding failure" with input parameter $d = \hdi + \hdu$ as in the statement of Theorem~\ref{thm:main2hyb}.
		We will denote $\P_{\HYBd}$ (resp. $\P_{\HYB}$) the probability function under the hybrid error model 2 (resp. model 1) specified by a given factorization of the error locator, and by a sequence of fixed error vectors $\beps_j$ for every $j$ such that $p_j\in\cP\left(\Lu\right)$.

		Using the law of total probability, we have that $\P_{\HYBd}( \cF )$ can be decomposed as the sum
		
		$$
		\P_{\HYBd}(\cF)
		=
		\sum_{\Li|\Lmi}
		\P_{\HYBd}(\cF \ |\ \Lambda_{\bE} = \Li\Lu)
		\ 
		\P_{\HYBd}( \Lambda_{\bE} = \Li\Lu),
		$$
		where 
		$
		\P_{\HYBd}(\cF \ |\ \Lambda_{\bE} = \Li\Lu)
		=
		\P_{\HYB}(\cF)
		$,
		whereas
		$$
		\P_{\HYBd}(\Lambda_{\bE} = \Li\Lu)
		= 
		\left(\frac{\Li}{\Lmi}\right)^{\ell}
		\prod_{p\in\cP(\Li)}\left(1 - \frac{1}{p^{\ell}}\right)
		$$
		as in Equation~\eqref{ConditionProba}.
		
		Plugging the above two expressions in the decomposition from the law of total probability, similarly as done in the proof of Theorem~\ref{thm:main2}, we can upper bound 
		$
		\P_{\HYBd}(\cF)/ \left( \frac{2^{\du}3 B}{\Lmi} \right)^\ell
		$ 
		by
		\begin{equation*}
			\sum_{\Li|\Lmi}
			\Li^{\ell + 1}
			\prod_{p \in\cP(\Li)}\left(1 -\frac{1}{p^{\ell + \nu_p(\Li)}}\right)
			\leq
			\Lmi^{\ell + 1}
			\prod_{p\in\cP(\Lmi)}
			\frac{1 - 1/p^{\ell + \nu_p(\Lmi)}}{1 - 1/p^{\ell + 1}}.
		\end{equation*}
		Thus,
		\begin{align*}
			\P_{\HYBd}( \cF )
			&\leq
			\left(2^{\du} 3 B \right)^\ell  \Lmi
			\prod_{p\in\cP(\Lmi)}
			\frac{1 - 1/p^{\ell + \nu_p(\Lmi)}}{1 - 1/p^{\ell + 1}}\\
			&\leq 
			\left(2^{\du} 3 B \right)^\ell   2^{\hdi}
			\prod_{p\in\cP(\Lmi)}
			\frac{1 - 1/p^{\ell + \nu_p(\Lmi)}}{1 - 1/p^{\ell + 1}}
		\end{align*}
		and we conclude by using that 
		$
		\left(2^{\du} 3 B \right)^\ell 2^{\hdi}
		=
		2^{-(\ell+1)(\dmi - \hdi)}
		$.
	\end{proof}
	
	\section{The case of bad primes}
	\label{Sec:bad primes}
	In this section we use the hybrid analysis technique presented above to extend our study of the decoding failure in a context where the hypothesis $\gcd(g,N) = 1$ of Definition~\ref{def:SRN} does not hold, thus some reductions in the encoding of $\bf/g$ may not be defined.
	Primes relative to undefined reductions are called \textit{bad primes}. 
	Our theorems~(\ref{thm:main1bad primes} and~\ref{thm:main2bad primes}) are the first results (in the bad primes' scenario) relative to the decoding beyond uniqueness for rational number codes.
	
	In the case of simultaneous rational function reconstruction (over $\F_q[x]$), instead of the primes $p_1,\ldots,p_n$, distinct evaluation points $\alpha_1,\ldots,\alpha_n\in\F_q$ are chosen, so in the polynomial context the notion of bad primes correspond to \textit{poles} of the vector of rational functions $\bf/g$, \ie roots of the denominator $g$.
	We find two approaches in the literature to deal with poles: in~\cite{kaltofen2020hermite} an extra symbol $\infty$ is used,
	while in~\cite{guerrini2023simultaneous} coordinates are given by shifted Laurent series representations of the fractions.
	
	In particular, considering the rational function case, the authors of~\cite{guerrini2023simultaneous} introduced the following multi-precision encoding composed of a valuation part and a reduction part, which we give in the rational number case, where the shifted Laurent series is replaced by a shifted $p$-adic expansion of the vector of fractions $\bf/g$: 
	\begin{defi}[Multi-precision encoding]
		\label{def:MultiprecisionEncoding}
		Given a sequence of multiplicities $\lambda_1,\ldots,\lambda_n$ associated to the primes $p_1,\ldots,p_n\in\Z$, and a reduced vector of fractions $\bf/g\in\Q^\ell$,  we define its multi-precision encoding to be the sequence of couples $\Evi \left(\bf/g\right)\coloneq\left(\nu_{p_j}(g),\cS_j(\bf/g)\right)_{1 \le j \le n}$ such that
		\begin{equation*}
			\cS_j(\bf/g)
			\coloneq
			\bf/\left(g/p_j^{\nu_{p_j}(g)}\right) \ 
			\bmod
			p_j^{\lambda_j - \nu_{p_j}(g)}.
		\end{equation*}
		By convention, we set $\cS_j(\bf/g) = \Bold{1}$ when $\nu_{p_j}(g) = \lambda_j$.
	\end{defi}
	
	Here we prove, under the hypothesis $N\ge 2FG$, the injectivity of the above encoding.
	
	\begin{prop}
		\label{prop:InjectiveEncodingbad primes}
		Let $\bf/g,\bf' /g'\in\Q^{\ell}$ with $\|\bf\|_{\infty}<F, \ 0<g<G$ such that $\Evi \left(\bf/g\right) = \Evi \left(\bf'/g'\right)$. If we assume that $N \ge 2FG$, 
		the equality $\bf/g = \bf' /g'$ holds.
	\end{prop}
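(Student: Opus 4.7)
The plan is to show that matching multi-precision encodings, combined with the size bound $N \ge 2FG$, forces the cross-product $f_i g' - f'_i g$ to both be divisible by $N$ and to lie in absolute value strictly below $N$, so it must vanish.

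First, I would unpack the hypothesis $\Evi(\bf/g) = \Evi(\bf'/g')$: for every $j$, it gives both the equality of valuations $\alpha_j \coloneq \nu_{p_j}(g) = \nu_{p_j}(g')$ and the equality of the reduction parts $\cS_j(\bf/g) = \cS_j(\bf'/g')$. Write $g = g_j p_j^{\alpha_j}$ and $g' = g'_j p_j^{\alpha_j}$, where $g_j, g'_j \in \Z$ are coprime to $p_j$. When $\alpha_j < \lambda_j$, the equality of reductions rewrites as
\begin{equation*}
\bf \, g'_j \equiv \bf' \, g_j \pmod{p_j^{\lambda_j - \alpha_j}},
\end{equation*}
and this congruence is trivially true when $\alpha_j = \lambda_j$ (the convention $\cS_j = \Bold{1}$ matches on both sides and the modulus becomes $1$).

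Next I would multiply both sides by $p_j^{\alpha_j}$. Since $\bf g' - \bf' g = p_j^{\alpha_j}(\bf g'_j - \bf' g_j)$, the previous congruence yields $p_j^{\lambda_j} \mid (f_i g' - f'_i g)$ for every coordinate $i$ and every $j$. Because the prime powers $p_j^{\lambda_j}$ are pairwise coprime, the Chinese Remainder Theorem gives $N = \prod_{j=1}^n p_j^{\lambda_j} \mid (f_i g' - f'_i g)$ for all $i$.

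Finally, I apply the size bounds. Assuming the stated bounds hold for both $\bf/g$ and $\bf'/g'$, we have $|f_i g' - f'_i g| \le |f_i| g' + |f'_i| g < 2FG$. Combined with the hypothesis $N \ge 2FG$, this gives $|f_i g' - f'_i g| < N$, and together with divisibility by $N$ forces $f_i g' = f'_i g$ for every $i$, i.e.\ $\bf/g = \bf'/g'$.

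The main delicate point is just bookkeeping around the convention $\cS_j(\bf/g) = \Bold{1}$ in the boundary case $\alpha_j = \lambda_j$; once one checks that this case contributes the vacuous divisibility $p_j^{\lambda_j} \mid p_j^{\lambda_j}(\bf g'_j - \bf' g_j)$, the rest is a straightforward CRT plus size argument.
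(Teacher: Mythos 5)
Your proof is correct and follows essentially the same route as the paper's: unpack $\Evi(\bf/g)=\Evi(\bf'/g')$ into equality of the valuations $\nu_{p_j}(g)=\nu_{p_j}(g')$ and of the reductions $\cS_j$, rewrite the latter as a congruence $\bf g'/p_j^{\vr_j} \equiv \bf' g/p_j^{\vr_j} \pmod{p_j^{\lambda_j-\vr_j}}$, multiply by $p_j^{\vr_j}$ to obtain $p_j^{\lambda_j}\mid(\bf g'-\bf' g)$, apply CRT to get $N\mid(\bf g'-\bf' g)$, and conclude via the size bound $\|\bf g'-\bf' g\|_\infty<2FG\le N$. The only difference is cosmetic: you make the boundary case $\nu_{p_j}(g)=\lambda_j$ (the $\cS_j=\Bold{1}$ convention) explicit, which the paper leaves implicit.
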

	
	\begin{proof}
		For every $j=1,\ldots,n$ we let $\vr_j \coloneq\nu_{p_j}(g) = \nu_{p_j}(g')$. By hypothesis $\cS_j(\bf/g) = \cS_j(\bf'/g')$, i.e.
		$$
		\bf/\left(g/p_j^{\vr_j}\right)
		=
		\bf' /\left(g'/p_j^{\vr_j}\right) \bmod p_j^{\lambda_j - \vr_j}
		\  \iff \ 
		\bf g'/p_j^{\vr_j}
		=
		\bf' g/p_j^{\vr_j} \bmod p_j^{\lambda_j - \vr_j}.
		$$
		In other words $\bf g' = \bf' g  \
		\bmod p_j^{\lambda_j}$, which implies that $\bf g' = \bf' g \bmod N$. Since by hypothesis $\|\bf g' - \bf' g\|_{\infty}< 2FG\leq N$, we conclude that $\bf g' = \bf' g$ in $\Q^{\ell}$.
	\end{proof}
	
	Under the hypothesis $2FG\leq N$, we can then introduce the \textit{simultaneous rational number code with bad primes} as the set 
	\begin{equation*}
		\SRN^{\infty}_{\ell}(N;F,G)
		\coloneq
		\left\{
		\Evi\left(\frac{\bf}{g}\right)
		:\
		\begin{array}{c}
			\|f\|_{\infty}<F, \quad 0<g<G,\\
			\gcd(f_1,\ldots,f_\ell,g)=1\\
		\end{array}
		\right\}.
	\end{equation*}
	We will refer to it as the SRN code with bad primes.

	Being composed of two parts, codewords $\Evi(\bf/g)$ can be affected by two kinds of errors (valuation and evaluation errors). 
	Here we adapt the hybrid analysis of Section~\ref{sec:Hybrid Decoding}, with the factorization of the error locator $\Lambda = \Li\Lu$ reflecting these two types of errors (see Definition~\ref{def:errorsupportbad primes}).
	
	\begin{defi}
		\label{def:ambientspecebadprimes}
		Let the \textit{ambient space of received words} be the quotient
		\begin{equation*}
			\SL
			\coloneq
			\left(\prod_{j=1}^{n}
			[0,\lambda_j]\times\left(
			\Z/
			p_j^{\lambda_j}\Z
			\right)^\ell\right)/\sim
		\end{equation*}
		where $\sim$ is the equivalence relation for which $(\vr_j,\br_j)_{1 \le j \le
			n}\ \sim\ (\vr'_j,\br'_j)_{1 \le j \le n}$ if and only if  for every
		$j=1,\ldots,n, \ p_j^{\vr'_j}\br_j = p_j^{\vr_j}\br'_j\ \bmod
		p_j^{\lambda_j}$. We say that a representative $(\vr_j,\br_j)_{1 \le j \le
			n}$ is \emph{reduced} if $\gcd(\br_j, p_j^{\vr_j}) = 1$ for every
		$j=1,\ldots,n$. Define $R_i \coloneq \CRT_N(r_{i,1}, \dots, r_{i,n})$ for
		every $i=1,\ldots,\ell$.
	\end{defi}
	
	In what follows we can
	always assume that the received word $(\vr_j,\br_j)_{1 \le j \le n}$ is reduced, thanks to the following proposition:
	\begin{prop}
		Any equivalence class contains a reduced representative.
	\end{prop}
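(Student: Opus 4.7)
The plan is to reduce the problem to each coordinate $j$ separately, since the equivalence relation $\sim$ in Definition~\ref{def:ambientspecebadprimes} is defined componentwise. Fix an equivalence class and a representative $(\vr_j, \br_j)_{1\le j\le n}$; I will construct a reduced representative coordinate by coordinate.

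Fix $j$. The degenerate case $\br_j = \Bold{0}$ is handled immediately: take $(0, \Bold{0})$, which is reduced (since $\gcd(\Bold{0}, p_j^0) = 1$) and equivalent to $(\vr_j, \Bold{0})$ (both sides of the defining identity vanish). Otherwise, set
\[
k_j \coloneq \min\bigl(\vr_j,\ \nu_{p_j}(\br_j)\bigr),\quad \vr'_j \coloneq \vr_j - k_j \in [0,\lambda_j].
\]
Because $k_j \le \nu_{p_j}(\br_j)$, every component of $\br_j$ lies in $p_j^{k_j}\Z/p_j^{\lambda_j}\Z$; multiplication by $p_j^{k_j}$ gives a bijection between $\Z/p_j^{\lambda_j - k_j}\Z$ and $p_j^{k_j}\Z/p_j^{\lambda_j}\Z$, so there is a unique preimage which I lift arbitrarily to $\br'_j \in (\Z/p_j^{\lambda_j}\Z)^\ell$ satisfying $p_j^{k_j}\br'_j = \br_j \bmod p_j^{\lambda_j}$.

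The equivalence $(\vr_j, \br_j) \sim (\vr'_j, \br'_j)$ is then a one-line computation:
$p_j^{\vr'_j}\br_j = p_j^{\vr_j - k_j}\cdot p_j^{k_j}\br'_j = p_j^{\vr_j}\br'_j \bmod p_j^{\lambda_j}$. For reducedness, split on $\vr'_j$: if $\vr'_j = 0$, then $\gcd(\br'_j, 1) = 1$ trivially; if $\vr'_j > 0$, then $k_j < \vr_j$, which forces $k_j = \nu_{p_j}(\br_j)$ and $\nu_{p_j}(\br'_j) = \nu_{p_j}(\br_j) - k_j = 0$, so at least one component of $\br'_j$ is coprime to $p_j$ and $\gcd(\br'_j, p_j^{\vr'_j}) = 1$.

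There is no real obstacle here; the only point requiring a small amount of care is the lifting step (constructing $\br'_j$ from $\br_j$), but the valuation inequality $k_j \le \nu_{p_j}(\br_j)$ makes it automatic. The case analysis on $\vr'_j = 0$ versus $\vr'_j > 0$ is what ensures the valuation part of the reduced representative cannot be "too large" compared to what $\br'_j$ can absorb.
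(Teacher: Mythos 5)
Your proof is correct and follows essentially the same route as the paper: set $k_j = \min(\vr_j,\nu_{p_j}(\br_j))$ (the paper writes it as $p_j^{\eta_j}=\gcd(\br_j,p_j^{\vr_j})$), divide $\br_j$ by $p_j^{k_j}$, and shift the valuation part down by $k_j$. The only difference is that you spell out the lifting back to $\Z/p_j^{\lambda_j}\Z$ and the case split showing reducedness, which the paper treats as obvious from the choice of $\eta_j$.
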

	
	\begin{proof}
		Given any received word $(\vr_j,\br_j)_{1 \le j \le n}$,  for
		every $j=1,\ldots,n$ we let
		$p_j^{\eta_j} \coloneq \gcd(\br_j,p_j^{\vr_j}) $. Then $(\vr_j,\br_j)_{1 \le j \le n}\sim\left(\vr_j -
		\eta_j, \br_j^{\lambda_j}/p_j^{\eta_j} \ \bmod
		p_j^{\lambda_j}\right)_{1 \le j \le n}$, with the representative on the right-hand side
		clearly reduced by the definition of $p_j^{\eta_j}$. 
	\end{proof}
	
	In the ambient space $\SL$ we identify received words which represent the same reduced vector of fractions in the sense that, by definition
	\begin{itemize}
		\item $(\vr_j,\br_j)_{1 \le j \le n}\
		\sim\
		(\vr_j,\br'_j)_{1 \le j \le n}
		\ \Leftrightarrow
		\br_j = \br'_j \ \bmod p_j^{\lambda_j - \vr_j}$.
		\item Given a received valuation $0\leq\vr_j\leq\lambda_j$ then for every $1\leq\delta_j\leq \lambda_j - \vr_j$
		\begin{equation*}
			(\vr_j,\br_j)_{1 \le j \le n}\
			\sim\
			(\vr_j + \delta_j,p_j^{\delta_j}\br_j)_{1 \le j \le n}.
		\end{equation*}
	\end{itemize}
	
	\begin{remark}
		Thanks to the first of the above two points we can map the evaluation of a reduced vector of rationals $\Ev^\infty(\bf/g)$ into the space of received words.
		
	\end{remark}

	\begin{defi}
		\label{def:DistFuncbad primes}
		Given two elements $\bR_1 := (\vr_j,\br_j)_{1 \le j \le n}, \bR_2 :=
		(\vr'_j,\br'_j)_{1 \le j \le n}$ in $\SL$, we
		define the columns $\be_j$ of the relative error matrix $\bE_{\bR_1,\bR_2}$ as
		\begin{equation*}
			\be_j \coloneq
			p_j^{\vr_j} \br'_j - p_j^{\vr'_j} \br_j
			\bmod p_j^{\lambda_j}.
		\end{equation*}   
		We let the relative error and truth locator be
		\begin{equation*}
			\Lambda_{\bR_1,\bR_2} \coloneq \prod_{j=1}^np_j^{\lambda_j - \nu_{p_j}(\be_j)}, 
			\quad
			Y_{\bR_1,\bR_2} \coloneq \prod_{j=1}^n p_j^{\nu_{p_j}(\be_j)}
		\end{equation*}
		respectively, and the relative distance 
		$d\left(\bR_1,\bR_2\right) \coloneq \log\left(\Lambda_{\bR_1,\bR_2} \right)$.
	\end{defi}
	
	\begin{remark}
		\label{rmk:errorsbad primes}
		Unlike the errors considered in Sections~\ref{Sec:SRN}
		and~\ref{sec:Hybrid Decoding}, in this case the usual relation
		$\bR_1 =\bR_2 + \bE$ does not hold.  
		For this reason the error models (see Subsection~\ref{subsect:Errormodelsbad primesRN}) will be defined directly by distributions in the space of received words $\SL$.
	\end{remark}    
	In spite of the above remark, we note the consistency of the error $\be_j$ with the equivalence relation $\sim$, indeed by definition 
	\begin{equation*}
		\be_j = \boldsymbol{0} \bmod p_j^{\lambda_j} \quad 
		\forall j = 1,\ldots,n
		\ \ 
		\Leftrightarrow
		\ \
		(\vr_j,\br_j)_{1\le j \le n} \sim (\vr'_j,\br'_j)_{1\le j \le n}.
	\end{equation*}
	
	Due to the properties of $\sim$, we can partition the set of error positions into valuation and evaluation errors.
	\begin{defi}
		\label{def:errorsupportbad primes}
		Given two evaluations 
		$
		(\vr_j,\br_j)_{1 \le j \le n}$, $(\vr'_j,\br'_j)_{1 \le j \le n} 
		\in \SL
		$ 
		satisfying
		$\gcd(p_j^{\vr_j},\br_j)=1$, 
		we divide the error support
		\begin{equation*}
			\xi=\{j \ | \ p_j^{\vr_j} \br'_j \neq p_j^{\vr'_j} \br_j \bmod p_j^{\lambda_j}\}
			=
			\{j \ | \ (\vr_j,\br_j) \not\sim (\vr'_j,\br'_j)\}
		\end{equation*}
		into the \emph{valuation errors}
		$$\xi_v := \{j \ | \ \vr_j \neq \vr'_j \}$$
		and the remaining \emph{evaluation errors}
		$$\xi_e = \{j \ | \ (\vr_j = \vr'_j) \text{ and } (\br_j \neq \br'_j \bmod
		p_j^{\lambda_j - \vr_j}) \}.$$
	\end{defi}

	We provide an equivalent, yet more practical, representation of the errors.
	
	\begin{remark}
		\label{rem:rewriteErrVectorsbad primes}
		Given a codeword $\left(\nu_{p_j}(g),\cS_j(\bf/g)\right)_{1 \le j \le n}$ (as in
		Definition~\ref{def:MultiprecisionEncoding}) and a received word $(\vr_j,\br_j)_{1 \le j \le n}\in
		\SL$,
		the sequence of error vectors $(\be_j)_{1 \le j \le n}$ is given by
		\begin{equation*}
			\be_j = p_j^{\vr_j} \cS_j(\bf/g) - p_j^{\nu_{p_j}(g)}\br_j \ \bmod p_j^{\lambda_j}.
		\end{equation*}
		Multiplying the above by the invertible element $g/p_j^{\nu_{p_j}(g)}$, we obtain that up to invertible transformations of the error
		sequence components (leaving the distance unchanged), we can equivalently view the sequence of
		error vectors as given by
		\begin{equation*}
			\widetilde{\be}_j \coloneq \frac{g}{p_j^{\nu_{p_j}(g)}} \be_j =p_j^{\vr_j}\bf - g\br_j \ \bmod p_j^{\lambda_j}.
		\end{equation*}
		
	\end{remark}

	\paragraph*{Study of potential errors and received words around a fixed codeword}
	Due to Remark~\ref{rmk:errorsbad primes}, we need to study what kind of errors and received words we can obtain around a fixed vector of fractions $\bf/g$,
	in particular with respect to the distinction between valuation and evaluation errors.
	Regarding the error positions as long as $\xi_e, \xi_v \subset \{1, \dots, n\}$ and $\xi_e \cap \xi_v = \varnothing$ we have no constraints: all valuation (resp. evaluation) error supports $\xi_v$ (resp. $\xi_e$) are attained.
	Once the error positions have been fixed and partitioned as $\xi_v \cup \xi_e$, the valuations of the error vectors need to satisfy $\mu_j = \nu_{p_j}(\be_j) = \lambda_j$ for every position $j$ which is not erroneous, \ie $\forall j \notin \xi_e \cup \xi_v$.
	Let us examine what can happen in the evaluation and valuation error cases respectively:
	\begin{itemize}
		\item If $j\in \xi_e$, we have an evaluation error, thus any received word $\bR$ must satisfy $\vr_j = \nu_{p_j}(g)$, furthermore we must have that the valuation of any error vector $\be_j$ must satisfy $\mu_j = \nu_{p_j}(p_j^{\vr_j}\bf - g\br_j) \ge \nu_{p_j}(g)$ thus, dividing by $p_j^{\vr_j}$, we have that $\cS_j(\bf/g) - r_j$ can be any element of valuation 
		$\mu_j - \nu_{p_j}(g)$.
		\item If $j\in\xi_v$, we have a valuation error, thus for every received word we have either 
		\begin{enumerate}
			\item $\vr_j<\nu_{p_j}(g)$: in this case the valuation of the error vector and the received word must coincide, \ie $\mu_j = \vr_j$, and from the definition of $\widetilde{\be}_j$ we must have that $\widetilde{\be}_j = p_j^{\mu_j}\bf \ \bmod p _j^{\nu_{p_j}(g)}$, regardless of the reduction part $\br_j$. Thus, in this case we do not have any constraints on $\br_j$.
			\item $\vr_j>\nu_{p_j}(g)$: in this case the valuation of the error vector must coincide with the valuation of $g$, \ie $\mu_j = \nu_{p_j}(g)$. Besides this valuation constraint, the error vectors can take any value, as well as the received reductions $\br_j$.
		\end{enumerate}
	\end{itemize}
	
	\paragraph*{Minimal distance}
	Similarly to Lemma~\ref{lm:minDistSRNcode}, we can prove that the Minimal distance of SRN codes with bad primes satisfies the following
	\begin{lemma}
		\label{lm: MinDistSRNbad primes} 
		We have $d\left(\SRN^{\infty}_\ell(N;F,G)\right)>\log_2\left(\frac{N}{2FG}\right).$
	\end{lemma}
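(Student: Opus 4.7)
My plan is to mirror the proof of Lemma~\ref{lm:minDistSRNcode}, adapted to the multi-precision encoding that tolerates bad primes. Given two distinct codewords $\bC_1 = \Evi(\bf/g)$ and $\bC_2 = \Evi(\bf'/g')$, I would set $\bD \coloneq g\bf' - g'\bf \in \Z^\ell$. Since the encoding is injective by Proposition~\ref{prop:InjectiveEncodingbad primes} and reduced fractions with positive denominator admit a unique representative, $\bD \neq \boldsymbol{0}$; the triangle inequality then gives $\|\bD\|_\infty < 2FG$. The goal therefore reduces to proving that the truth locator $Y_{\bC_1, \bC_2} = N / \Lambda_{\bC_1, \bC_2} = \prod_j p_j^{\mu_j}$ divides $\bD$ componentwise in $\Z$: this will force $Y \leq \|\bD\|_\infty < 2FG$, whence $\log_2 \Lambda_{\bC_1, \bC_2} > \log_2(N/(2FG))$.

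The technical heart is to show $p_j^{\mu_j} \mid \bD$ for every $j$. First I would derive the identity
\[
    g g'\, \be_j \equiv p_j^{\vr_j + \vr'_j}\, \bD \pmod{p_j^{\lambda_j}},
\]
obtained by multiplying the defining formula of $\be_j$ in Definition~\ref{def:DistFuncbad primes} by $gg'$ and using $g\, \cS_j(\bf/g) \equiv p_j^{\vr_j} \bf \pmod{p_j^{\lambda_j}}$ together with its primed analogue. Writing $gg' = p_j^{\vr_j + \vr'_j}\, h$ with $h$ coprime to $p_j$, this rearranges to $h\, \be_j \equiv \bD \pmod{p_j^{\lambda_j - \vr_j - \vr'_j}}$, yielding $p_j^{\min(\mu_j,\, \lambda_j - \vr_j - \vr'_j)} \mid \bD$ as a first estimate.

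The main obstacle is the edge case $\vr_j + \vr'_j + \mu_j > \lambda_j$, where this estimate only gives $p_j^{\lambda_j - \vr_j - \vr'_j} \mid \bD$, weaker than the required $p_j^{\mu_j} \mid \bD$. To close this gap I would case-split on the relative sizes of $\vr_j$ and $\vr'_j$, exploiting the reducedness $\gcd(\gcd(\bf), g) = 1$, which guarantees that $\cS_j(\bf/g)$ has at least one coordinate coprime to $p_j$ (and likewise for $\cS_j(\bf'/g')$). When $\vr_j \neq \vr'_j$, this pins $\mu_j$ exactly to $\min(\vr_j, \vr'_j)$, and the unequal $p_j$-valuations of $g\bf'$ and $g'\bf$ show directly that the $p_j$-valuation of $\bD$ equals $\min(\vr_j,\vr'_j) = \mu_j$. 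When $\vr_j = \vr'_j$, writing $\be_j \equiv p_j^{\vr_j}(\cS_j(\bf'/g') - \cS_j(\bf/g)) \pmod{p_j^{\lambda_j}}$ and transporting the valuation of this difference through the $p_j$-unit $(g/p_j^{\vr_j})(g'/p_j^{\vr_j})$ identifies it with the $p_j$-valuation of $\bD / p_j^{\vr_j}$, again yielding $p_j^{\mu_j} \mid \bD$ (with the sub-case $\be_j = 0$ giving $p_j^{\lambda_j} \mid \bD$ directly). Once this componentwise divisibility is established for every $j$, the product $Y$ divides $\bD \neq \boldsymbol{0}$ and the conclusion follows.
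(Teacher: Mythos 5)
Your proposal is correct, but it reaches the goal by a detour that the paper avoids. The paper's proof multiplies the defining identity for $\be_j$ by the $p_j$-unit $\frac{g}{p_j^{\nu_{p_j}(g)}}\frac{g'}{p_j^{\nu_{p_j}(g')}}$ rather than by $gg'$, which directly yields $\frac{g}{p_j^{\vr_j}}\frac{g'}{p_j^{\vr'_j}}\be_j \equiv \bf' g - \bf g' \pmod{p_j^{\lambda_j}}$ with no spurious power of $p_j$ on either side; it then multiplies both sides by $\Lambda$, so that $\Lambda \be_j \equiv 0 \pmod{p_j^{\lambda_j}}$ kills the left-hand side for every $j$ simultaneously, giving $N \mid \Lambda(\bf'g - \bf g')$ and hence $Y \mid (\bf'g - \bf g')$ by CRT, with no per-prime case analysis at all. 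Your extra factor $p_j^{\vr_j + \vr'_j}$ is entirely an artifact of choosing $gg'$ as multiplier, and the ensuing edge case $\vr_j + \vr'_j + \mu_j > \lambda_j$, together with the case split on $\vr_j$ vs.\ $\vr'_j$, is the price you pay for it — had you used the unit multiplier, the truncated valuations of $\be_j$ and $\bD$ would have been identified immediately. Your case analysis is sound (though note that the coprimality of $\cS_j(\bf/g)$ with $p_j$ that you invoke follows from reducedness only when $p_j \mid g$; in the subcase $\vr_j = 0 < \vr'_j$ you actually rely on the unit coordinate of $\cS_j(\bf'/g')$ and the divisibility $p_j^0 \mid \bD$ is vacuous, so the argument still closes), but the global $\Lambda$-multiplication trick is both shorter and more robust, and it is the one you should internalize since the paper reuses it in Lemma~\ref{lm:SE_bad primes}.
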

	\begin{proof}
		Let $\bC_1 = (\nu_{p_j}(g),\cS_j(\bf/g))_{1\le j \le n}, \bC_2 = (\nu_{p_j}(g'),\cS_j(\bf'/g'))_{1\le j \le n}$ be two distinct codewords. 
		From 
		\begin{equation*}
			\be_j 
			=
			p_j^{\nu_{p_j}(g)}\left(\frac{\bf'}{g'/p_j^{\nu_{p_j}(g')}}\right)
			-
			p_j^{\nu_{p_j}(g')}\left(\frac{\bf}{g/p_j^{\nu_{p_j}(g)}}\right)
			\
			\bmod
			p_j^{\lambda_j},
		\end{equation*}
		we see that
		\begin{equation*}
			\frac{g}{p_j^{\nu_{p_j}(g)}}\frac{g'}{p_j^{\nu_{p_j}(g')}}\be_j =  \bf' g - \bf g' \ \bmod p_j^{\lambda_j}. 
		\end{equation*}
		Using $\Lambda \be_j = 0 \bmod p_j^{\lambda_j}$ for all $j$, we obtain
		$$
		\forall 1 \le j \le n, \quad
		0 = \Lambda \frac{g}{p_j^{\nu_{p_j}(g)}}\frac{g'}{p_j^{\nu_{p_j}(g')}}\be_j = \Lambda (\bf' g - \bf g') \ \bmod p_j^{\lambda_j}.
		$$
		Therefore, $N$ divides $\Lambda (\bf' g - \bf g')$, so $Y = N/\Lambda$ divides $(\bf' g - \bf g')$.
		By the injectivity of the evaluation, $\bC_1 \not\sim \bC_2$ involves $\bf' g - \bf g' \neq \boldsymbol{0}$, 
		which implies that $Y \le \|\bf g' - \bf' g\|_\infty < 2FG$. 
		Hence, for all codewords $\bC_1 \neq \bC_2$, we bound $d(\bC_1,\bC_2) = \log(\Lambda) = \log(N/Y) > \log(N/2FG)$.
	\end{proof}

	\subsection{Key equations}
	The decoding of SRN codes with bad primes, as in Section~\ref{Sec:SRN}, is based on a basis reduction over a lattice describing the solution set to some key equations.
	Thanks to Remark~\ref{rem:rewriteErrVectorsbad primes} and the definition of $\Lambda$, we have that 
	$\Lambda \be_j = 0 \bmod p_j^{\lambda_j}$, and so 
	$0 = \Lambda \widetilde{\be}_j = p_j^{\vr_j} \Lambda \bf - \Lambda g\br_j \bmod p_j^{\lambda_j}$.
	We observe that for every couple of received word $\left(\vr_j,\br_j\right)_{1 \le j \le n}$ and reduced vector of
	fractions $\bf/g$ we have that the equation 
	$
	\CRT_N\left(p_j^{\vr_j}\right) \Lambda f_i
	=
	\Lambda g R_i \ \bmod N
	$
	holds for every $i=1,\ldots,\ell$. By defining the new variables $\varphi \coloneq \Lambda g$,
	$\Bold{\psi} = \Lambda \bf$ we get the \textit{key equations} in presence of bad primes: 
	\begin{equation}
		\label{keyEqRNbad primes}
		\forall i=1,\ldots,\ell, 
		\quad
		\CRT_N\left(p_j^{\vr_j}\right) \psi_i
		=
		\varphi R_i
		\
		\bmod N.
	\end{equation} 
	For some distance parameter $d$, we let the set of solutions be
	\begin{align*}
		\SRd
		\coloneq
		\left\{\left(\varphi,\Bold{\psi}\right)\in\Z^{\ell+1}:
		\begin{array}{c}
			\CRT_N\left(p_j^{\vr_j}\right) \psi_i
			=
			\varphi R_i
			\
			\bmod N,
			\quad\forall i
			\\
			0<\varphi<2^d G,
			\
			\|\Bold{\psi}\|_{\infty}< 2^d F
		\end{array}
		\right\}.
	\end{align*} 
	
	If $\Lambda\leq 2^d$ we see that $v_{\bC}\coloneq(\Lambda g,\Lambda\bf)\in\SRd$.
	
	\paragraph*{Reduced key equations}
	It is possible to give an equivalent description of the solutions in $\SRd$, whose size constraints are smaller.
	Letting $N_{\infty}:=\prod_{j=1}^n p_j^{\vr_j}$ we note that, thanks to Equation~\eqref{keyEqRNbad primes}, $N_{\infty}|\varphi$ since $N_{\infty}|\CRT_N\left(p_j^{\vr_j}\right)$, $N_{\infty}|N$ and by hypothesis $\gcd(N_{\infty},R_i) = 1$ as received words are assumed to be reduced. Thus, we can rewrite Equation~\eqref{keyEqRNbad primes} in the following form, which we call \textit{reduced key equations}
	\begin{equation}
		\label{keyEqRNbad primesReduit}
		\forall i=1,\ldots,\ell, 
		\quad
		\begin{array}{c}
			\psi_i
			=
			\varphi' R_i' \bmod \frac{N}{N_{\infty}} \\
			0<\varphi'< 2^d G/N_{\infty} \text{ and } \|\Bold{\psi}\|_{\infty}< 2^d F
		\end{array}
	\end{equation}
	where $\varphi' \coloneq \varphi/N_{\infty}$ and $R_i' \coloneq R_i \CRT_{N/N_{\infty}}\left(\frac{N_{\infty}}{p_j^{\vr_j}}\right)$.

	\subsection[Decoding SNR codes with bad primes]{Decoding $\SRN^\infty_\ell$ codes}
	\label{subsect:DecodingSRNcodesbad primes}
	In this section we give our decoding algorithm for $\SRN$ codes with bad primes, which is a modification of Algorithm~\ref{algoSRN}. 
	
	As in Section~\ref{Sec:SRN}, the decoding is based on the computation of a short vector $v_s= (\varphi,\bPsi)$ solution of Equations~\eqref{keyEqRNbad primes}. 
	Given the lattice $\L_\infty$ spanned by the
	rows of the matrix
	\begin{equation}
		\label{eq:latticebad primes}
		\L_\infty=\operatorname{Span}\left(\begin{matrix}
			N_{\infty} & R_1' & \cdots & R_\ell' \\
			0 & N /N_{\infty} & \cdots & 0 \\
			\vdots & \vdots & \ddots & \vdots \\
			0 & 0 & \cdots & N /N_{\infty}
		\end{matrix}\right),
	\end{equation}

	we note that all solutions of Equation~\eqref{keyEqRNbad primes} are spanned
	by the rows of $\L_\infty$. Indeed, given a solution $(\varphi,\bPsi)$, thanks to
	Equation~\eqref{keyEqRNbad primesReduit}, we know that 
	$\varphi' = \varphi / N_{\infty}$ is an integer,
	and that
	for $i=1,\ldots,n$ the
	$(i+1)$-th entry $\psi_i - \varphi' R_i'$ of the difference
	$
	(\varphi, \bPsi) - \varphi'(N_{\infty}, R_1',\ldots,R_\ell')
	$ is zero modulo $\frac{N}{N_{\infty}}$.

	Also in this case we introduce a scaling operator $\sigma_{F,G} : \Q^{\ell+1}
	\rightarrow \Q^{\ell+1}$ such that $\sigma_{F,G}((v_0,v_1,\dots,v_\ell))\coloneq(v_0 F,v_1
	G,\dots,v_\ell G)$, to match the size constraints of $\SRd$ with the $\|\cdot\|_\infty$-norm of its elements.
	This scaling will transform $\L_\infty$ into the scaled lattice $\bar{\L}_\infty \coloneq \sigma_{F,G}(\L_\infty)$, and
	our solution set $\SRd$ into
	\begin{equation*}
		\SRd'\coloneq\sigma_{F,G}(\SRd)=\{(\varphi,\psi_1,\ldots,\psi_\ell)\in\bar{\L}_\infty:0<\varphi<2^d FG, \|\bPsi\|_{\infty}<2^d FG\}.
	\end{equation*}

	In the decoding algorithm we compute an element $v_s\in\SRd$ by computing a scaled short vector $\bar{v}_s \coloneq
	\mathcal{ASVP}_{\infty}(\bar{\L}_\infty)$, and unscaling it $v_s \coloneq \sigma_{F,G}^{-1}(\bar{v}_s)$. 
	
	Due to the approximation factor $\beta$ of the sub-routine $\mathcal{ASVP}_\infty$, assuming Constraint~\ref{c_1bad primes}, the solution $v_s$ belongs to a larger set. 
	\begin{constraint}\label{c_1bad primes}Given the received word $(\vr_j,\br_j)_{1\leq j \leq n}$, there exists a code word $\Evi (\bf/g)$ such that the corresponding error locator satisfies $\Lambda \leq 2^d$.
	\end{constraint}
	
	\begin{lemma}
		\label{Lemma1bad primes}
		Assuming Constraint~\ref{c_1bad primes}, we have that $v_s\in \SR \coloneq \SRdb$.
	\end{lemma}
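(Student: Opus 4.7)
The plan is to mirror the proof of Lemma~\ref{Lemma1} adapted to the bad-primes setting, since the structure of the argument carries over almost verbatim once one has verified that the natural candidate vector $v_{\bC} = (\Lambda g, \Lambda \bf)$ belongs to $\L_\infty$.

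First I would invoke Constraint~\ref{c_1bad primes} to fix a codeword $\Evi(\bf/g)$ with associated error locator $\Lambda \le 2^d$, and set $v_{\bC}\coloneq(\Lambda g,\Lambda\bf)$. Using Remark~\ref{rem:rewriteErrVectorsbad primes} together with $\Lambda \widetilde{\be}_j = 0 \bmod p_j^{\lambda_j}$ for every $j$ (which is the defining property of $\Lambda$), I would recover the key equations $\CRT_N(p_j^{\vr_j})\,\Lambda f_i = \Lambda g\, R_i \bmod N$, showing that $v_{\bC}$ is a solution. Then, by the explicit remark following the definition of $\L_\infty$, every solution of the key equations lies in the integer span of the rows of $\L_\infty$, so $v_{\bC}\in \L_\infty$, and hence $\sigma_{F,G}(v_{\bC}) \in \bar{\L}_\infty$.

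Next, I would bound the scaled $\ell_\infty$-norm of this lattice vector via the codeword size constraints $0 < g < G$ and $\|\bf\|_\infty < F$, obtaining
\begin{equation*}
    \|\sigma_{F,G}(v_{\bC})\|_\infty = \max\bigl(\Lambda g F,\ \Lambda \|\bf\|_\infty G\bigr) < \Lambda F G \le 2^d F G.
\end{equation*}
Applying the approximation guarantee of $\mathcal{ASVP}_\infty$ on $\bar{\L}_\infty$ together with this bound yields
\begin{equation*}
    \|\bar{v}_s\|_\infty \le \beta\,\lambda_\infty(\bar{\L}_\infty) \le \beta\,\|\sigma_{F,G}(v_{\bC})\|_\infty < \beta\, 2^d F G.
\end{equation*}

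To conclude, I would use the standing convention that $(\bar{v}_s)_0 \ge 0$ and the definition of $\SRd'$ (now with parameter $2^d\beta$ in place of $2^d$) to deduce $\bar{v}_s \in \sigma_{F,G}(\SRdb)$, and therefore $v_s = \sigma_{F,G}^{-1}(\bar{v}_s) \in \SRdb = S_{\bR}$. I do not expect any real obstacle here: the only ingredient beyond the original Lemma~\ref{Lemma1} is the verification that $v_{\bC}\in\L_\infty$, and that is exactly what the discussion introducing the lattice $\L_\infty$ and the reduced key equations was set up to ensure.
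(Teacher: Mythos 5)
Your proposal is correct and takes essentially the same approach as the paper, which likewise bounds $\|\bar{v}_s\|_\infty \leq \beta\lambda_\infty(\bar{\L}_\infty) \leq \beta\|\sigma_{F,G}(v_{\bC})\|_\infty < \beta\Lambda FG \leq \beta 2^d FG$ and then uses the sign convention $(\bar{v}_s)_0 \ge 0$ to place $v_s$ in $\SRdb$. The only difference is that you make explicit the (otherwise implicit) step that $v_{\bC}$ satisfies the key equations and hence lies in $\L_\infty$, which is a harmless and sensible elaboration.
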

	\begin{proof}
		We know that
		$\|\bar{v}_s\|_\infty\leq\beta\lambda_\infty(\bar{\L}_\infty)\leq\beta\|\sigma_{F,G}(v_C)\|_\infty <
		\beta\Lambda FG \leq \beta 2^d FG$.
		Since we assumed that $(\bar{v}_{s})_0 \ge 0$, we have $\bar{v}_s\in S'_{\bR,2^{d}\beta}$ and
		$v_s\in S_{\bR,2^{d}\beta}$.
	\end{proof}
	We notice that assuming Constraint~\ref{c_1bad primes} we also have $v_{\bC} \in \SR$.
	We are ready to introduce our decoding algorithm for SRN codes with bad primes. 
	\begin{algorithm}
		\caption{$SRN^\infty_\ell$ codes decoder.}
		\label{algoSRNbad primes}
		\SetAlgoLined \SetKw{KwBy}{par} \KwIn{$\SRN^{\infty}_\ell(N;F,G)$, received word $\bR:=(\vr_j,\br_j)_{1\le j\le n}$, distance
			bound $d$} \KwOut{A reduced vector of fractions $\bPsi'/\varphi'$ s.t. $d(\Evi(\bPsi'/\varphi'),\bR) \leq d$ or ``decoding failure''}
		
		\vspace{5pt}

		Let $\bar{\L}_p \coloneq \sigma_{F,G}(\L_\infty)$ be the scaled lattice of $\L_\infty$ defined in
		Equation~\eqref{eq:latticebad primes} \\
		Compute a short vector  $\bar{v}_s \coloneq \mathcal{ASVP}_{\infty} (\bar{\L}_\infty)$ \\
		Unscale the vector: $v_s = (\varphi,\psi_{1},\dots,\psi_{\ell}) \coloneq
		\sigma_{F,G}^{-1}(\bar{v}_s)$ \\
		Let $\eta \coloneq \gcd(\varphi,\psi_{1},\dots,\psi_{\ell})$, $\varphi'\coloneq\varphi/\eta$ and
		$\forall i, \ \psi'_i\coloneq\psi_i/\eta$ \label{step:lambdabad primes}\\
		\If{$\eta \le 2^d$, $|\varphi'| < G$ and $\forall i, \ |\psi_i'|< F$}
		{\textbf{return} $(\psi_1'/\varphi', \dots, \psi_\ell'/\varphi')$}
		\lElse{ \textbf{return} "decoding failure"}
	\end{algorithm}
	\begin{lemma}
		\label{lm:succeedsImpliesCorrectbad primes} 
		If Algorithm~\ref{algoSRNbad primes} returns $\bPsi'/\varphi'$ on input $\bR$ and
		parameter $d$, then $\bPsi'/\varphi'$ is associated to a code word of
		$\SRN^{\infty}_{\ell}(N;F,G)$ close to $\bR$, \ie it is a reduced vector of fractions with
		$\|\bPsi'\|_{\infty}<F$, $0<\varphi'<G$ and
		\mbox{$\dist(\Ev^\infty(\bPsi'/\varphi'),\bR) \le d$}.
	\end{lemma}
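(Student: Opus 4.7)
The plan is to verify the three promised conclusions (reducedness, size bounds, distance bound) in turn, of which only the distance bound needs substantial work. When Algorithm~\ref{algoSRNbad primes} returns, it has checked $\eta \le 2^d$, $|\varphi'|<G$, and $|\psi_i'|<F$ for all $i$. Since by construction $\eta = \gcd(\varphi,\psi_1,\ldots,\psi_\ell)$, dividing through by $\eta$ forces $\gcd(\varphi',\psi_1',\ldots,\psi_\ell')=1$, giving reducedness of $\bPsi'/\varphi'$. The sign convention $(\bar{v}_s)_0 \ge 0$ combined with the positive scaling by $F$ yields $\varphi \ge 0$, hence $\varphi' \ge 0$; since the returned fraction requires $\varphi' \neq 0$, we have $0 < \varphi' < G$.

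For the distance bound I will exploit that $v_s = (\varphi,\bPsi) = \eta(\varphi',\bPsi')$ lies in $\L_\infty$, so it satisfies the bad-primes key equations~\eqref{keyEqRNbad primes}. Reducing these modulo $p_j^{\lambda_j}$, using $\CRT_N(p_j^{\vr_j}) \equiv p_j^{\vr_j} \pmod{p_j^{\lambda_j}}$ and $R_i \equiv r_{i,j} \pmod{p_j^{\lambda_j}}$, gives $p_j^{\vr_j}\psi_i \equiv \varphi\, r_{i,j} \pmod{p_j^{\lambda_j}}$. Substituting $\psi_i = \eta\psi_i'$ and $\varphi = \eta\varphi'$ yields, for every $i,j$,
\[
\eta\bigl(p_j^{\vr_j}\psi_i' - \varphi' r_{i,j}\bigr) \equiv 0 \pmod{p_j^{\lambda_j}},
\]
and therefore the truncated valuation of $p_j^{\vr_j}\psi_i' - \varphi' r_{i,j} \bmod p_j^{\lambda_j}$ is at least $\lambda_j - \nu_{p_j}(\eta)$.

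To translate this into a bound on $\Lambda_{\Evi(\bPsi'/\varphi'),\bR}$, I invoke Remark~\ref{rem:rewriteErrVectorsbad primes} with $\bf \leftarrow \bPsi'$ and $g \leftarrow \varphi'$: the modified $j$-th error column for the codeword $\Evi(\bPsi'/\varphi')$ against $\bR$ is exactly $\widetilde{\be}_j = p_j^{\vr_j}\bPsi' - \varphi'\br_j \bmod p_j^{\lambda_j}$, and because $\bPsi'/\varphi'$ is reduced the factor $\varphi'/p_j^{\nu_{p_j}(\varphi')}$ relating $\widetilde{\be}_j$ to $\be_j$ is a $p_j$-unit, so $\nu_{p_j}(\be_j) = \nu_{p_j}(\widetilde{\be}_j) \ge \lambda_j - \nu_{p_j}(\eta)$. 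Consequently
\[
\Lambda_{\Evi(\bPsi'/\varphi'),\bR} = \prod_j p_j^{\lambda_j - \nu_{p_j}(\be_j)} \le \prod_j p_j^{\nu_{p_j}(\eta)} \le \eta \le 2^d,
\]
from which $d(\Evi(\bPsi'/\varphi'),\bR) = \log \Lambda \le \log \eta \le d$ follows.

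The only real obstacle is the passage between the key equations (which live naturally on $\varphi,\psi_i$) and the error vectors $\be_j$ (which involve the reductions $\cS_j$); Remark~\ref{rem:rewriteErrVectorsbad primes} is precisely what allows this translation while preserving $p_j$-valuations. Apart from this bookkeeping, the argument is a direct transcription of the no-bad-prime Lemma~\ref{lm:succeedsImpliesCorrect}.
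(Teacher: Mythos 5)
Your proof is correct and follows the same route as the paper's proof: lattice membership of $v_s = \eta(\varphi',\bPsi')$ gives the key equations mod each $p_j^{\lambda_j}$, whence a lower bound $\lambda_j - \nu_{p_j}(\eta)$ on the valuation of the error column, and so $\Lambda \mid \eta \le 2^d$; your version simply makes explicit the translation to the error vectors $\widetilde{\be}_j$ via Remark~\ref{rem:rewriteErrVectorsbad primes} and the positivity of $\varphi'$ from the sign convention, both of which the paper leaves tacit. One cosmetic slip: the fact that $\varphi'/p_j^{\nu_{p_j}(\varphi')}$ is a $p_j$-unit is automatic from the definition of valuation and has nothing to do with the reducedness of $\bPsi'/\varphi'$ --- reducedness is instead what you need to conclude that $\Evi(\bPsi'/\varphi')$ is a well-defined codeword of $\SRN^\infty_\ell(N;F,G)$.
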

	
	\begin{proof}
		The output vector $\bPsi/\varphi$ is associated to a code word of
		$\SRN^{\infty}_{\ell}(N;F,G)$ since the algorithm has verified the size conditions $|\varphi'| < G$,
		$|\psi_i'|< F$ for all $i$.
		Now, we use that $(\varphi,\bPsi)=(\eta \varphi',\eta\bPsi')$ is in the lattice $\L_\infty$, so that $\eta\left(\CRT_N(p_j^{\vr_j}) \bPsi' - \varphi' R_i\right) = 0 \bmod N$ for all
		$i$, which implies that $\nu_{p_j}(\eta)\geq \lambda_j - \mu_j =
		\nu_{p_j}(\Lambda)$ with $\Lambda$ being the error locator between $\Evi (\bPsi/\varphi)$ and the input $\bR$. Thus, $\Lambda|\eta\leq 2^d$, and we can conclude that
		$d(\Ev^\infty(\bPsi'/\varphi'),\bR) = \log \Lambda \le \log \eta \le d$.
	\end{proof}
	
	\subsection{Unique decoding}

	As pointed out in Remark~\ref{rem:failevenifbetais1}, it
	is not because of the approximation factor $\beta$ that Algorithm~\ref{algoSRNbad primes}
	might fail, but because we are decoding with a distance parameter $d>\log(\sqrt{N/2FG})$.
	Thus, at the cost of using an exact SVP solver, \ie\ a
	subroutine $\mathcal{ASVP_\infty}$ returning the shortest vector of
	$\bar{\L}_\infty$, we can assume $\beta = 1$. 
	The drawback of exact SVP solvers is that their complexity is exponential in the dimension of the lattice, nevertheless in our context can be reasonable to employ an exact SVP solver to compute $v_s$, as the dimension $\ell +1$ is fixed and can be assumed to be relatively small. 
	For this reason we prove the unique decoding of SRN codes with bad primes by means of Algorithm~\ref{algoSRNbad primes} with $\beta = 1$ (thus computing an element in $\SRd$) and when the distance parameter $d$ is below unique decoding capacity.
	\begin{prop}
		\label{Unicity}
		If $d(\Evi(\bf/g), (\vr_j,\br_j)_{1 \le j \le n}) \le d \le \log_2\left(\sqrt{N/2FG}\right)$, then $\SRd \subset v_{\bC} \Z$.
	\end{prop}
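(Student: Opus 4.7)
The plan is to adapt the cross-product argument classically used for unique decoding, this time working modulo $N/N_\infty$ via the reduced key equations~\eqref{keyEqRNbad primesReduit} rather than modulo $N$ directly. First I observe that $v_{\bC} = (\Lambda g, \Lambda \bf)$ itself lies in $\SRd$: the key equation $\CRT_N(p_j^{\vr_j})\, \Lambda f_i \equiv \Lambda g R_i \pmod{N}$ is the CRT assembly of $\Lambda \widetilde{\be}_j \equiv \boldsymbol{0} \pmod{p_j^{\lambda_j}}$ from Remark~\ref{rem:rewriteErrVectorsbad primes}, and the size constraints hold because $\Lambda \leq 2^d$, $0 < g < G$ and $\|\bf\|_\infty < F$.

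Next, fix an arbitrary $(\varphi, \bPsi) \in \SRd$. As already remarked just after~\eqref{keyEqRNbad primesReduit}, the key equations force $N_\infty \mid \varphi$; a prime-by-prime check, splitting on $\vr_j = \nu_{p_j}(g)$, $\vr_j < \nu_{p_j}(g)$ and $\vr_j > \nu_{p_j}(g)$, and using $\mu_j = \min(\vr_j, \nu_{p_j}(g))$ at the valuation-error positions, also gives $N_\infty \mid \Lambda g$. Setting $\varphi' := \varphi/N_\infty$ and $\gamma := \Lambda g/N_\infty$, both $(\varphi, \bPsi)$ and $v_{\bC}$ give reduced key equations
\begin{equation*}
    \psi_i \equiv \varphi' R'_i \pmod{N/N_\infty},
    \qquad
    \Lambda f_i \equiv \gamma R'_i \pmod{N/N_\infty}.
\end{equation*}
Cross-multiplying and subtracting yields $u_i := \varphi' \Lambda f_i - \gamma \psi_i \equiv 0 \pmod{N/N_\infty}$. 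The bounds $\varphi', \gamma < 2^d G/N_\infty$ and $|\psi_i|, |\Lambda f_i| < 2^d F$ imposed by $\SRd$ give $|u_i| < 2^{2d+1} FG/N_\infty$, and the hypothesis $d \leq \log_2\!\bigl(\sqrt{N/(2FG)}\bigr)$ tightens this into $|u_i| < N/N_\infty$, forcing $u_i = 0$.

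Multiplying $u_i = 0$ by $N_\infty$ and dividing by $\Lambda$ gives $\varphi f_i = g \psi_i$ for every $i$. Since the codeword is reduced, $\gcd(f_1, \ldots, f_\ell, g) = 1$, so a Bezout relation combined with $g \mid \varphi f_i$ for all $i$ yields $g \mid \varphi$; writing $\varphi = m g$ then forces $\psi_i = m f_i$, so $(\varphi, \bPsi) = m\,(g, \bf)$. Re-injecting $\psi_i = m f_i$ into the key equations~\eqref{keyEqRNbad primes} yields $m\,\bigl(\CRT_N(p_j^{\vr_j}) f_i - g R_i\bigr) \equiv 0 \pmod{N}$, and localizing at each $p_j$ against an index $i^\star$ which realizes the minimum valuation $\mu_j = \nu_{p_j}(\be_j)$ produces $p_j^{\nu_{p_j}(\Lambda)} \mid m$, whence $\Lambda \mid m$. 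Setting $k := m/\Lambda \in \Z$ gives $(\varphi, \bPsi) = k\, v_{\bC} \in v_{\bC}\Z$, as required.

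The main technical obstacle will be the case analysis required to verify $N_\infty \mid \Lambda g$ uniformly across valuation and evaluation error positions; once this divisibility is secured, working with the reduced key equations rather than the original ones is precisely what makes the size bound $|u_i| < N/N_\infty$ match (rather than exceed) the available modulus and thereby exploit the full unique-decoding radius $\log_2\!\bigl(\sqrt{N/(2FG)}\bigr)$.
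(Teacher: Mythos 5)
Your proof is correct and follows essentially the same cross-product argument as the paper. The paper's proof writes the key equations for $v_{\bC}$ and for $(\varphi,\bPsi)$ prime by prime, cross-multiplies to an enlarged modulus $p_j^{\lambda_j+\vr_j}$, subtracts, divides by $p_j^{\vr_j}$, and assembles into $\Lambda(\varphi\bf - g\bPsi) \equiv 0 \pmod N$, which is exactly your $N_\infty\, u_i$ after clearing the denominator $N_\infty$; so phrasing the computation via the reduced key equations modulo $N/N_\infty$, as you do, is a clean equivalent route rather than a genuinely different one. The one place you go a longer way than necessary is the divisibility $N_\infty \mid \Lambda g$, which you flag as the main obstacle and propose to verify by a case split on the error valuations: since you already established that $v_{\bC}=(\Lambda g,\Lambda\bf)$ lies in $\SRd$, the same argument you cite for $N_\infty\mid\varphi$ (the key equations combined with the reducedness $\gcd(p_j^{\vr_j},\br_j)=1$ of the received word) applies verbatim to $v_{\bC}$ and yields $N_\infty\mid\Lambda g$ with no case analysis; this is precisely how the paper obtains both divisibilities in one stroke.
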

	\begin{proof}
		By hypothesis $d(\Evi(\bf/g), (\vr_j,\br_j)_{1 \le j \le n}) \le d$, we have $v_{\bC} = (\Lambda g,\Lambda\bf)\in\SRd$.
		Let $(\varphi,\bPsi)\in\SRd$ be another solution of the key equations.
		We have that 
		\begin{equation*}
			\begin{cases}
				p_j^{\vr_j}\Lambda\bf =
				\br_j \Lambda g &\phantom{}\bmod p_j^{\lambda_j}\\
				p_j^{\vr_j}\bPsi\ =
				\br_j \varphi &\phantom{}\bmod p_j^{\lambda_j}
			\end{cases}
		\end{equation*}
		for some 
		$\Lambda\in\Z$ with $\log_2\left(\Lambda\right) \le d
		\le\log_2\left(\sqrt{N/2FG}\right)$. Since the received word $(\vr_j,\br_j)_{1 \le j \le n}$ is assumed to be reduced, \ie $\gcd(p_j^{\vr_j},\br_j) = 1$, from the above we get that 
		$p_j^{\vr_j}|\Lambda g$ and $p_j^{\vr_j}|\varphi$ for every $j=1,\ldots,n$. Thus, when multiplying the first equation by $\varphi$ and the second one by
		$\Lambda g$ we get
		\begin{align*}
			\begin{cases}
				p_j^{\vr_j}\Lambda\varphi \bf =
				\br_j \Lambda g \varphi &\phantom{}\bmod p_j^{\lambda_j + \vr_j}\\
				p_j^{\vr_j}\Lambda g\bPsi  =
				\br_j \Lambda g \varphi &\phantom{}\bmod p_j^{\lambda_j+\vr_j}
			\end{cases}
		\end{align*}
		and subtracting one another, and dividing by $p_j^{\vr_j}$, we obtain 
		$
		\Lambda \left(\varphi\bf - g\bPsi \right) = 0 \bmod N.
		$
		By hypothesis, we have that
		$
		\Lambda \|\varphi\bf - g\bPsi \|_{\infty}
		< 
		2^d (2 F G 2^d )
		\le
		N
		$
		which implies that $\Lambda \left(\varphi\bf - g\bPsi \right) = 0$ thus $\varphi\bf
		= g\bPsi $ in $\Z^\ell$.  
		
		Since $\bf/g$ is a reduced vector of fractions, there exists $a\in\Z$ such that $(\varphi,\bPsi) = a (g,\bf)$.
		Substituting in the key equations for $(\varphi,\bPsi)$, we get
		$
		a (p_j^{\vr_j}\bf - \br_j g) = 0  \bmod p_j^{\lambda_j}
		$.
		However, $\Lambda$ divides $a$ by definition of $\Lambda$, so $(\varphi,\bPsi) \in v_{\bC} \Z$.
	\end{proof}

	\subsection{Hybrid Error Models for Bad Primes}
	\label{subsect:Errormodelsbad primesRN}
	In this subsection we adapt the hybrid error models of the previous section to
	the case with bad primes. Recall that the hybrid error model is composed of both
	fixed errors and random errors. As done in~\cite{guerrini2023simultaneous}, here we consider a hybrid error model where valuation errors are fixed, while evaluation errors are random.
	In previous Sections~\ref{Sec:SRN} and~\ref{sec:Hybrid Decoding}, the error models were defined on the error
	matrices $\bE$, then the theorems applied to received words $\bR$ such that $\bR
	= \bC + \bE$. In this section, as pointed out in Remark~\ref{rmk:errorsbad primes}, we have a more complicated relation between
	$\bR$, $\bC$ and $\bE$. So we are going to define the error model directly on
	$\bR$.

	Our error model needs to fix the following parameters:
	\begin{itemize}
		\item a reduced vector of rationals $\bf/g\in\Q^{\ell}$ such that
		$
		\|\bf\|_{\infty}<F,$ $ 0<g<G
		$,
		
		\item valuation $\xi_v$ and evaluation $\xi_e$ error supports such that
		$\xi_e, \xi_v \subset \{1, \dots, n\}$ and $\xi_e \cap \xi_v = \varnothing$,
		
		\item error valuations $(\mu_j)_{1 \le j \le n}$ such that 
		\begin{itemize}
			\item $\mu_j = \lambda_j$ for $j \notin \xi_e \cup \xi_v$,
			\item $\mu_j \ge \nu_{p_j}(g)$ and $\mu_j < \lambda_j$ for $j \in \xi_e$,
			\item $\mu_j \le \nu_{p_j}(g)$ and $\mu_j < \lambda_j$ for $j \in \xi_v$,
		\end{itemize}
		
		\item a partial received word $\bfR_j = (\vr_j, \br_j)$ for all $j \in
		\xi_v$ such that 
		\begin{itemize}
			\item $\vr_j = \mu_j$ when $\mu_j < \nu_{p_j}(g)$,
			\item $\vr_j > \nu_{p_j}(g)$ when $\mu_j = \nu_{p_j}(g)$.
		\end{itemize}
	\end{itemize}
	Denote $\Le := \prod_{j \in \xi_e} p_j^{\lambda_j - \mu_j}, \Lv := \prod_{j \in
		\xi_v} p_j^{\lambda_j - \mu_j}$ and $\Lambda = \Le \Lv$. Remark that $\Le, \Lv,
	\Lambda$ contain all the information of $\xi_v, \xi_e$ and $\mu_j$ since $\xi_v
	= \cP(\Lv)$, $\xi_e = \cP(\Le)$ and $\mu_j = \lambda_j - \nu_{p_j}(\Lambda)$.
	
	We are ready to define our error models. The random received words $\bR = (\vr_j, \br_j)_j$ are uniformly
	distributed in the following set $\HYBp$
	\begin{enumerate}
		\item $\bR_j = \Evi(\bf/g)_j \text{ for all } j \text{ such that } p_j\not\in\cP(\Lambda)$,
		\item $\bR_j = \bfR_j \text{ for all } j\text{ such that } p_j\in\cP(\Lv)$,
		\item $\bR_j = (\nu_{p_j}(g), \br_j)$ with $\nu_{p_j}\left(\br_j -
		\cS_j(\bf/g)\right) = \mu_j - \nu_{p_j}(g)$ for all $j$ such that $p_j\in\cP(\Le)$.
	\end{enumerate}
	
	As before, we will determine the distribution of the error matrices
	$\bE_{\bR,\Evi(\bf/g)}$ when $\bf/g$ is fixed and $\bR$ is random.
	
	For $i\in\{1,\dots,\ell\}$, we still denote $E_i \in \Z/N\Z$ the CRT interpolant of
	the $i$-th row of $\bE$, and we obtain that $Y | E_i$ for every index $i=
	1,\ldots,\ell$ as in Subsection~\ref{subsect:ErrModelRatNumb}. We define the
	modular integers $E_i'\coloneq E_i/Y \in \Z/\Lambda\Z$, which verify
	$\gcd(E_1',\ldots,E_{\ell}',\Lambda) = 1$.
	
	Because of our hybrid error model where the randomness only appears on the
	columns $j \in\cP(\Le)$, we need to study the random vector $(E_1' \bmod \Le,\ldots,E_{\ell}' \bmod \Le)$.
	\begin{lemma}
		\label{lm:randombad primesuniform}
		If $\bR$ is uniformly distributed in $\HYBp$, then 
		the random vector $(E_1' \bmod \Le,\ldots,E_\ell' \bmod \Le)$ is uniformly
		distributed in the sample space $\Omega_{\Lambda_e}$.
	\end{lemma}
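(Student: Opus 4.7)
My plan is to adapt the proof of Lemma~\ref{lm:randomuniform} to the bad primes setting. The key observation is that in $\HYBp$ the columns of $\bR$ are non-random for $j \notin \cP(\Le)$ (they are either fixed to the codeword or to the partial received word $\bfR_j$), and for $j \in \cP(\Le)$ the columns $\br_j$ are independent across $j$. So it suffices to analyze, at each fixed $j \in \cP(\Le)$, the distribution of the $j$-th local factor of $(E_1' \bmod \Le,\ldots,E_\ell' \bmod \Le)$, then reassemble the global distribution via the Chinese remainder theorem.

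Fix $j \in \cP(\Le)$. Since $\vr_j = \nu_{p_j}(g)$ by the definition of $\HYBp$ at an evaluation-error index, Remark~\ref{rem:rewriteErrVectorsbad primes} gives the $j$-th error column as $\be_j = p_j^{\nu_{p_j}(g)}(\cS_j(\bf/g) - \br_j) \bmod p_j^{\lambda_j}$. The constraint $\nu_{p_j}(\br_j - \cS_j(\bf/g)) = \mu_j - \nu_{p_j}(g)$ then yields $\nu_{p_j}(\be_j) = \mu_j$, consistent with the definition of $\mu_j$. Since $Y \equiv p_j^{\mu_j} \bmod p_j^{\lambda_j}$, dividing $\be_j$ by $Y$ gives
\begin{equation*}
E_i' \equiv -(r_{i,j} - \cS_j(\bf/g)_i)/p_j^{\mu_j - \nu_{p_j}(g)} \pmod{p_j^{\lambda_j - \mu_j}},
\end{equation*}
so the $p_j$-local CRT-component of $E_i' \bmod \Le$ is a rescaling of the corresponding coordinate of $\br_j - \cS_j(\bf/g)$.

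By the definition of $\HYBp$, the vector $\br_j - \cS_j(\bf/g)$ is uniformly distributed among vectors of $(\Z/p_j^{\lambda_j - \nu_{p_j}(g)}\Z)^\ell$ of valuation exactly $\mu_j - \nu_{p_j}(g)$. Rescaling by $p_j^{\mu_j - \nu_{p_j}(g)}$ yields a uniform distribution among vectors of valuation $0$ in $(\Z/p_j^{\lambda_j - \mu_j}\Z)^\ell$, which is precisely the $p_j$-local factor of $\Omega_{\Le}$. Combining these local uniform distributions across independent $j \in \cP(\Le)$ via the Chinese remainder theorem produces the desired uniform distribution on $\Omega_{\Le}$. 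I expect the main obstacle to be the valuation bookkeeping above, namely verifying that the factor $p_j^{\nu_{p_j}(g)}$ appearing in $\be_j$ combines with the divisor $Y$ to leave exactly the denominator $p_j^{\mu_j - \nu_{p_j}(g)}$; but this is a purely local check, and once it is in place, independence across primes and CRT finish the proof immediately.
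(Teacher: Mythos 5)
Your proposal is correct and follows essentially the same route as the paper's proof: at each $j \in \cP(\Le)$, write $\be_j$ as $p_j^{\nu_{p_j}(g)}$ times $\pm(\br_j - \cS_j(\bf/g))$, divide by $Y \equiv p_j^{\mu_j}$ to see that the local component of $E_i'$ is a unit rescaling of $\br_j - \cS_j(\bf/g)$ (which is uniform of exact valuation $\mu_j - \nu_{p_j}(g)$, hence uniform of valuation zero after the shift), and then recombine the independent local factors via CRT to land on $\Omega_{\Le}$. The only cosmetic difference is that you spell out the independence-across-$j$ and CRT reassembly that the paper leaves implicit, and you carry the sign from Remark~\ref{rem:rewriteErrVectorsbad primes} whereas the paper's proof writes the opposite sign; since the distributions are symmetric, this is immaterial.
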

	\begin{proof}
		For the duration of this proof, we will only consider indices $j$ such that
		$p_j\in\cP(\Le)$. Recall that
		$
		\be_{j} = p_j^{\nu_{p_j}(g)} (\br_{j} - \cS_j(\bf/g)) \bmod p_j^{\lambda_j}
		$
		for all those particular $j$. Since $Y = p_j^{\mu_j} \bmod p_j^{\lambda_j}$,
		we get that $E'_i = E_i/Y = e_{i,j}/Y \bmod p_j^{\lambda_j - \mu_j}$ and
		$$
		\be_{j}/Y = (\br_{j} - \cS_j(\bf/g))/p_j^{\mu_j - \nu_{p_j}(g)} \bmod p_j^{\lambda_j - \mu_j}.
		$$
		Therefore, by definition of $\HYBp$, the vector $\be_{j}/Y \in
		(\Z/p_j^{\lambda_j - \mu_j}\Z)^\ell$ is random of valuation $0$. As a
		consequence, we obtain that $(E_1' \bmod \Le,\ldots,E_\ell' \bmod \Le)$ is random
		among the vectors of $(\Z/\Le\Z)^\ell$ such that
		$\gcd(E_1',\ldots,E_{\ell}',\Le) = 1$.
	\end{proof}
	
	\paragraph*{Second error model}
	Similarly, we need to fix a reduced vector of rationals $\bf/g\in\Q^{\ell}$,
	valuation $\xi_v$ and evaluation $\xi_{m,e}$ error supports, error valuations
	$(\mu_j)_{1 \le j \le n}$ and a partial received word $\bfR_j = (\vr_j, \br_j)$
	for all $j \in \xi_v$. All these parameters must satisfy the same conditions as
	the first error model. 
	
	The set $\xi_{m,e}$ is now called the maximal error
	support because actual errors could result in an evaluation error support
	$\xi_{e} \subset \xi_{m,e}$.
	
	Denote $\Lambda_{m,e} := \prod_{j \in \xi_{m,e}} p_j^{\lambda_j - \mu_j}$,
	$\Lv := \prod_{j \in \xi_v} p_j^{\lambda_j - \mu_j}$ and $\Lambda_m = \Lambda_{m,e} \Lv$.

	In the second error model, the random received words $\bR = (\vr_j, \br_j)_j$
	are uniformly distributed in the following set $\HYBpd$
	\begin{enumerate}
		\item $\bR_j = \Evi(\bf/g)_j \text{ for all } j \text{ such that } p_j\not\in\cP(\Lambda_m)$,
		\item $\bR_j = \bfR_j \text{ for all } j\text{ such that } p_j\in\cP(\Lv)$,
		\item $\bR_j = (\nu_{p_j}(g), \br_j)$ with $\nu_{p_j}\left(\br_j -
		\cS_j(\bf/g)\right) \ge \mu_j - \nu_{p_j}(g)$ for all $j$ such that $p_j\in\cP(\Lambda_{m,e})$.
	\end{enumerate}
	Notice that for a given received word in the set $\HYBpd$, the associated
	error locator has the form $\Lambda = \Le\Lv$ for some divisor $\Le|\Lme$.

	\subsection{Our results on bad primes}

	We are ready to state our results regarding the failure probability of the decoding algorithm in
	presence of bad primes. We let $\dme$ be the maximal distance on the evaluation errors 
	\begin{equation}
		\label{def:Dmaxbad primes}
		\dme \coloneq 
		\frac{\ell}{\ell+1} \left[\log(N/2FG) - \log(3\beta) - 2 \hdv\right]
	\end{equation}
	\begin{theorem}
		\label{thm:main1bad primes}
		Decoding Algorithm~\ref{algoSRNbad primes} on input 
		\begin{enumerate}
			\item distance parameter $d = \hdv
			+ \hde$ for $\hdv\leq \log\left(\sqrt{N/(6FG\beta)}\right)$
			and $\hde\leq \dme$,
			\item a random received word $\bR = (\vr_j,\br_j)_{1 \le j \le n}$
			uniformly distributed in $\HYBp$, for some reduced vector of fractions
			$\bf/g\in\Q^{\ell}$ with $\|\bf\|_{\infty} < F, \ 0<g<G$, and $\log
			(\Lv) \leq \hdv$ and $\log(\Le)\leq \hde$,
		\end{enumerate}
		outputs the center vector $\bf/g$ of the distribution
		with a probability of failure 
		\begin{equation*}
			\P_{\fail}
			\leq 
			2^{-(\ell+1)(\dme - \hde)}\prod_{p\in\cP(\Le)}\left(\frac{1 - 1/p^{\ell + \nu_p(\Lambda_{e)}}}{1 - 1/p^{\ell}}\right).
		\end{equation*} 
	\end{theorem}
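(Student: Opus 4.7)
The plan is to mirror the proof of Theorem~\ref{thm:main1hyb} almost line by line, with the evaluation part $\Le$ playing the role of $\Li$ and the valuation part $\Lv$ playing the role of $\Lu$. Concretely, I define
\[
    \SEinf \coloneq \left\{\varphi \in \Z/\Le\Z : \forall i,\; g\varphi E'_i \in \Z_{\Le,B\Lambda}\right\}
\]
with $B \coloneq 2^d \beta\, 2FG/N$, and argue that $\P_{\fail} \le \P(\SEinf \neq \{0\})$. The preliminary step is to check, as in the proof of Theorem~\ref{thm:main1hyb}, that our parameter choice $\hdv \le \log(\sqrt{N/(6FG\beta)})$ and $\hde \le \dme$ implies a hybrid analog of Constraint~\ref{c_3}, namely $2^{\hdv} B < 1$; this computation is identical to the one in the proof of Theorem~\ref{thm:main1hyb}.

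The central step, and the main obstacle, is the bad-primes analog of Lemma~\ref{BaseLemmaRNHybrid}: I need to show that $\SEinf = \{0\}$ implies $\SR \subseteq v_{\bC}\Z$. Here the manipulation is more delicate than in Lemma~\ref{BaseLemmaRN} because, as stressed in Remark~\ref{rmk:errorsbad primes}, the received word $\bR$ is not $\bC + \bE$. I would work with the \emph{reduced} key equations~\eqref{keyEqRNbad primesReduit} together with the widetilde-error representation $\widetilde{\be}_j = p_j^{\vr_j}\bf - g\br_j \bmod p_j^{\lambda_j}$ from Remark~\ref{rem:rewriteErrVectorsbad primes}. Multiplying the key equations by $g$ and subtracting the relation satisfied by the codeword should yield, for every $(\varphi,\bPsi) \in \SR$, an integer $\psi'_i = (g\psi_i - f_i \varphi)/Y$ that satisfies $g\varphi E'_i \equiv \psi'_i \pmod{\Le}$ (after dividing out $Y$ and then reducing modulo $\Le$), with the size bound $|\psi'_i| \le B\Lambda$. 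This forces $\varphi \in \SEinf$, so $\Le \mid \varphi$; combining with $\Lv \le 2^{\hdv}$ and $2^{\hdv} B < 1$ gives $|\psi'_i| < \Le$, hence $\psi'_i = 0$ in $\Z$, and one concludes $(\varphi,\bPsi) \in v_{\bC}\Z$ exactly as in Lemma~\ref{BaseLemmaRN}, using that $\bf/g$ is reduced.

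To upper bound $\P(\SEinf \neq \{0\})$ I proceed as before, writing
\[
    \P(\SEinf \neq \{0\}) \le \sum_{\varphi=1}^{\Le - 1} \P\bigl(\forall i,\; g\varphi E'_i \in \Z_{\Le,B\Lambda}\bigr).
\]
This is where Lemma~\ref{lm:randombad primesuniform} does all the heavy lifting for the bad-primes case: it ensures that the restricted vector $(E'_1 \bmod \Le,\ldots,E'_\ell \bmod \Le)$ is uniform on $\Omega_{\Le}$, which is exactly the distribution hypothesis needed to re-run the arguments of Lemma~\ref{lm:probagpe} and Lemma~\ref{lm:boundfailureprobabilityhybrid} verbatim (with $\Li$ replaced by $\Le$ and $\Lu$ replaced by $\Lv$). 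Applying the resulting bound gives
\[
    \P(\SEinf \neq \{0\}) \le \bigl(3B\, 2^{\hdv}\bigr)^{\ell}\, \Le \prod_{p\in\cP(\Le)}\frac{1 - 1/p^{\ell + \nu_p(\Le)}}{1 - 1/p^{\ell}}.
\]

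To finish, I use $\Le \le 2^{\hde}$ and compute $(3B)^{\ell}\, 2^{\ell\hdv}\, 2^{\hde} = 2^{-(\ell+1)(\dme - \hde)}$ from the definition~\eqref{def:Dmaxbad primes} of $\dme$, which yields the advertised bound. The only genuinely new ingredient beyond the proof of Theorem~\ref{thm:main1hyb} is therefore the derivation of the base lemma through the reduced key equations; once that identity is established, the probabilistic counting transfers unchanged thanks to Lemma~\ref{lm:randombad primesuniform}.
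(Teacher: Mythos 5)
There is a genuine gap in your base lemma. You claim that for $(\varphi,\bPsi)\in\SR$ the relation $g\varphi E'_i \equiv \psi'_i \pmod{\Le}$ holds with $\psi'_i = (g\psi_i - f_i\varphi)/Y$, and you define $\SEinf$ accordingly using $g\varphi E'_i$. This identity is false in the presence of bad primes at evaluation positions. Two things go wrong. First, the key equations~\eqref{keyEqRNbad primes} force $N_\infty = \prod_j p_j^{\vr_j}$ to divide $\varphi$, so the correct manipulation extracts the factor $p_j^{\vr_j}$ from $\varphi$ and works with $\omega \coloneq \CRT_{\Le}(\varphi/p_j^{\vr_j})$, not with $\varphi$ itself. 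Second, the error $E_i$ and its twisted version $\widetilde{E}_i = \CRT_N\bigl(g/p_j^{\nu_{p_j}(g)}\bigr) E_i$ differ by a factor coprime to $N$, and the identity that actually comes out of~\eqref{keyEqRNbad primes} is
\[
\CRT_{\Le}\!\bigl(\varphi/p_j^{\vr_j}\bigr)\,\widetilde{E}'_i \;=\; \frac{\varphi f_i - g\psi_i}{Y} \pmod{\Le},
\]
where $\widetilde{E}'_i = \CRT_{\Le}\bigl(g/p_j^{\nu_{p_j}(g)}\bigr) E'_i$. Since $\vr_j = \nu_{p_j}(g)$ for $j\in\xi_e = \cP(\Le)$, your $g\varphi E'_i$ differs from the true left-hand side by a factor $\CRT_{\Le}(p_j^{2\vr_j})$, which is neither $1$ nor invertible modulo $\Le$ as soon as some $\nu_{p_j}(g)>0$ with $j\in\xi_e$ — a configuration the error model explicitly allows. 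So ``$\varphi\in\SEinf$'' does not follow, and the chain $\SEinf=\{0\}\Rightarrow\Le\mid\varphi\Rightarrow\psi'_i=0$ collapses.

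The same issue sinks your claim that Lemma~\ref{lm:probagpe} can be re-run ``verbatim''. The very first step of that proof replaces $g\varphi E'_i$ by $\varphi E'_i$ using $\gcd(g,N)=1$; with bad primes in $\xi_e$, $g$ is not invertible modulo $\Le$ and this step is unavailable. The paper's $\SEinf$ is therefore defined via $\omega\widetilde{E}'_i$ precisely so that the twist is by $\CRT_{\Le}\bigl(g/p_j^{\nu_{p_j}(g)}\bigr)$, which is always coprime to $\Le$, and the probability $\P\bigl(\forall i,\ \omega\widetilde{E}'_i\in\Z_{\Le,B\Lambda}\bigr)$ can then be replaced by $\P\bigl(\forall i,\ \omega E'_i\in\Z_{\Le,B\Lambda}\bigr)$. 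Once the set $\SEinf$ and the witness $\omega$ are set up as in Lemma~\ref{lm:SE_bad primes}, the rest of your outline (Constraint check, Lemma~\ref{lm:randombad primesuniform}, Lemma~\ref{lm:boundfailureprobabilityhybrid} with $\Li,\Lu$ replaced by $\Le,\Lv$, and the final rewrite using $\Le\le 2^{\hde}$) is correct and does match the paper.
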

	\begin{theorem}
		\label{thm:main2bad primes}
		Decoding Algorithm~\ref{algoSRNbad primes} on input 
		\begin{enumerate}
			\item distance parameter $d = \hdv
			+ \hde$ for $\hdv\leq \log\left(\sqrt{N/(6FG\beta)}\right)$
			and $\hde\leq \dme$,
			\item a random received word $\bR = (\vr_j,\br_j)_{1 \le j \le n}$
			uniformly distributed in $\HYBpd$, for some reduced vector of fractions
			$\bf/g\in\Q^{\ell}$ with $\|\bf\|_{\infty} < F, \ 0<g<G$, and $\log
			(\Lv) \leq \hdv$ and $\log(\Lme)\leq \hde$,
		\end{enumerate}
		outputs the center vector $\bf/g$ of the distribution
		with a probability of failure 
		\begin{equation*}
			\P_{\fail}
			\leq 
			2^{-(\ell+1)(\dme - \hde)}
			\prod_{p\in\cP(\Lme)}
			\left(\frac{1 - 1/p^{\ell + \nu_p(\Lme)}}{1 - 1/p^{\ell + 1}}\right).
		\end{equation*} 
	\end{theorem}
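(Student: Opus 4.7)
The plan is to mirror the proof of Theorem~\ref{thm:main2hyb}, replacing the role of the abstract splitting $\Lambda = \Li\Lu$ with the splitting $\Lambda = \Le \Lv$ arising from the separation of evaluation and valuation errors. The strategy rests on a two-step reduction: first condition on the actual error locator, then apply the already established Theorem~\ref{thm:main1bad primes} on each slice and re-weight the pieces with the explicit probabilities of each slice, before using Lemma~\ref{lm:sumOverDivisorsINT} to collapse the resulting sum into a product indexed by the primes dividing $\Lme$.

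More precisely, let $\cF$ denote the event that Algorithm~\ref{algoSRNbad primes} returns ``decoding failure''. Since in the $\HYBpd$ error model the valuation part of $\bR$ is fixed to the preassigned $\bfR_j$ on $\cP(\Lv)$ while only the evaluation part varies (with valuations at least $\mu_j$ on $\cP(\Lme)$), the actual error locator of the random received word has the form $\Lambda_{\bE} = \Le \Lv$ for some divisor $\Le$ of $\Lme$. The law of total probability thus yields
\begin{equation*}
    \P_{\HYBpd}(\cF)
    =
    \sum_{\Le \st \Lme}
    \P_{\HYBpd}(\cF \stm \Lambda_{\bE} = \Le\Lv)
    \cdot
    \P_{\HYBpd}(\Lambda_{\bE} = \Le\Lv).
\end{equation*}
Conditioning on $\Lambda_{\bE} = \Le\Lv$ inside $\HYBpd$ exactly recovers the first hybrid bad-primes model $\HYBp$ with evaluation locator $\Le$ (by the uniform randomness argument of Lemma~\ref{lm:randombad primesuniform}), so the conditional failure probabilities are bounded by Theorem~\ref{thm:main1bad primes}. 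Lemma~\ref{lm:EulerFormula} applied to the reduced evaluation part computes the second factor, giving $\P_{\HYBpd}(\Lambda_{\bE} = \Le\Lv) = (\Le/\Lme)^\ell \prod_{p \in \cP(\Le)}(1 - 1/p^\ell)$.

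Next I would substitute both bounds into the total-probability expansion. The factors $(1 - 1/p^{\ell})$ appearing in Theorem~\ref{thm:main1bad primes}'s bound are exactly cancelled by those in $\P_{\HYBpd}(\Lambda_{\bE} = \Le\Lv)$, so after pulling out the global prefactor $(3 B 2^{\hdv}/\Lme)^{\ell}$ one is left with a clean sum
\begin{equation*}
    \sum_{\Le \st \Lme}
    \Le^{\ell+1}
    \prod_{p\in\cP(\Le)}\left(1 - \frac{1}{p^{\ell+\nu_p(\Le)}}\right),
\end{equation*}
which is precisely the expression treated in the proof of Theorem~\ref{thm:main2hyb}. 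Applying Lemma~\ref{lm:sumOverDivisorsINT} with $f(x,y) = x^{y(\ell+1)}(1 - 1/x^{\ell+y})$, using the monotonicity $1 - 1/p^{\ell+k} \le 1 - 1/p^{\ell+\nu_p(\Lme)}$ for $k \le \nu_p(\Lme)$, summing the resulting geometric series and bounding $1 \le (1 - 1/p^{\ell+\nu_p(\Lme)})/(1 - 1/p^{\ell+1})$ gives the upper bound $\Lme^{\ell+1} \prod_{p\in\cP(\Lme)}(1 - 1/p^{\ell+\nu_p(\Lme)})/(1 - 1/p^{\ell+1})$.

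Finally I would combine the prefactor with this bound, use $\Lme \le 2^{\hde}$ to get an extra factor $2^{\hde}$, and verify via the identity $(3 B 2^{\hdv})^{\ell} \, 2^{\hde} = 2^{-(\ell+1)(\dme - \hde)}$ (exactly the computation done at the end of the proof of Theorem~\ref{thm:main2hyb} with $(\hdu,\hdi,\dmi)$ replaced by $(\hdv,\hde,\dme)$) that the claimed bound is reached. The only genuinely delicate point, which I expect to be the main obstacle, is to justify that conditioning $\HYBpd$ on $\Lambda_{\bE} = \Le\Lv$ produces the correct uniform distribution on $\HYBp$ with parameters $(\xi_v, \xi_e = \cP(\Le))$ and the prescribed $\mu_j$; this relies on the fact that the valuation part of $\bR$ is fixed by the model and that Lemma~\ref{lm:randombad primesuniform} describes the induced randomness on $(E_1' \bmod \Le,\dots,E_\ell' \bmod \Le)$, so the event $\{\Lambda_{\bE} = \Le \Lv\}$ is an event purely on the evaluation part and conditioning preserves uniformity.
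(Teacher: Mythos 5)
Your plan is correct and mirrors the paper's own proof step for step: law of total probability over divisors $\Le \mid \Lme$, identification of the conditional distribution with $\HYBp$, the counting formula for $\P_{\HYBpd}(\Lambda_{\bE}=\Le\Lv)$, collapse of the sum via Lemma~\ref{lm:sumOverDivisorsINT}, and the final exponent identity. One small but material imprecision: where you write that the conditional failure probability is ``bounded by Theorem~\ref{thm:main1bad primes}'', what you actually need (and what your subsequent display with $\Le^{\ell+1}$ implicitly uses) is the \emph{intermediate} inequality from inside that theorem's proof, namely
\begin{equation*}
\P_{\HYBp}(\cF) \le \left(3B\,2^{\hdv}\right)^{\ell}\Le\prod_{p\in\cP(\Le)}\frac{1-1/p^{\ell+\nu_p(\Le)}}{1-1/p^{\ell}},
\end{equation*}
before $\Le$ has been replaced by $2^{\hde}$; the theorem's stated bound $2^{-(\ell+1)(\dme-\hde)}\prod(\cdots)$ would only give $\Le^{\ell}$ under the divisor sum and would not reproduce the $\Le^{\ell+1}$ you wrote, so the citation should be to that intermediate bound (as the paper does) rather than to the theorem statement itself.
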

	
	\begin{remark}
		The results presented here have a polynomial counterpart in the context
		of rational function codes with multiplicities and poles as studied
		in~\cite{guerrini2023simultaneous}.
		
		We remark that the results given in this paper provide several
		improvements on the state of the art of the polynomial counterpart
		(see~\cite[Theorem 3.4]{guerrini2023simultaneous}). For instance, the
		failure probability bound decreases exponentially when the actual error
		distance is less than the maximal error distance in this paper, whereas the failure probability bound in~\cite{guerrini2023simultaneous} is a linear function of the distance parameter. Furthermore, our bound removes the technical dependency of the multiplicity balancing, making the results independent of how the multiplicities are distributed.
		
		In this work, we establish our results only in the setting of rational
		numbers. Following similar lines of reasoning, we have also obtained
		analogous theorems in the case of rational functions; however, to keep
		the paper concise, enhance readability, and focus on the more original
		contributions, we have opted not to include these statements. 
	\end{remark}

	\subsection{Decoding failure probability with respect to the first error model} 
	We let 
	\begin{equation*}
		\SEinf\coloneq
		\left\{
		\omega\in\Z/\Le\Z : \forall i, \ \omega \widetilde{E}'_i \in \Z_{\Le,B\Lambda}
		\right\}
	\end{equation*}
	with $B\coloneq2^{d}\beta\frac{2FG}{N} = 2^{\hde+\hdv}\beta\frac{2FG}{N} $ and
	$\widetilde{E}_i \coloneq \CRT_N\left(g/p_j^{\nu_{p_j}(g)}\right) E_i \ \bmod N$. 
	We can now prove the version of Lemma~\ref{BaseLemmaRN} with bad primes.
	\begin{constraint}
		\label{cst: Parambad primesVers}
		The parameters of Algorithm~\ref{algoSRN} satisfy $2^{\hdv}B< 1$.
	\end{constraint}
	\begin{lemma}
		\label{lm:SE_bad primes}
		If Constraint~\ref{cst: Parambad primesVers} is satisfied then $\SEinf = \{0\} \Rightarrow
		S_{\bR}\subseteq v_{\bC}\Z$.
	\end{lemma}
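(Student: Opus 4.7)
I would follow the blueprint of Lemmas~\ref{BaseLemmaRN} and~\ref{BaseLemmaRNHybrid}, adapting it to the bad-primes setting. For any $(\varphi,\bPsi)\in S_{\bR}$, multiplying the $j$-th key equation $p_j^{\vr_j}\psi_i\equiv\varphi\,r_{i,j}\pmod{p_j^{\lambda_j}}$ by $g$ and substituting $g\,r_{i,j}\equiv p_j^{\vr_j}f_i-\widetilde{e}_{i,j}\pmod{p_j^{\lambda_j}}$ from Remark~\ref{rem:rewriteErrVectorsbad primes} yields at each $j$ the pointwise identity
\[
p_j^{\vr_j}(g\psi_i-\varphi f_i)\equiv -\varphi\,\widetilde{e}_{i,j}\pmod{p_j^{\lambda_j}}.
\]
I would set $\psi'_i := g\psi_i-\varphi f_i\in\Z$ and use the size constraints on $\varphi,\bPsi$ to bound $|\psi'_i|<2\cdot 2^d\beta FG=BN$.

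For $j\in\cP(\Le)$ one has $\vr_j=\nu_{p_j}(g)$ and $\nu_{p_j}(\widetilde{e}_{i,j})\ge\mu_j$, so the identity divides by $p_j^{\vr_j}$ to give $\psi'_i\equiv -\varphi\,p_j^{\delta_j}\epsilon_{i,j}\pmod{p_j^{\lambda_j-\mu_j}}$, with $\delta_j:=\mu_j-\nu_{p_j}(g)\ge 0$ and $\epsilon_{i,j}:=\widetilde{e}_{i,j}/p_j^{\mu_j}$. Writing $\widetilde{E}'_i\equiv\epsilon_{i,j}\,Y_j^{-1}\pmod{p_j^{\lambda_j-\mu_j}}$ with $Y_j:=Y/p_j^{\mu_j}$ a $p_j$-unit, and assembling via CRT over $\cP(\Le)$, I would arrive at an identity of the form $\psi'_i\equiv -\varphi\,\widetilde{\eta}_i\pmod{\Le}$, where $\widetilde{\eta}_i$ is related to $\widetilde{E}'_i$ by an explicit factor encoding the ``ghost'' divisor $Y/\prod_{j\in\cP(\Le)}p_j^{\nu_{p_j}(g)}$. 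The bound $|\psi'_i|<BN$, combined with $\Lv\le 2^{\hdv}$ and Constraint~\ref{cst: Parambad primesVers} ($2^{\hdv}B<1$), then yields $|\varphi\,\widetilde{E}'_i\crem\Le|\le B\Lambda$ for every $i$, placing $\varphi\bmod\Le$ in $\SEinf$; invoking $\SEinf=\{0\}$ gives $\Le\mid\varphi$.

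Re-injecting $\Le\mid\varphi$ into the original identity forces $\psi'_i$ to be divisible by a large divisor which, thanks again to Constraint~\ref{cst: Parambad primesVers}, strictly exceeds $|\psi'_i|$, whence $\psi'_i=0$ in $\Z$. Then $g\psi_i=\varphi f_i$ for every $i$, and $\gcd(f_1,\ldots,f_\ell,g)=1$ yields $g\mid\varphi$; writing $\varphi=bg$ and $\psi_i=b f_i$, the key equations become $b\,\widetilde{e}_{i,j}\equiv 0\pmod{p_j^{\lambda_j}}$ for all $i,j$, which forces $\Lambda\mid b$, so $(\varphi,\bPsi)=(b/\Lambda)\,v_{\bC}\in v_{\bC}\Z$. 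The main obstacle is the second step: the $j$-dependent factors $p_j^{\delta_j}Y_j$ do not collapse into a single unit of $\Z/\Le\Z$, so one must carefully track the discrepancy between the divisor $Y$ built into $\widetilde{E}'_i$ and the divisor of $\psi'_i$ that the identity actually controls, mirroring but technically more delicate than the analogous step in Lemma~\ref{BaseLemmaRNHybrid}, and it is precisely where the hybrid split $\Lambda=\Le\Lv$ and the bound $\Lv\le 2^{\hdv}$ enter.
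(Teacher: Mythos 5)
Your proposal correctly mirrors the overall strategy of Lemmas~\ref{BaseLemmaRN} and~\ref{BaseLemmaRNHybrid} and the early computations are sound, but there is a genuine gap at exactly the point you flag, and it is not a minor technicality: the conclusion ``$\varphi\bmod\Le\in\SEinf$'' is not what the key equations yield. After dividing the local congruence $p_j^{\vr_j}\psi'_i\equiv-\varphi\widetilde e_{i,j}\bmod p_j^{\lambda_j}$ by $p_j^{\vr_j}$ and then by $Y$, the quantity that multiplies $\widetilde E'_i$ is $\CRT_{\Le}\!\bigl(\varphi/p_j^{\vr_j}\bigr)$, not $\varphi$ itself. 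To pass from one to the other you would multiply by $\CRT_{\Le}\!\bigl(p_j^{\vr_j}\bigr)$, but this is not a unit of $\Z/\Le\Z$ as soon as $\nu_{p_j}(g)>0$ for some $j\in\cP(\Le)$, and its lift to $\Z$ can be as large as $\Le-1$, so the bound $\lvert\cdot\crem\Le\rvert\le B\Lambda$ is lost. The paper sidesteps this entirely by \emph{defining} $\omega\coloneq\CRT_{\Le}\bigl(\varphi/p_j^{\vr_j}\bigr)$ as the element to place in $\SEinf$, and then $\SEinf=\{0\}$ gives directly $(\varphi f_i-g\psi_i)/Y\equiv 0\bmod\Le$, from which the size bound $\lvert\psi'_i/Y\rvert<B\Lambda=B\Le\Lv<\Le$ forces $\psi'_i=0$; no detour through ``$\Le\mid\varphi$'' is needed or correct.

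A second, related omission: the paper first multiplies the congruence by $\Lv$ before interpolating over \emph{all} $j$ (not just $j\in\cP(\Le)$), precisely so that $\nu_{p_j}(\Lv g)\ge\vr_j$ holds uniformly, in particular at the valuation-error positions $j\in\cP(\Lv)$ where $\vr_j$ may exceed $\nu_{p_j}(g)$ and $g$ alone is not divisible enough. You restrict to $j\in\cP(\Le)$ and postpone the treatment of $\cP(\Lv)$ to the constraint $\Lv\le2^{\hdv}$, but that constraint only controls sizes; it does not produce the needed divisibility. Fixing your proof requires (i) naming $\omega=\CRT_{\Le}\bigl(\varphi/p_j^{\vr_j}\bigr)$ rather than $\varphi\bmod\Le$, and (ii) the $\Lv$-multiplication trick to get a congruence modulo $N$ that can be cleanly divided by $Y\Lv$.
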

	\begin{proof}
		Let $(\varphi,\psi_1,\ldots,\psi_\ell)\in S_{\bR} = \SRdb$. From~\eqref{keyEqRNbad primes} we know that $\prod_{j=1}^np_j^{\vr_j}|\varphi$ and that for every $i,j$ there exists $h_{i,j}\in\Z$ such that $\varphi r_{i,j}=p_j^{\vr_j}\psi_i + h_{i,j} p_j^{\lambda_j}$. Furthermore,
		\begin{equation}
			\label{eq:intermediatefromLemmaSEbad primes}
			\varphi \Lv \widetilde{e}_{i,j}
			= 
			p_j^{\vr_j}\Lv\left(\varphi f_i - g\psi_i\right) - \Lv g h_{i,j} p_j^{\lambda_j} \ \bmod p_j^{\lambda_j + \vr_j}.    
		\end{equation}
		From
		\begin{equation*}
			\nu_{p_j}(\Lv g)=
			\begin{cases}
				\lambda_j - \min\{\vr_j,\nu_{p_j}(g)\} + \nu_{p_j}(g) & \text{ if }\vr_j\ne\nu_{p_j}(g)\\
				\nu_{p_j}(g)& \text{ if }\vr_j=\nu_{p_j}(g)
			\end{cases},
		\end{equation*}
		as $\lambda_j\geq \vr_j $, we conclude that $\nu_{p_j}(\Lv g)\geq \vr_j$ for every $j=1,\ldots,n$. Taking the CRT interpolant modulo $N$ on both sides of~\eqref{eq:intermediatefromLemmaSEbad primes} after dividing by $p_j^{\vr_j}$, we conclude that
		$$
		\CRT_N(\varphi/p_j^{\vr_j})\Lv \widetilde{E}_{i} 
		=
		\Lv
		(
		\varphi f_i - g \psi_i
		)
		\ \bmod N
		$$
		with $\widetilde{E}_i \coloneq \CRT_N\left(g/p_j^{\nu_{p_j}(g)}\right) E_i \ \bmod N$.
		The integer $Y\Lv$ divides both $\Lv \widetilde{E}_{i}$ and $N$, so it divides  
		$\Lv ( \varphi f_i - g \psi_i )$.
		Dividing by $Y\Lv$, we obtain 
		\begin{equation*}
			\CRT_{\Le}\left(\frac{\varphi}{p_j^{\vr_j}}\right)\widetilde{E}'_{i} 
			=
			\frac{\varphi f_i - g \psi_i}{Y}
			\ \bmod \Le,
		\end{equation*}
		with $\widetilde{E}'_i \coloneq \CRT_{\Le}\left(g/p_j^{\nu_{p_j}(g)}\right) E'_i \ \bmod
		\Le$. Thus, $\omega \coloneq \CRT_{\Le}\left(\varphi/p_j^{\vr_j}\right)\in      
		\SEinf$ and, thanks to the hypothesis $\SEinf = \{0\}$, $\left(\varphi
		f_i - g \psi_i\right)/Y = 0 \ \bmod\Le$. Thanks to Constraint~\ref{cst:
			Parambad primesVers} and since $\Lv\leq 2^{\hdv}$, we have
		$
		\frac{2FG}{N}2^d \beta \Lv < 1
		$
		which implies 
		$
		\frac{|g\psi_i-f_i\varphi|}{Y} \leq \frac{2FG}{N} 2^d \beta \Lambda < \Le.
		$
		As a result, $g\psi_i = f_i\varphi$ for all $i=1,\ldots,\ell$.
		Since $\gcd(f_1,\ldots,f_\ell, g)=1$,  we must have that $g|\varphi$, i.e. $\varphi = sg$ for
		some $s\in\Z$ and, from the above conclusion, as well that $\Bold{\psi} = s\bf$. Let us
		note 
		\begin{equation*}
			s\widetilde{\be}_j =
			p_j^{\vr_j}\Bold{\psi} - \varphi\br_j = \Bold{0} \bmod p_j^{\lambda_j}.
		\end{equation*} 
		As $\nu_{p_j}\left(\widetilde{\be}_j\right) = \nu_{p_j}\left(\be_j\right) = \lambda_j -
		\Val_j(\Lambda)$, we obtain $\nu_{j}(s)\ge \lambda_j - (\lambda_j - \Val_j(\Lambda))$
		for every $j$, i.e. $\Lambda$ divides $s$.
	\end{proof}
	
	\begin{remark}
		Along the same lines of Remark~\ref{rmk:unicity} relative to the analysis of Section~\ref{Sec:SRN}, also in this context we see that, at the cost assuming of $\beta = 1$, our technique yields the unique decoding when the distance parameter $d$ of Algorithm~\ref{algoSRNbad primes} is below unique decoding capacity. Indeed, when $d<\log(\sqrt{N/(2FG)})$, we must have that $B\Lambda< \beta$ since
		$\Lambda\le 2^d$.
		Under such circumstance we therefore
		have $\Z_{\Le, B\Lambda} = \Z_{\Le, 0} = \{0\}$. Thus, also in this case, estimating the
		failure probability of Algorithm~\ref{algoSRNbad primes} by studying $\P(\SEinf\ne\{0\})$
		yields the expected unique decoding result when $d<\log(\sqrt{N/(2FG)})$.
	\end{remark}
	
	\begin{proof}[Proof of Theorem~\ref{thm:main1bad primes}]
		Since Constraint~\ref{c_1bad primes} is verified for all received words in
		our random distribution, Lemma~\ref{Lemma1bad primes} and an adaptation of
		Lemma~\ref{lm:algofailurecondition} shows that 
		$\P_{\fail} \le \P(S_{\bR}\not\subseteq v_{\bC}\Z)$.
		
		We can prove that our choice of parameters satisfy Constraint~\ref{cst:
			Parambad primesVers} in the same fashion as the proof of
		Theorem~\ref{thm:main1hyb}. So we can apply Lemma~\ref{lm:SE_bad primes} to
		obtain $\P(S_{\bR}\not\subseteq v_{\bC}\Z) \le \P\left(\SEinf \ne
		\{0\}\right)$.
		
		As in Equation~\eqref{fstUpperBoundProof}, we have 
		$
		\P\left(\SEinf \ne \{0\}\right)
		\leq
		\sum_{\omega=1}^{\Le - 1}
		\P\left(
		\forall i, \  \omega \widetilde{E}_i' \in \Z_{\Le,B \Lambda}
		\right)
		$.
		Since $\widetilde{E}_i = \CRT_N\left(g/p_j^{\nu_{p_j}(g)}\right) E_i \ \bmod N$, and since
		$\CRT_N\left(g/p_j^{\nu_{p_j}(g)}\right)$ is an invertible element of $\Z/N\Z$, we have that for
		every $1\leq\omega\leq\Le - 1$,\linebreak
		$
		\P(
		\forall i, \  \omega \widetilde{E}_i' \in \Z_{\Le,B \Lambda}
		) 
		=
		\P(
		\forall i, \  \omega E_i' \in \Z_{\Le,B \Lambda}
		).
		$
		Now, since we know the distribution of $(E_i')_{1 \le i \le n}$ thanks to
		Lemma~\ref{lm:randombad primesuniform}, we use
		Lemma~\ref{lm:boundfailureprobabilityhybrid} with $\Li$ and $\hdu$ being
		replaced by $\Le$ and $\hdv$ to get
		\begin{equation*}
			\sum_{\omega=1}^{\Le - 1}
			\P\left(
			\forall i, \  \omega E_i' \in \Z_{\Le,B \Lambda}
			\right) 
			\leq
			\left(
			3B 2^{\hdv}\right)^\ell \Le
			\prod_{p\in\cP(\Le)}\left(\frac{1 - 1 / p^{\ell + \nu_p(\Le)}}{1 -1 / p^{\ell}}
			\right).
		\end{equation*}
		Since $\Le\leq 2^{\hde}$, we have
		$
		\left(3B 2^{\hdv}\right)^\ell \Le 
		\le 
		\left(3B\right)^\ell 2^{\ell \hdv} 2^{\hde}
		=
		2^{-(\ell+1)(\dme - \hde)}
		$, we have proven Theorem~\ref{thm:main1bad primes}.
	\end{proof}
	
	\subsection{Decoding failure probability with respect to the second error model}
	\label{subsubsect: proofbad primesRN-D2}
	
	We will denote $\P_{\HYBpd}$ (resp. $\P_{\HYBp}$) the probability function under
	the second (resp. first) error model specified by a given factorization $\Lambda_{m,e}\Lv$ of
	the error locator, and by a partial received word $(\bfR_j)_{j \in \cP(\Lv)}$.
	
	\begin{proof}[Proof of Theorem~\ref{thm:main2bad primes}]
		As done in the proof of Theorem~\ref{thm:main2hyb}, letting $\cF$ be the event of decoding failure.
		We will denote $\P_{\HYBpd}$ (resp. $\P_{\HYBp}$) the probability function under
		the second (resp. first) error model specified by a given factorization $\Lambda_{m,e}\Lv$ of
		the error locator, and by a partial received word $(\bfR_j)_{j \in \cP(\Lv)}$.
		Using the law of total probability, we have that
		$\P_{\HYBpd}( \cF )$ can be decomposed as the sum
		\begin{equation*}
			\sum_{\Le|\Lme}
			\P_{\HYBpd}(\cF \ |\ \Lambda_{\bE} = \Le\Lv)
			\ 
			\P_{\HYBpd}( \Lambda_{\bE} = \Le\Lv),
		\end{equation*}
		where 
		\begin{equation*}
			\P_{\HYBpd}(\cF \ |\ \Lambda_{\bE} = \Le\Lv)
			=
			\P_{\HYBp}\left(\cF\right)
		\end{equation*}
		is upper bounded by
		\begin{equation*}
			\P_{\HYBp}(\cF)
			\leq
			\left(
			3B 2^{\hdv}\right)^\ell \Le
			\prod_{p\in\cP(\Le)}\left(\frac{1 - 1 / p^{\ell + \nu_p(\Le)}}{1 -1 / p^{\ell}}\right)
		\end{equation*}
		Whereas
		\begin{align*}
			\P_{\HYBpd}(\cF \ |\ \Lambda_{\bE} = \Le\Lv)
			&=
			\frac{\prod_{p\in\cP(\Le)}(p^{\ell} -1) p^{\ell(\nu_p(\Le) - 1)}}{\prod_{p\in\cP(\Lme)}p^{\ell \nu_p(\Lme)}}\\
			&=
			\left(\frac{\Le}{\Lme}\right)^{\ell}
			\prod_{p\in\cP(\Le)}\left(1 - \frac{1}{p^{\ell}}\right).
		\end{align*}
		Plugging the above in the decomposition of $\P_{\HYBpd}( \cF )$ and following the
		proof of Theorem~\ref{thm:main2hyb} with $\Lmi,\Li,\Lu$ being replaced by
		$\Lme,\Le,\Lv$ respectively, we conclude the proof of Theorem~\ref{thm:main2bad primes}.
	\end{proof}
	
	\bibliographystyle{alpha}
	\bibliography{Bib.bib}
	
\end{document}